\newcommand\Algphase[1]{%
\vspace*{.5\baselineskip}%
\nonl \hspace*{-0em}\textbf{\large #1}%
\vspace*{.01\baselineskip}%
}
\colorlet{shadecolor}{orange!15}
\theoremstyle{definition}
\DeclareMathAlphabet{\mathfrak}{U}{jkpmia}{m}{it}
\SetMathAlphabet{\mathfrak}{bold}{U}{jkpmia}{bx}{it}
\def\Apr{\textsc{Approx}}
\def\RelV{\textsc{FindVarValue}}
\def\bI{\boldsymbol{I}}
\def\MapBack{\textsc{MapBack}}
\def\UniformJunta{\textsc{UniformJunta}}
\def\sfg{\mathsf{G}}
\let\oldnl\nl
\newcommand{\nonl}{\renewcommand{\nl}{\let\nl\oldnl}}
\begin{document}
\setcounter{section}{0}

\title{Testing Juntas and Junta Subclasses with Relative Error}

\newcommand{\red}[1]{{\color{red} {#1}}}
\newcommand{\blue}[1]{{\color{blue} {#1}}}
\newcommand{\gray}[1]{{\color{gray} {#1}}}

\newcommand{\xnote}[1]{\footnote{{\bf \color{purple}Xi}: {#1}}}
\newcommand{\rnote}[1]{\footnote{{\bf \color{red}Rocco}: {#1}}}
\newcommand{\toni}[1]{\footnote{{\bf\color{blue} toni}: {#1}}} 
\newcommand{\wnote}[1]{\footnote{{\bf \color{teal}Will}: {#1}}}

\newcommand{\bxO}{\mathbf{x^1}}
\newcommand{\bxT}{\mathbf{x^2}}
\newcommand{\bxTh}{\mathbf{x^3}}

\newcommand{\byO}{\mathbf{y^1}}
\newcommand{\byT}{\mathbf{y^2}}
\newcommand{\byTh}{\mathbf{y^3}}

\newcommand{\bGamma}{\mathbf{\Gamma}}

\newcommand{\uth}{\bigskip \bigskip{\huge{\bf UP TO HERE} \bigskip \bigskip}}
\newcommand{\SAMP}{{\bf Samp}}
\newcommand{\MQ}{\mathrm{MQ}}

\newcommand{\reldist}{\mathrm{rel}\text{-}\mathrm{dist}}
\newcommand{\default}{\ensuremath{\mathsf{default}}}

\author{
Xi Chen\thanks{Columbia University. Email: \url{xc2198@columbia.edu}.} \and 
William Pires \thanks{Columbia University. Email: \url{wp2294@columbia.edu}.} \and 
Toniann Pitassi\thanks{Columbia University. Email: \url{tonipitassi@gmail.com}.} \and 
Rocco A. Servedio\thanks{Columbia University. Email: \url{rocco@cs.columbia.edu}.} 
}

\newcommand{\CApr}{\mathcal{C}(k)\textsc{-Approximators}}
\newcommand{\fcj}{\mathfrak{j}}
\newcommand{\fA}{\mathfrak{A}}
\newcommand{\fJ}{\mathfrak{J}}
\newcommand{\fF}{\mathfrak{F}}
\newcommand{\fG}{\mathfrak{G}}

\thispagestyle{empty}

\maketitle

\begin{abstract}%
This papers considers the junta testing problem in a recently introduced ``relative error'' variant of the standard Boolean function property testing model.
In relative-error testing we measure the distance from $f$ to $g$, where $f,g: \zo^n \to \zo$, by the ratio of $|f^{-1}(1) \hspace{0.06cm} \triangle \hspace{0.06cm} g^{-1}(1)|$ (the number of inputs on which $f$ and $g$ disagree) to $|f^{-1}(1)|$ (the number of satisfying assignments of $f$), and we give the testing algorithm both black-box access to $f$ and also  access to independent uniform samples from $f^{-1}(1)$.  

\cite{CDHLNSY2024} observed that the class of $k$-juntas is $\poly(2^k,1/\eps)$-query testable in the relative-error model, and asked whether $\poly(k,1/\eps)$ queries is achievable.  We answer this question affirmatively by giving a $\tilde{O}(k/\eps)$-query algorithm, matching the optimal complexity achieved in the less challenging standard model.  Moreover, as our main result, we show that any \emph{subclass} of $k$-juntas that is closed under permuting variables is relative-error testable with a similar complexity.  This gives highly efficient relative-error testing algorithms for a number of well-studied function classes, including size-$k$ decision trees, size-$k$ branching programs, and size-$k$ Boolean formulas.
\end{abstract}

\def\frakg{{\mathfrak g}}

\section{Introduction} \label{sec:intro}

A Boolean function $f: \zo^n \to \zo$ is a \emph{$k$-junta} if it depends on at most $k$ of its $n$ input variables.
The class of $k$-juntas arises across many disparate contexts in theoretical computer science: for example, ``junta theorems'' are of fundamental importance in the analysis of Boolean functions \cite{NisanSzegedy:92,Friedgut:98,Bourgain,KindlerSafra04,Filmus22}, ``junta learning'' is a paradigmatic problem capturing the difficulty of identifying relevant variables in learning theory \cite{Blum:94,blu03,MOS:03,Valiant15}, and, closest to our concerns, juntas play a central role in the field of \emph{property testing}.

Starting with the influential work of Fischer et al.~\cite{FKRSS03}, testing whether an unknown and arbitrary function $f: \zo^n \to \zo$ is a $k$-junta has emerged as one of the touchstone problems in Boolean function property testing. In the standard testing model introduced by Blum, Luby and Rubinfeld \cite{BLR93,RS96}, junta testing has been intensively studied from many perspectives, with upper and lower bounds for both adaptive and non-adaptive junta testing algorithms given in a long sequence of papers
\cite{ChocklerGutfreund:04,FKRSS03,Blais08,Blaisstoc09,BGMW13,STW15,Saglam18,CSTWX18jacm}.
Junta testing has also been studied in several more recently introduced and challenging property testing frameworks that have been proposed over the years. These include the model of \emph{distribution-free} property testing \cite{HalevyKushilevitz:07,LCSSX19,Bshouty19,belovs2019quantum}, the \emph{tolerant testing} model \cite{DMN19,ITW21,PRW22,LW19,ChenPatel23,CDLNS24,NadimpalliPatel24}, \emph{quantum} property testing models \cite{AticiServedio:07qip,ABRW16,belovs2019quantum,CNY23} and the model of \emph{active property testing} \cite{BBBY12,Alon12}.

The contribution of this paper is to resolve (a generalization of) the junta testing problem in the recently introduced \emph{relative-error} model of Boolean function property testing \cite{CDHLNSY2024,CPPS25,CDHNSY25}.  Our main result is an efficient relative-error testing algorithm for $k$-juntas, and more generally for any subclass of $k$-juntas that is closed under permutations of the variables, resolving an open question posed in \cite{CDHLNSY2024}.  Our algorithm for testing subclasses of juntas yields efficient relative-error testing algorithms for a number of natural function classes, including: size-$s$ decision trees, size-$s$ branching programs, and size-$s$ Boolean formulas, whose standard-model and distribution-free testability was studied in \cite{DLM+:07,CGM11,Bshouty20}.

\subsection{Prior results on relative-error testing}

The relative-error property testing model for Boolean functions was introduced in \cite{CDHLNSY2024}  to provide an appropriate framework for testing \emph{sparse} Boolean functions (ones with few satisfying assignments).  Recall in the standard Boolean function property testing model,\footnote{See \Cref{sec:prelims} for detailed definitions of both this model and the relative-error testing model. 
}
the distance between functions $f,g: \zo^n \to \zo$ is the normalized Hamming distance ${  {|f^{-1}(1) \ \triangle \ g^{-1}(1)| }\big/{2^n}}$, and hence any $f$ that has fewer than $\eps 2^n$ satisfying assignments will trivially be $\eps$-close to the constant-0 function. Motivated by graph property testing models introduced in \cite{GoldreichRon02,ParnasRon02} which are suitable for \emph{sparse} graphs, \cite{CDHLNSY2024} defined the \emph{relative-error} Boolean function property testing model to measure the distance between the function $f: \zo^n \to \zo$ that is being tested and any other function $g$ to be
$
\reldist(f,g) := { {|f^{-1}(1) \hspace{0.05cm} \triangle \hspace{0.05cm} g^{-1}(1)|}\big/{|f^{-1}(1)|}}.
$ With this definition, $g$ is close to $f$ only if the symmetric difference is small ``at the scale of $f$,'' and it is reasonable to consider testing sparse Boolean functions. The model also allows the testing algorithm to obtain i.i.d.~uniform elements of $f^{-1}(1)$ by calling a ``random sample'' oracle (since if only black-box $\MQ(f)$ queries were allowed, as in the standard model, then for sparse $f$ a very large number of queries would be required to obtain any information about $f$ at all).

\cite{CDHLNSY2024} established some general results about relative-error testing vis-a-vis standard-model property testing.  They showed that any class of functions ${\cal C}$ can be tested in the standard model with essentially no more queries than are required for relative-error testing (so relative-error testing is always at least as hard as standard testing), and gave a simple example of an (artificial) class ${\cal C}'$ for which relative-error testing algorithms require $2^{\Omega(n)}$ queries but standard-model testers require only constantly many queries (so relative-error testing can be strictly harder than standard testing).  However, the chief focus of \cite{CDHLNSY2024} was on relative-error \emph{monotonicity testing}.  The main results of \cite{CDHLNSY2024} showed that the complexity of relative-error monotonicity testing is within a polynomial factor of the complexity of standard-model monotonicity testing (see the first row of \Cref{table:results}).

In subsequent work, \cite{CDHNSY25} studied relative-error testing of \emph{halfspaces}, and gave an $\tilde{\Omega}(\log n)$ lower bound on relative-error halfspace testing algorithms. In contrast, there are  standard-model halfspace testing algorithms with query complexity $\poly(1/\eps)$ (independent of $n$) \cite{MORS:10}, thus showing that the natural class of halfspaces witnesses an arbitrarily large gap between the query complexity in the relative-error model and the standard model. On the other hand, \cite{CPPS25} showed that for two other natural classes of Boolean functions, namely \emph{conjunctions} and \emph{decision lists}, there are $\tilde{O}(1/\eps)$-query relative-error testing algorithms, matching the performance of the best standard-model testers \cite{PRS02,Bshouty20}.
See \Cref{table:results} for a summary of how relative-error testing results contrast with standard-model testing results for various well-studied function classes.

\subsection{Relative-error junta testing and our results}

\cite{CDHLNSY2024} posed a number of questions for future work on relative-error property testing, the first of which concerns junta testing.
We recall that in the standard model,  \cite{Blaisstoc09} gave an $\eps$-testing algorithm for $k$-juntas that makes $O(k \log k +  k/\eps)$ queries, and \cite{Saglam18} subsequently established an $\Omega(k \log k)$ lower bound, showing that the algorithm of \cite{Blaisstoc09} is optimal up to constant factors for constant $\eps$.
\cite{CDHLNSY2024} made the simple observation that if every nonzero function in a class ${\cal C}$ has at least $p2^n$ satisfying assignments, then relative-error testing to accuracy $\eps$ reduces to standard-model testing to accuracy $p\eps$.  Since every nonzero $k$-junta has $p \geq 2^{-k}$, combining this with the standard-model junta testing result of \cite{Blaisstoc09} implies that $k$-juntas can be relative-error tested with a query complexity of $O(2^k/\eps)$.  As their first open question, \cite{CDHLNSY2024} asked whether there is a $\poly(k,1/\eps)$-query relative-error tester for $k$-juntas.

\begin{table}  
\centering
\renewcommand{\arraystretch}{1.34}
  \begin{tabular}{ @{}p{0.25\textwidth}p{0.31\textwidth}p{0.30\textwidth}@{} }
 \toprule
Class of functions & Standard-model & Relative-error \\[-0.5em]
& testing complexity  & testing complexity\\
\midrule 
Monotone  functions   & $\tilde{O}(\sqrt{n}/\eps^2)$ \cite{KMS18},& $O(n/\eps)$ \cite{CDHLNSY2024}\\[-0.25em]
&   $\tilde{\Omega}(n^{1/3})$ \cite{CWX17stoc} & $\tilde{\Omega}(n^{2/3})$ \cite{CDHLNSY2024}\\
 \hline
Halfspaces  & $\poly(1/\eps)$ \cite{MORS10}& $\tilde{\Omega}(\log n)$ \cite{CDHNSY25} \\
 \hline
Conjunctions & $\Theta(1/\eps)$ \cite{PRS02} & $\Theta(1/\eps)$ \cite{CPPS25}   \\
 \hline
Decision lists  & $\tilde{\Theta}(1/\eps)$ \cite{Bshouty20} & $\tilde{\Theta}(1/\eps)$ \cite{CPPS25} \\
 \hline
$k$-juntas  & $\tilde{O}(k/\eps)$ \cite{Blaisstoc09}, & $\tilde{O}(k/\eps)$ {\textbf{[This paper]}} \\[-2.2em]
 &  \flushleft $\tilde{\Omega}(k)$ \cite{ChocklerGutfreund:04,Saglam18} &  ~ \\
\hline
Subclass ${\cal C}(k)$ of $k$-juntas & $\tilde{O}\pbra{{\frac {k + \log|{\cal C}(k)^*|}\eps}}$ \cite{Bshouty20} & 
$\tilde{O}\pbra{{\frac {k\log|{\cal C}(k)^*|}\eps}}$  {\textbf{[This paper]}} \\
\hline
size-$k$ decision trees & $\tilde{O}(k/\eps)$ \cite{Bshouty20}  & $\tilde{O}(k^2/\eps)$ {\textbf{[This paper]}}\\
\hline
size-$k$ branching programs & $\tilde{O}(k/\eps)$ \cite{Bshouty20}  & $\tilde{O}(k^2/\eps)$ {\textbf{[This paper]}}\\
\hline
size-$k$ Boolean formulas & $\tilde{O}(k/\eps)$ \cite{Bshouty20}  & $\tilde{O}(k^2/\eps)$ {\textbf{[This paper]}}\\
\hline
\bottomrule
\end{tabular}
\caption{Known results on the number of oracle calls needed for standard-model $\eps$-testing and relative-error $\eps$-testing of various classes of $n$-variable Boolean functions.  For relative-error testing, the entry indicates the total number of oracle calls to either the random example oracle or the $\MQ$ oracle. As explained in \Cref{sec:subclass}, for our results ${\cal C}(k)$ is an arbitrary subclass of $k$-juntas that is closed under permuting input variables, and ${\cal C}(k)^* \subset  {\cal C}(k)$ is the subset of ${\cal C}(k)$ consisting of the functions whose set of relevant variables is contained in $\{1,\dots,k\}$. 
}
\label{table:results}
\end{table}

Our first result answers this question in the affirmative: 

\begin{theorem} [Relative-error junta testing] \label{thm:junta-main}
For $0 < \eps < 1/2$, there is an algorithm for relative-error $\eps$-testing of $k$-juntas with $O(k/\eps\log(k/\epsilon))$ queries and samples.
\end{theorem}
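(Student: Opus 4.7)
The plan is to adapt the block-partition junta tester of Blais~\cite{Blaisstoc09} to the relative-error model, replacing its uniform-measure influence notion with an analog driven by positive samples. For a set $S \subseteq [n]$ define the \emph{relative influence} of $S$ with respect to $f$ by
\[
\rho_f(S) := \Pr_{x \sim f^{-1}(1),\ z \sim \{0,1\}^{S}}\bigl[f(y) = 0\bigr],
\]
where $y \in \{0,1\}^n$ is obtained from $x$ by overwriting its $S$-coordinates with $z$. One positive sample paired with one membership query implements a single Bernoulli trial for $\rho_f(S)$. I will need two key facts: (i) $\rho_f(S) = 0$ whenever the relevant variables of $f$ are disjoint from $S$, and (ii) a structural converse stating that if $\rho_f(S) \le \eta$, then $f$ is relative-error $O(\eta)$-close to a function that depends only on variables outside $S$. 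For (ii) I would take the approximator $g$ to be the optimal per-fiber approximation of $f$ that ignores the $S$-coordinates, and bound each half of $|f^{-1}(1)\triangle g^{-1}(1)|$ by $O(\eta)\cdot|f^{-1}(1)|$ directly from the definition of $\rho_f$.

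The algorithm uniformly partitions $[n]$ into $r = \Theta(k^2)$ blocks $B_1,\dots,B_r$ and follows Blais's iterative identify-and-binary-search scheme. It maintains a set $I \subseteq [r]$ of identified blocks and, in each round, first tests whether $\rho_f\bigl(\bigcup_{i \notin I} B_i\bigr) \ge \eps/4$ using $O\bigl(\log(k/\delta)/\eps\bigr)$ flip-and-query trials; if no $0$-response is seen it halts. Otherwise it binary searches within $[r]\setminus I$, using $O(\log r)$ further subset-tests, to locate one new block to add to $I$. It accepts iff $|I| \le k$ upon termination. The total query count is $O(k)$ outer rounds, each costing $O\bigl(\log k \cdot \log(k/\delta)/\eps\bigr)$ queries, giving $\tilde O(k/\eps)$. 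Completeness follows from (i) combined with a birthday argument: with $r = \Theta(k^2)$ the at most $k$ relevant variables of a $k$-junta lie in distinct blocks with high probability over the partition, so $\rho_f\bigl(\bigcup_{i \notin I^*} B_i\bigr) = 0$ for the corresponding $I^*$ of size $\le k$.

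The main obstacle is soundness: I must show that if $f$ is relative-error $\eps$-far from every $k$-junta, then with high probability over the random partition, no $I$ with $|I|\le k$ satisfies $\rho_f(\bar I)\le \eps/4$. This is delicate because $\rho_f$ is \emph{not} subadditive over disjoint sets in the clean way standard influence is---rerandomizing $S$ and conditioning on $f=1$ biases the resulting distribution away from uniform on $f^{-1}(1)$, so the naive inductive step does not close. I plan to argue by contrapositive using (ii) together with the randomness of the partition: if $\rho_f(\bar I)\le \eps/4$, then $f$ is $O(\eps)$-close to a function $g$ supported on $\bigcup_{i\in I}B_i$, a set of up to $n/k$ variables rather than $k$, and then a birthday-plus-averaging argument over the unknown optimal $k$-variable approximator to $f$ must be used to extract a genuine $k$-junta close to $f$, contradicting $\eps$-farness. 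Pinning down this extraction---and obtaining the $\tilde O(k/\eps)$ bound rather than a naive $\tilde O(k^2/\eps)$ one---is where I expect most of the technical work to concentrate.
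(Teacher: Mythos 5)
There is a genuine gap, and it sits exactly where you say you expect the technical work to concentrate: your algorithm accepts as soon as it has found at most $k$ relevant blocks and the residual quantity $\rho_f\bigl(\bigcup_{i\notin I}B_i\bigr)$ appears small, but this acceptance condition is not sound in the relative-error model. Your fact (ii) only yields that $f$ is close to a function depending on $\bigcup_{i\in I}B_i$, a set of $\Theta(n/k)$ variables, and the step from there to ``close to a $k$-junta'' is precisely the Fourier/Efron--Stein subadditivity lemma that powers Blais's standard-model analysis and that, as you correctly observe, has no clean analogue for $\rho_f$. Your proposed repair (``a birthday-plus-averaging argument over the unknown optimal $k$-variable approximator'') is a restatement of the missing lemma rather than a proof of it; nothing in the sketch extracts a $k$-variable approximator from an $\Theta(n/k)$-variable one.

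The paper closes this gap by changing the algorithm, not by proving the subadditivity statement. After the block-finding phase it adds two further phases: first it tests that each restricted function $f^\ell$ (the function on block $B_\ell$ obtained from the binary-search witness) is close to a \emph{literal} under the uniform distribution, which implicitly pins down a single distinguished variable $\tau(\ell)$ in each relevant block and hence a genuine candidate set $J=\{\tau(\ell)\}$ of size at most $k$; then a final phase re-randomizes the coordinates of $X\setminus J$ while keeping those of $J$ fixed (implemented without ever knowing $\tau(\ell)$ explicitly, by randomly splitting each block and using $O(\log(k/\eps))$ queries to $f^\ell$ to decide which half contains $\tau(\ell)$) and checks whether this flips the value of $f$. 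The key soundness lemma (Lemma~\ref{lem:keyjunta}) combines the Phase-B guarantee $\rho_f(\overline{X})\le\eps/20$ with the relative-error analogue of the LCSSX Lemma~3.2 (your fact (ii), essentially) to show that when $f$ is $\eps$-far from every $k$-junta this final test rejects with probability at least $\eps/5$ per round. Without an analogue of these two phases, your tester would accept, for instance, functions that are far from every $k$-junta but whose relevant variables are concentrated in at most $k$ blocks with small per-variable residual influence, so the proposal as written does not establish the theorem.
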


Our main result goes beyond juntas and deals with \emph{subclasses} of juntas.\footnote{We remind the reader that the testability/non-testability of a class ${\cal C}_1$ does not imply anything about the testability/non-testability of a class ${\cal C}_2 \subseteq {\cal C}_1$. This is true in both the standard model and in the relative-error testing model:  for example, the class of all Boolean functions is trivially relative-error testable with zero queries, but the subclass of halfspaces is ``hard'' to test, requiring $\smash{\tilde{\Omega}(\log n)}$ queries as shown by \cite{CDHNSY25}. In the other direction, conjunctions are a subclass of halfspaces, but while halfspaces are ``hard'' to test conjunctions are ``easy,'' requiring only $O(1/\eps)$ queries as shown by \cite{CPPS25}.)}  
To state the result cleanly, the following terminology and notation is helpful:  We say that a class ${\cal C}(k)$ of Boolean functions from $\zo^n$ to $\zo$ is a \emph{subclass of $k$-juntas} if (i) every $f \in {\cal C}(k)$ is a $k$-junta, and (ii) ${\cal C}(k)$ is closed under permuting variables.
We further write ${\cal C}(k)^*$ to denote the subset 
\[
{\cal C}(k)^* := \{f \in {\cal C}(k) : 
\text{~$f$ does not depend on variables $x_{k+1},\dots,x_n$}\}
\]
so we may think of ${\cal C}(k)^*$ as a class of functions over variables $x_1,\dots,x_k$; note that consequently the cardinality $|{\cal C}(k)^*|$ is completely independent of $n$ and depends only on $k$.

By combining techniques from junta testing with ideas from computational learning theory, \cite{DLM+:07} showed that any subclass ${\cal C}(k)$ of $k$-juntas can be tested using only $\poly(\log(|{\cal C}(k)^*|),1/\eps)$ queries in the standard model.  The precise polynomial bounds achieved by \cite{DLM+:07} were subsequently improved by \cite{CGM11} and by  \cite{Bshouty20}.  Moreover, \cite{Bshouty20} extended these result to the \emph{distribution-free} testing model; we will describe this model, and the relevance of Bshouty's approach to our work, in \Cref{sec:techniques}.

Our second and main result shows that any %
$\calC(k)$ can be efficiently tested in the relative-error model:

\begin{theorem} [Relative-error testing of junta subclasses] \label{thm:junta-subclass-main}
Let ${\cal C}(k)$ be any subclass of $k$-juntas.  For $0 < \eps < 1/2$, there is an algorithm for its relative-error $\eps$-testing %
with $\tilde{O}\pbra{{ {k\log|{\cal C}(k)^*|}/\eps}}$  queries and samples.
\end{theorem}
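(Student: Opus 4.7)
The plan is to combine the relative-error junta tester of \Cref{thm:junta-main} with an Occam-razor-style ``sample and filter'' argument in the spirit of Bshouty's distribution-free subclass tester.

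\textbf{Phase 1 (junta structure).} Apply \Cref{thm:junta-main} with accuracy $\Theta(\epsilon/\log|\mathcal{C}(k)^*|)$ to verify that $f$ is relative-error $\Theta(\epsilon/\log|\mathcal{C}(k)^*|)$-close to some $k$-junta $J^*$, and extend it with standard variable-isolation queries to identify the relevant variable set $R \subseteq [n]$ of $J^*$ (with $|R| \leq k$); reject if the junta test fails. This step costs $\tilde{O}(k\log|\mathcal{C}(k)^*|/\epsilon)$ queries, and will turn out to be the source of the multiplicative $k$ factor in the final bound. Since $\mathcal{C}(k)$ is closed under permutations of $[n]$ (equivalently, $\mathcal{C}(k)^*$ is closed under permutations of $[k]$), testing $f \in \mathcal{C}(k)$ now reduces to deciding whether the implicit $k$-variable function $J \colon \{0,1\}^R \to \{0,1\}$ that $f$ approximates lies in $\mathcal{C}(k)^*$, under any fixed bijection $R \leftrightarrow [k]$. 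We simulate samples from $J^{-1}(1)$ by projecting samples from $f^{-1}(1)$ onto $R$, and $\MQ_J$ by querying $f$ at an arbitrary extension of the given $\alpha \in \{0,1\}^R$.

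\textbf{Phase 2 (filter and verify).} Draw $m = \Theta(\log|\mathcal{C}(k)^*|/\epsilon)$ samples $\alpha^{(1)},\dots,\alpha^{(m)}$ from $J^{-1}(1)$ and set $\mathcal{G} = \{g \in \mathcal{C}(k)^* : g(\alpha^{(i)}) = 1 \text{ for all } i\}$. A union bound over $\mathcal{C}(k)^*$ shows that with high probability every surviving $g \in \mathcal{G}$ satisfies $|J^{-1}(1) \setminus g^{-1}(1)| \leq (\epsilon/4)|J^{-1}(1)|$. If $f \in \mathcal{C}(k)$, then $J \in \mathcal{G}$ automatically; if $f$ is $\epsilon$-far from $\mathcal{C}(k)$, then any consistent $g$ must satisfy $|g^{-1}(1) \setminus J^{-1}(1)| > (\epsilon/2)|J^{-1}(1)|$, which forces $|g^{-1}(1)| > (1 + \Omega(\epsilon))|J^{-1}(1)|$. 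The algorithm picks $g^\star \in \mathcal{G}$ of minimum cardinality $|g^{-1}(1)|$ (a purely computational choice over $\mathcal{C}(k)^*$) and performs a relative-error equivalence test between $J$ and $g^\star$. Using the Phase-2 samples to estimate $\hat p \approx |J^{-1}(1) \cap (g^\star)^{-1}(1)|/|J^{-1}(1)|$ and drawing uniform samples from $(g^\star)^{-1}(1)$ (possible since $g^\star$ is fully known) to estimate $\hat q \approx |J^{-1}(1) \cap (g^\star)^{-1}(1)|/|(g^\star)^{-1}(1)|$ via $\MQ_J$, we recover $\reldist(J,g^\star) \approx 1 + \hat p/\hat q - 2\hat p$ and accept iff this estimate is at most $\epsilon/2$. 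Completeness follows because $J \in \mathcal{G}$ gives $|g^\star| \leq |J|$, and so the filter bound yields $\reldist(J,g^\star) = O(\epsilon)$; soundness follows because the blow-up $|g^{-1}(1)| > (1+\Omega(\epsilon))|J^{-1}(1)|$ is detected by $\hat p/\hat q$.

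\textbf{Main obstacle.} The hard part is careful error propagation. Because $f$ is only guaranteed to be $\Theta(\epsilon/\log|\mathcal{C}(k)^*|)$-close to the true junta $J$, samples from $f^{-1}(1)$ projected to $R$ are only approximately uniform on $J^{-1}(1)$, and $\MQ_J$ simulated via $f$ at a random extension may disagree with $J$ on a small relative-error fraction of inputs. A careful soundness analysis must show that this perturbation does not corrupt either the Occam filter (which nominally demands exact consistency with respect to $J$) or the subsequent equivalence test; this is precisely why the junta-testing accuracy in Phase 1 has to be shrunk by a $\log|\mathcal{C}(k)^*|$ factor, which in turn produces the $k\log|\mathcal{C}(k)^*|/\epsilon$ dependence in the overall query complexity. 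A secondary technicality is that the variable-isolation step used to recover $R$ inside Phase 1 must be robust to $f$ being only approximately a junta --- distinguishing genuinely relevant variables from those that appear so because of the small symmetric difference between $f$ and $J^*$ is the most delicate ingredient to implement cleanly within the $\tilde{O}(k\log|\mathcal{C}(k)^*|/\epsilon)$ budget.
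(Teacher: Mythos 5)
Your Phase 2 captures the paper's central new idea---filter the candidate set against positive samples of the hidden junta, keep the surviving candidate with the fewest satisfying assignments, and then verify it with a two-sided check that samples from \emph{that candidate's} satisfying assignments---and your accounting of why the minimum-cardinality choice makes both completeness and soundness work is essentially the argument of the paper's Phases 3 and 4 (\Cref{cor:rd_fA_f_small} and the surrounding discussion). The gap is in your Phase 1, and I do not see how to repair it within the stated budget: you propose to explicitly ``identify the relevant variable set $R \subseteq [n]$.'' Pinning down the actual index of even one relevant variable requires binary search inside a block containing $\Theta(n/k^2)$ variables, i.e.\ $\Omega(\log n)$ queries per variable, so this step costs $\Omega(k\log n)$ and destroys the $n$-independence of the claimed bound; this is not a delicate implementation detail but an information-theoretic obstruction. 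The paper never learns $R$: it only certifies that each restriction $f^\ell$ on a relevant block is $(1/30)$-close to a literal under the uniform distribution, and then uses the subroutines $\RelV$ and $\MapBack$ (each costing only $O(\log k)$ queries) to read off the hidden relevant variable's value on a given assignment and to plant a prescribed value of that variable into a uniformly random assignment. These two primitives are precisely what let one ``project a sample onto $R$'' and ``query the implicit junta at a point of $\{0,1\}^R$'' without ever knowing $R$, which is what your Phase 2 requires.

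A second, smaller problem is your simulation of $\MQ_J(\alpha)$ by querying $f$ at an \emph{arbitrary} extension of $\alpha$. Relative-error closeness of $f$ to a junta gives no pointwise guarantee at a fixed extension; the paper instead defines the implicit junta as the majority of $f$ over a uniformly random extension and draws a fresh random extension for each verification query, and the soundness analysis (\Cref{lem:apple}) works with the probability over that random extension rather than with any single deterministic simulated query. Your error-propagation paragraph correctly flags both difficulties, but treats them as technicalities to be absorbed by shrinking the Phase-1 accuracy by a $\log|\mathcal{C}(k)^*|$ factor; the first difficulty is not a matter of accuracy at all but of avoiding any explicit dependence on $n$, and resolving it forces the implicit-access machinery that your proposal omits.
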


(We remark that since there are $2^{2^k}$ juntas over variables $x_1,\dots,x_k$, \Cref{thm:junta-main} does not follow in a black-box way from \Cref{thm:junta-subclass-main}; the proof of \Cref{thm:junta-main} requires a more refined analysis.)

Since any size-$k$ decision tree is a $k$-junta, and likewise for size-$k$ Boolean formulas and size-$k$ branching programs, as a  consequence of \Cref{thm:junta-subclass-main} via standard counting arguments given in \cite{DLM+:07} we get the following relative-error analogues of standard-model testing results first by~\cite{DLM+:07} (and subsequently with sharper parameters by \cite{CGM11,Bshouty20}):

\begin{corollary} \label[corollary]{cor:subclass}
Let ${\cal C}(k)$ be the class of size-$k$ decision trees, or the class of size-$k$ Boolean formulas, or the class of size-$k$ branching programs.  For $0 < \eps <1/2$, there is an algorithm for relative-error $\eps$-testing of ${\cal C}(k)$ that makes $\tilde{O}\pbra{k^2/\eps}$ queries and samples.
\end{corollary}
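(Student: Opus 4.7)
The plan is to invoke \Cref{thm:junta-subclass-main} as a black box and reduce the corollary to a routine combinatorial bound on $|{\cal C}(k)^*|$ for each of the three classes. Since \Cref{thm:junta-subclass-main} gives a relative-error $\eps$-tester using $\tilde{O}(k \log|{\cal C}(k)^*|/\eps)$ queries and samples, it will suffice to establish $\log|{\cal C}(k)^*|=\tilde{O}(k)$ in each case; substituting this into the theorem then yields $\tilde{O}(k\cdot k/\eps)=\tilde{O}(k^2/\eps)$, as claimed.

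First I would verify the two hypotheses of \Cref{thm:junta-subclass-main}. Every size-$k$ decision tree has at most $k$ internal nodes and therefore depends on at most $k$ input variables, and the same is true for size-$k$ Boolean formulas (at most $k$ leaves, each a literal) and size-$k$ branching programs (at most $k$ nodes, each labeled by a variable); hence each of these classes is a subclass of $k$-juntas. Closure under permutations of $\{x_1,\dots,x_n\}$ is immediate, since relabeling variables in a decision tree, formula, or branching program produces another object of the same type and size.

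Next I would bound $|{\cal C}(k)^*|$ for each class using the standard encoding arguments given in \cite{DLM+:07}. For decision trees, one specifies a member of ${\cal C}(k)^*$ by choosing (i) an unlabeled rooted binary tree shape on at most $k$ nodes (at most $C_k\le 4^k$ shapes), (ii) a variable label from $\{x_1,\dots,x_k\}$ at each of at most $k$ internal nodes (at most $k^k$ choices), and (iii) a $\{0,1\}$ label at each of at most $k+1$ leaves (at most $2^{k+1}$ choices), giving $|{\cal C}(k)^*|\le 2^{O(k\log k)}$. The analogous encodings for size-$k$ formulas (binary tree with up to $k$ leaves, each labeled by a literal over $x_1,\dots,x_k$) and for size-$k$ branching programs (a DAG with $k$ nodes, each labeled by a variable from $x_1,\dots,x_k$ and having two labeled outgoing edges to other nodes or sink labels in $\{0,1\}$) likewise yield $|{\cal C}(k)^*|\le 2^{O(k\log k)}$. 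In all three cases $\log|{\cal C}(k)^*|=O(k\log k)=\tilde{O}(k)$, and plugging this into \Cref{thm:junta-subclass-main} completes the proof.

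There is not really a substantive obstacle: the hard work is already done inside \Cref{thm:junta-subclass-main}, and the counting step is explicit in \cite{DLM+:07}. The only points that require care are checking the permutation-closure hypothesis (immediate from the structural definitions) and ensuring that the $\log k$ factor from $\log|{\cal C}(k)^*|$ is absorbed by the $\tilde{O}$ notation, which it is since $\tilde{O}$ suppresses $\polylog(k,1/\eps)$ factors.
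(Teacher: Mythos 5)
Your proposal is correct and follows essentially the same route as the paper: both invoke \Cref{thm:junta-subclass-main} as a black box and use the $k^{O(k)}$ counting bounds from \cite{DLM+:07} to get $\log|{\cal C}(k)^*| = O(k\log k) = \tilde{O}(k)$, hence $\tilde{O}(k^2/\eps)$ queries. Your explicit verification of the junta and permutation-closure hypotheses and the sketch of the encoding arguments are just slightly more detailed versions of what the paper delegates to \cite{DLM+:07}.
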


\subsection{Techniques} \label{sec:techniques}

In this subsection we focus our discussion on the more general and challenging problem of testing subclasses of juntas (\Cref{thm:junta-subclass-main}). 
(An overview of \Cref{thm:junta-main} is given in  \Cref{sec:juntas}.)
Our approach builds on techniques that were used to give efficient algorithms for testing juntas and subclasses of juntas in the \emph{distribution-free} testing model.
We recall that in the distribution-free testing model, as defined by \cite{GGR98} and subsequently studied by many authors \cite{HalevyKushilevitz:07,HalevyKushilevitz:08,GlasnerServedio:09toc,DolevRon11,CX16,LCSSX19,Bshouty19,belovs2019quantum,CP22,CFP24},
there is an unknown and arbitrary ``background distribution'' ${\cal D}$ over $\zo^n$, and the distance between two functions $f$ and $g$ is measured by $\Pr_{\bx \sim {\cal D}}[f(\bx) \neq g(\bx)].$  In this model the testing algorithm can both make black-box queries to the unknown $f$ that is being tested, and can also draw i.i.d.~samples from the distribution ${\cal D}$.

\medskip

\begin{remark} [On distribution-free testing versus relative-error testing.] \label[remark]{rem:dist-free-versus-relative}
We stress that while a relative-error testing algorithim gets access to independent uniform samples from $f^{-1}(1)$, akin to how a distribution-free tester gets access to independent samples from ${\cal D}$, relative-error testing does \emph{not} correspond to the special case of distribution-free testing in which the ``background distribution'' ${\cal D}$ is uniform over $f^{-1}(1)$.  This is because  in the distribution-free model, the distance $\Pr_{\bx \sim {\cal D}}[f(\bx) \neq g(\bx)]$ between two functions $f,g$ is measured vis-a-vis the unknown distribution ${\cal D}$ which the testing algorithm can draw samples from.  This is quite different from the way distance is measured in the relative-error setting; in particular, the relative distance $\reldist(f,g)$ is \emph{not} equal to $\smash{\Pr_{\bx \sim {\cal U}_{f^{-1}(1)}}[f(\bx)\neq g(\bx)]}$. Indeed, a function $g$ can have large relative distance from $f$ because $g(x)=1$  on many points $x$ for which $f(x)=0$, but all such points have zero weight under the distribution ${\cal U}_{f^{-1}(1)}$.  This issue necessitates a number of changes to the techniques used to give prior distribution-free testers, as described below.
\end{remark}

Improving on \cite{LCSSX19}, \cite{Bshouty19} gave a distribution-free testing algorithm that uses 
$\tilde{O}(k/\eps)$ 
queries and samples, and soon thereafter \cite{Bshouty20}, extending and strengthening the learning-based approach of \cite{DLM+:07}, gave an efficient distribution-free tester for any subclass of $k$-juntas.  
At a high level our approach is similar to that of \cite{Bshouty20}, but there are several significant differences (most significant for the last two phases of our algorithm), which we describe in the high-level sketch of our approach given below.

\begin{itemize}

\item 
Like \cite{Bshouty20}, the first two phases of our main {\textsc{Junta-Subclass-Tester}} algorithm (\Cref{alg:junta-subclass-tester}) checks that certain  restrictions of the function $f:\{0,1\}^n\rightarrow \{0,1\}$ that is being tested are close to literals in a suitable sense.  This is done in our algorithm as follows: in the first phase (``Partition $[n]$ and find relevant blocks''), the algorithm randomly partitions the $n$ input variables $x_1,\dots,x_n$ into $O(k^2)$ disjoint blocks, and performs a binary search procedure over blocks to find (i) a collection of ``relevant blocks,'' and (ii) for each ``relevant block,'' a string (denoted $v^{\ell}$) that ``witnesses'' that the block is relevant. 
(The whole algorithm rejects if this phase identifies more than $k$ such blocks.)
Next, the second phase (``Test the relevant sets'') uses a uniform-distribution 1-junta testing algorithm \UniformJunta~due to \cite{Blais08} 
(see \Cref{thm: uniform junta blais} in \Cref{sec:background-prop-test}) to check that certain restrictions of $f$ (denoted $f^\ell$, and defined using the witness strings $v^\ell$) are suitably close to  literals under the uniform distribution. 

At a high level these phases of our algorithm are similar to \cite{Bshouty20}, but the differences between the relative-error framework and the distribution-free framework mean we must carry out these phases in a somewhat different way.
\cite{Bshouty20} creates restrictions and uses witness strings which, roughly speaking, fix 
variables which ``appear to be irrelevant'' to 0.\footnote{\cite{Bshouty20}'s results only apply to subclasses of juntas that are closed under variable projections and
under restrictions that fix variables to 0.
}
This approach works in the distribution-free context, since %
roughly speaking, 
the distance between two functions can only come from examples $x$ that are ``likely to be drawn under ${\cal D}$.'' In contrast, in the relative-error setting, 
a function $g$ can be very far from $f$ because $g$ has many satisfying assignments $x$ for which $f(x)=0$, and such points will never be drawn when we sample from $f$'s satisfying assignments.  Because of this, we need to proceed in a different way than \cite{Bshouty20}; as detailed in \Cref{sec:subclass}, our approach is to fix the ``seemingly irrelevant'' variables to \emph{uniformly random} values.

\item In the third phase (``Trimming $\Apr$''), our algorithm creates a set 
of  candidate functions which are possible ``approximators'' (in a suitable sense) for the unknown function $f$. 
Recall from \Cref{table:results} that ${\cal C}(k)^* \subset {\cal C}(k)$ is the subset of ${\cal C}(k)$ consisting of the functions in ${\cal C}(k)$ whose set of relevant variables is contained in $\{1,\dots,k\}$;  
our algorithm carries out the third phase by ``trimming'' an initial set $\Apr$ which
consists of a specially constructed set of juntas that each have small relative distance to some function in ${\cal C}(k)^*$. 
This ``trimming'' is done by checking, for each candidate function $g$ in $\Apr$, whether randomly sampled positive examples of $f$ also correspond (after a suitable remapping of variables) to positive examples of the candidate function $g$; if this does not hold for one of the sampled positive examples, then the candidate $g$ is discarded. (The algorithm rejects if all candidates in $\Apr$ are discarded.)

In its fourth and final phase (``Find $k$-variable approximator for $f$''), our algorithm determines whether %
the trimmed approximator set contains a suitable approximator for $f$.  This is done by (i) identifying the function $\frakg$ in 
the trimmed approximator set %
which has the fewest satisfying assignments,
and (ii) verifying that 
$\frakg$ is indeed close to $f$ in a suitable sense (under 
a remapping of the variables). If this is the case then our algorithm accepts,  %
and otherwise it rejects.

The most significant differences between our algorithm and that of \cite{Bshouty20} are in Phases~3 and~4.
\cite{Bshouty20} starts by checking that $f$ is close to a particular subfunction of $f$, denoted $F$, which sets all variables in ``irrelevant'' blocks of $f$ to 0 and sets all variables in each ``relevant'' block of $f$ to the same bit.  
\cite{Bshouty20} then checks  $F$ against a set of candidates, each of which belongs to the subclass ${\cal C}(k)$.
Because the \cite{Bshouty20} (distribution-free) notion of two functions $h,h'$ being $\eps$-far is that $h(\bx) \neq h'(\bx)$ with probability at least $\eps$ over $\bx \sim {\cal D}$, it is enough to simply sample independent draws from ${\cal D}$ and reject if $f$ and $F$ disagree on any of them. Similarly for the second check of $F$ against the candidates, if $F$ and a candidate disagree on a sampled point $\bx$ then the candidate is eliminated.

In contrast, in our relative-error setting, since we can only draw samples from $f^{-1}(1)$, after we trim the approximator set in Phase 3, it could still be possible that $f$ is far in relative distance from some surviving $g$ in the trimmed set, which could happen if $g^{-1}(1) \cap f^{-1}(0)$ is large. 

To deal with this, our approach is as follows: we do not use an analogue of \cite{Bshouty20}'s function $F$, but rather we  consider {\it only} the function $\frakg$ in the trimmed set
that has the \emph{fewest} satisfying assignments, and we work solely with this function in Phase~4.
In the yes-case, when $f\in \mathcal{C}(k)$, we must have $|\frakg^{-1}(1)| \approx |f^{-1}(1)|$, and since $\frakg(\bx)=1$ for $\bx \sim f^{-1}(1)$ with high probability, it must also be the case that $f(\bx)=1$ for $\bx \sim \frakg^{-1}(1)$ with high probability.  And in the no case, when $\reldist(f,h)$ is large for every $h \in \mathcal{C}(k)$, we cannot have both $\frakg(\bx)=1$ for $\bx \sim f^{-1}(1)$ with high probability and also  $f(\bx)=1$ for $\bx \sim \frakg^{-1}(1)$ with high probability. For this would imply that $\reldist(f,\frakg)$ is small, but since $\frakg$ is very close in relative distance to some function in $\mathcal{C}(k)$, a simple triangle inequality
implies that $f$ has small relative distance to some function in ${\cal C}(k)$, which contradicts the fact that we are in the no-case. 
\end{itemize}

\subsection{Future work:  relative-error testing of classes of functions that are approximated by juntas?}

A natural goal for future work is to obtain positive results for testing classes of functions that do not correspond exactly to subclasses of juntas. In particular, in the standard uniform-distribution model, the techniques of \cite{DLM+:07} go beyond just testing subclasses of juntas and allow testing classes of functions which can be \emph{approximated} to high accuracy by juntas under the uniform distribution. Several classes of interest are of this sort, including: size-$s$ Boolean circuits (where circuit size is measured by the number of unbounded fan-in AND/OR/NOT gates), $s$-term DNF formulas, $s$-sparse polynomials over $\F_2^n$, and others; each of these  can be approximated, but not exactly computed, by juntas.  Can classes such as these be tested efficiently in the relative-error setting?  

A challenge which immediately arises is that while these classes are well-approximated by juntas under the uniform distribution, this is not the case under relative error.  For example, for any constant $s$ it is easy to give an $s$-term DNF formula which cannot be well-approximated in relative error by any $O_s(1)$-junta.  On the positive side, though, for both the class ${\cal C}=$ conjunctions and the class ${\cal C} =$ decision lists, functions in ${\cal C}$ are approximable by $O(1)$-juntas under the uniform distribution and cannot in general be well-approximated by juntas under relative error, but \cite{CPPS25} gave efficient testing algorithms for each of these classes.  Each of the two testers of \cite{CPPS25} was carefully specialized to the particular class at hand; it would be very interesting, though perhaps quite challenging, to give a general extension of \Cref{thm:junta-subclass-main} to classes of functions that are only uniform-distribution approximable by juntas.  A first goal in this direction is to either give an $O_{s,\eps}(1)$-query algorithm for testing the class of $s$-term DNF, or alternatively to prove an $\omega_n(1)$-query lower bound for constant $\eps$.

\subsection{Organization}
Because of space constraints we give  standard background and preliminaries in \Cref{sec:prelims}. \Cref{sec:subclass} gives our algorithm for relative-error testing of subclasses of $k$-juntas and an overview of its analysis.  The full analysis of our junta subclass tester, completing the proof of \Cref{thm:junta-subclass-main} and \Cref{cor:subclass}, is given in \Cref{sec:end-of-proof}. 
A self-contained algorithm and analysis for the simpler problem of relative-error testing of $k$-juntas, proving \Cref{thm:junta-main}, is given in \Cref{sec:juntas}.

\section{Testing Subclasses of Juntas}
\label{sec:subclass}

Recall that a class $\calC(k)$ of Boolean functions from $\zo^n$ to $\zo$ is a \emph{subclass of $k$-juntas} if (i) every $f \in \calC(k)$ is a $k$-junta, and (ii) $\calC(k)$ is closed under permuting variables (that is, if $f(x) \in \mathcal{C}(k)$ then for any permutation $\pi : [n] \to [n]$ we have that %
$f_\pi(x)=f(\pi(x))$ is also in $\mathcal{C}(k)$).
Throughout this section, $\calC(k)$ denotes a fixed and arbitrary subclass of $k$-juntas. (For example, $\calC(k)$ might be the class of decision trees over $x_1,\dots,x_n$ with at most $k$ nodes).
We further recall that $\calC(k)^* \subset  \calC(k)$ is the subset of $\calC(k)$ consisting of the functions in $\calC(k)$ that do not depend on variables $x_{k+1},\dots,x_n$. (Continuing the example from above, $\calC(k)^*$ would be the class of size-$k$ decision trees over variables $x_1,\dots,x_k$.)

\subsection{Notation}
\label{sec:some-notation}

Let $h\le k$ be a positive integer. 
Given an injective map $\sigma:[h]\rightarrow [n]$ and an assignment $y\in \{0,1\}^n$, we write $\sigma^{-1}(y)$ to denote the string
  $y\in \{0,1\}^h$
   with $y_i=x_{\sigma(i)}$ for each $i\in [h]$.
Given a $z\in \{0,1\}^h$, we write $\sigma(z)$ to denote
  the string $u\in \{0,1\}^S$, where $S=\{\sigma(i):i\in [h]\}$ is the image set of $\sigma$ and $u_{\sigma(i)}=z_i$ for each $i\in [h]$.
Given a function $g:\{0,1\}^h\rightarrow \{0,1\}$ over
  $h$ variables and an injective map $\sigma:[h]\rightarrow [n]$,
  we write $g_\sigma$ to denote the Boolean function over
  $\{0,1\}^n$ with
  $g_\sigma(x)=g(\sigma^{-1}(x))$ {(note that $g_\sigma$ is an $h$-junta and hence also a $k$-junta)}.
We will always use $f,f',$ etc.~to denote a Boolean function over $\{0,1\}^n$ and $g, g',\frakg,$ etc.~to denote a Boolean function over $\{0,1\}^h$ for some $h\le k$.
With this notation, note that %
$\reldist(f,g_\sigma)$ and $\reldist(g_\sigma,f)$ are well defined, since $f$ and $g_\sigma$ are Boolean functions on $\zo^n$. 
\def\fcj{h}

Given a positive integer $\fcj \leq k$ and $0<\kappa\le 1/2$, we now
  define a collection of functions over $\{0,1\}^h$ called 
  $\Apr(h,\kappa)$. {The idea is that $\Apr(h,\kappa)$ contains} functions over 
  $h$ variables that approximate functions in $\calC(k)$ with
  error parameter $\kappa$, in a suitable sense that we define as follows: 
\begin{flushleft}\begin{enumerate}
\item[] For each function $f\in \calC(k)^*$, let $f':\{0,1\}^n\rightarrow \{0,1\}$ be the function defined as follows:
\begin{equation}\label{eq:defg}
f'(x) := \text{arg}\max_{b \in \{0,1\}}\left\{ \Prx_{\bw \sim \{0,1\}^{{n-h}}} \big[f(x_{[h]} \circ \bw) =b\big]\right\}.
\end{equation}
Clearly $f'$ is an $h$-junta that only depends on the first $h$
  variables. 
Let $g:\{0,1\}^h\rightarrow \{0,1\}$ be the function 
  with $g(z)=f'(z\circ 0^{n-h})$.
The function $g$ is in $\Apr(h,\kappa)$ iff $\reldist(f,f')\le \kappa$. %
\end{enumerate}\end{flushleft}
 {Note that a testing algorithm can construct the set $\Apr(h,\kappa)$ ``on its own'' without making any queries.}
At a high level, the reason we need to introduce $\Apr(h,\kappa)$, instead
  of working with $\calC(k)^*$ directly, is because the algorithm may only be able to 
  identify, implicitly, an $h$-subset of the $k$ relevant variables.
 
The fact below follows directly from the definition of  $\Apr(h,\kappa)$:
\begin{fact}\label[fact]{labelfact}
$|\Apr(h,\kappa)|\le |\mathcal{C}(k)^*|$.
\end{fact}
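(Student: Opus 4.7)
The plan is to read the definition of $\Apr(h,\kappa)$ as describing the (image of a) partial function from $\calC(k)^*$ into the set of Boolean functions on $\{0,1\}^h$. First I would note that for each fixed $f \in \calC(k)^*$, the defining equation \eqref{eq:defg} uniquely determines $f'$ once we fix a tie-breaking convention in the $\mathrm{arg}\max$ (for instance, preferring $b=0$ on ties). Given $f'$, the function $g : \{0,1\}^h \to \{0,1\}$ obtained by $g(z) = f'(z \circ 0^{n-h})$ is then also uniquely determined by $f$. So the construction in the definition of $\Apr(h,\kappa)$ can be packaged as a well-defined map
$$
\Phi : \calC(k)^* \longrightarrow \{0,1\}^{\{0,1\}^h} \cup \{\bot\},
$$
given by $\Phi(f) := g$ if $\reldist(f,f') \le \kappa$, and $\Phi(f) := \bot$ otherwise.

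Next I would observe that, directly from the definition, $\Apr(h,\kappa)$ is exactly the non-$\bot$ part of the image of $\Phi$, i.e.\ $\Apr(h,\kappa) = \Phi(\calC(k)^*) \setminus \{\bot\}$. Since the image of any function has cardinality at most that of its domain, this immediately yields
$$
|\Apr(h,\kappa)| \;\le\; \bigl|\Phi(\calC(k)^*)\bigr| \;\le\; |\calC(k)^*|,
$$
which is the claim.

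There is essentially no obstacle here: the content of the fact is purely that $\Apr(h,\kappa)$ is defined as the union of at most $|\calC(k)^*|$ single-element contributions, one per $f \in \calC(k)^*$. The tie-breaking convention in the $\mathrm{arg}\max$ does not affect the counting argument (any fixed convention still produces a function out of $\calC(k)^*$), and the fact that two distinct $f,\tilde f \in \calC(k)^*$ may yield the same $g$ only tightens the bound rather than weakening it. So the proof is a one-line counting argument after unwinding the definition.
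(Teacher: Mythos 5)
Your argument is correct and is exactly the counting observation the paper intends (the paper states the fact "follows directly from the definition" without further elaboration): each $f \in \calC(k)^*$ contributes at most one function $g$ to $\Apr(h,\kappa)$, so the set has at most $|\calC(k)^*|$ elements. Your packaging via the map $\Phi$ is just a more explicit write-up of the same one-line argument.
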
 

The next lemma shows that if  $f:\{0,1\}^n\rightarrow \{0,1\}$ is close to 
  some $g\in \Apr(h,\kappa)$ in relative distance under some
  injective map $\sigma:[h]\rightarrow [n]$, 
  then it must be close to $\mathcal{C}(k)$ in relative distance as well. 

\begin{lemma}\label[lemma]{lem:simple2}
If $f:\{0,1\}^n\rightarrow \{0,1\}$ satisfies
  $\reldist(f,g_\sigma)\le 1$ for some function $g\in \Apr(h,\kappa)$ and injective map $\sigma:[h]\rightarrow [n]$,
  then we have $\reldist(f,\mathcal{C}(k))\le \reldist(f,g_\sigma)+4\kappa$.
\end{lemma}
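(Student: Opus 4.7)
The plan is to exhibit a specific member of $\calC(k)$ that is within relative distance $\reldist(f, g_\sigma) + 4\kappa$ of $f$. By the definition of $\Apr(h, \kappa)$, the function $g$ arises from some $f^\star \in \calC(k)^*$ via the majority-rounding operation $(f^\star)'$ of \eqref{eq:defg}, with the promise that $\reldist(f^\star, (f^\star)') \le \kappa$. Since $\sigma : [h] \to [n]$ is injective, I extend it to a permutation $\pi : [n] \to [n]$ with $\pi|_{[h]} = \sigma$; because $\calC(k)$ is closed under permutations of variables, $(f^\star)_\pi \in \calC(k)$, and this is my candidate approximator to $f$.

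The key structural observation is that the two $n$-variable functions $((f^\star)')_\pi$ and $g_\sigma$ are identical. Both depend only on the coordinates in $\sigma([h])$, and evaluating $((f^\star)')_\pi$ at $x$ unwinds to $(f^\star)'(x_{\sigma(1)}, \dots, x_{\sigma(h)}, 0, \dots, 0)$, which equals $g(\sigma^{-1}(x)) = g_\sigma(x)$ by the definition of $g$. Since any coordinate permutation is a bijection of $\{0,1\}^n$ that preserves both the size of satisfying sets and the sizes of symmetric differences, $\reldist$ is invariant under simultaneously permuting both of its arguments; hence $\reldist((f^\star)_\pi, g_\sigma) = \reldist(f^\star, (f^\star)') \le \kappa$.

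With both $\reldist(f, g_\sigma)$ and $\reldist((f^\star)_\pi, g_\sigma)$ under control, the remainder is a triangle inequality on symmetric differences, followed by careful conversion of denominators. Applying $|A \triangle C| \le |A \triangle B| + |B \triangle C|$ with $B = g_\sigma^{-1}(1)$ and dividing by $|f^{-1}(1)|$ gives
\[
\reldist(f, (f^\star)_\pi) \;\le\; \reldist(f, g_\sigma) \;+\; \frac{|g_\sigma^{-1}(1) \triangle (f^\star)_\pi^{-1}(1)|}{|f^{-1}(1)|}.
\]
The numerator in the second term equals $|((f^\star)')^{-1}(1) \triangle (f^\star)^{-1}(1)| \le \kappa\,|(f^\star)^{-1}(1)|$, so all that remains is to show $|(f^\star)^{-1}(1)| \le 4|f^{-1}(1)|$. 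This is where the main obstacle lies: because the denominator of $\reldist$ is the size of the ``source'' function's satisfying set, relative distance is asymmetric, and the three scales $|f^{-1}(1)|$, $|g_\sigma^{-1}(1)|$, and $|(f^\star)^{-1}(1)|$ can all differ. The two hypotheses bridge them up to constants: $\reldist(f^\star,(f^\star)') \le \kappa \le 1/2$ yields $|(f^\star)^{-1}(1)| \le 2|((f^\star)')^{-1}(1)| = 2|g_\sigma^{-1}(1)|$, and $\reldist(f, g_\sigma) \le 1$ yields $|g_\sigma^{-1}(1)| \le 2|f^{-1}(1)|$; chaining these produces the factor of $4$. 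Since $(f^\star)_\pi \in \calC(k)$, the resulting bound transfers to $\reldist(f, \calC(k))$.
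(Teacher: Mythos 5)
Your proof is correct and follows essentially the same route as the paper's: both pass through the intermediate function $g_\sigma$, use the hypothesis $\reldist(f,g_\sigma)\le 1$ together with $\kappa\le 1/2$ to relate the three denominators, and pick up the factor $4$ as a product of two factors of $2$. The only difference is presentational — you make explicit the extension of $\sigma$ to a permutation (which the paper leaves implicit) and you inline the paper's approximate-symmetry and approximate-triangle-inequality lemmas (\Cref{lem:approx-symetric}, \Cref{lem:approx-triangle-ineq}) as a direct computation with symmetric differences.
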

\begin{proof}
By the construction of $\Apr(h,\kappa)$, given that $g\in \Apr(h,\kappa)$, there exists a function $f'\in \calC(k)$ that satisfies $\reldist(f',g_\sigma)\le \kappa$.
Using \Cref{lem:approx-symetric} (in \Cref{sec:background-prop-test}) which shows that relative distance satisfies approximate symmetry, we have  
that $\reldist(g_\sigma,f')\le 2\kappa$.
The lemma follows by bounding $\reldist(f,f')$ using
  the triangle inequality (\Cref{lem:approx-triangle-ineq} in \Cref{sec:background-prop-test}).
\end{proof}

\subsection{Overview of the Main Algorithm} \label{sec:overview}

We start with an overview of our main
  algorithm, \Cref{alg:junta-subclass-tester}, for testing an arbitrary subclass $\mathcal{C}(k)$ of $k$-juntas that is closed under permutations. 
At a very high level, the approach is to perform  ``implicit learning'' by searching a candidate set of hypotheses to see whether it contains a hypothesis that is consistent with the algorithm's data.

Let $f:\{0,1\}^n\rightarrow \{0,1\}$ be the  function that is being tested.
\Cref{alg:junta-subclass-tester} starts by drawing a partition of
  variables $[n]$ into $r=O(k^2)$ many blocks $\bX_1,\ldots,\bX_r$
  uniformly at random. 
For the case when $f\in \calC(k)$, because $f$ only depends on 
  $k$ variables, a quick calculation shows that most likely 
  every block $\bX_\ell$ contains at most one relevant variable
  of $f$. For the overview we will assume that this is the case when $f\in \calC(k)$.
  
In Phase~1,  \Cref{alg:junta-subclass-tester} tries to find as many \emph{relevant} blocks $\bX_\ell$ of $f$ as possible under a query complexity budget. Here intuitively we say $\bX_\ell$ is a relevant block of $f$
  if the algorithm has found a string $v^\ell\in \{0,1\}^n$ such that 
  $f^\ell:\{0,1\}^{\bX_\ell}\rightarrow \{0,1\}$, defined as 
\begin{equation}\label{eq:temp101}
{f^\ell(x) := f\left(v^\ell_{\overline{\bX_\ell}}\circ x \right)},
\end{equation} is not a constant function.
To search for a new relevant block, 
  let $\{\smash{\bX_\ell}\}_{\smash{\ell\in \bI}}$ denote the set of relevant blocks found so far and $\bX$ be their union (with $\smash{\bI=\emptyset}$ and $\bX=\emptyset$ initially). 
Phase 1 draws~a~uniform satisfying assignment $\bu \sim f^{-1}(1)$ and a random $\bw\sim\smash{\{0,1\}^{\overline{\bX}}}$; if $f(\bu_{\bX}\circ\bw)=0$,  then a binary search over blocks can be run
  on $\bu$ and $\bu_{\bX}\circ \bw$, using $O(\log r)=O(\log k)$ many queries, to find a new relevant block $\bX^\ell$ together with an accompanying $v^\ell$ (which is used to define $f^\ell$).
If Phase 1 finds too many relevant blocks (i.e., $|\bI|$ becomes larger than $k$), then \Cref{alg:junta-subclass-tester} rejects; on the other hand, if it fails to make progress after $T_2=O(\log k)\cdot T_1$ 
many rounds, where
$$T_1=O\left(\frac{\log |\calC(k)^*|}{\eps}\right),$$ then \Cref{alg:junta-subclass-tester} stops searching and moves on to Phase 2. The latter helps ensure that the number of queries used in Phase 2 stays within the budget.

In Phase 2, \Cref{alg:junta-subclass-tester} checks if every $f^\ell$, $\ell\in \bI$, is $(1/30)$-close to a literal under the uniform distribution  (i.e., $x_{i}$ or $\overline{x_{i}}$ for some variable $i\in \bX_\ell$),
and rejects if any of them is not.

When \Cref{alg:junta-subclass-tester} reaches Phase 3, one may assume that the relevant blocks found satisfy the following two conditions (recall $\bX$ is the union of relevant blocks $\bX_\ell$, $\ell\in \bI$, found in Phase 1):
\begin{flushleft}\begin{enumerate}
\item First, we have\begin{equation}\label{temp44}
\Prx_{\substack{\bu \sim f^{-1}(1)\\ \bw \sim \{0,1\}^{\overline \bX}}}\big[f(\bu_{\bX} \circ \bw)=0\big] \le \kappa/4, %
\end{equation}
where $\kappa$ is chosen to be $1/20T_1$,
since otherwise it is unlikely for Phase 2 to fail to find a new 
  relevant block after $T_2=O(\log k)\cdot T_1$ repetitions before moving to Phase 3. 
\item Second, for every $\ell\in \bI$, $f^\ell$  is $(1/30)$-close to a literal $x_{\tau(\ell)}$ or $\overline{x_{\tau(\ell)}}$ for some $\tau(\ell)\in \bX_\ell$;
since otherwise Phase {2} would reject with high probability. When this is the case, $\tau(\ell)\in \bX_\ell$ for each $\ell\in \bI$ is well defined and unique.
\end{enumerate}
\end{flushleft}

To give some intuition for Phase 3 and 4, let's consider
  the case when $f\in \calC(k)$.
Let $h=|\bI|$ and let $\ell_1<\ldots <\ell_h$ be indices of blocks in $\bI$.
Let $\sigma:[h]\rightarrow [n]$ be the injective map 
  with $\sigma(i)=\tau(\ell_i)$ for each $i\in [h]$,
   let $S = \{\tau(\ell_i): i \in [h]\}$, and let $\sfg:\{0,1\}^h\rightarrow \{0,1\}$
  be the following function:
$$
\sfg(x):= \text{arg}\max_{b \in \{0,1\}}\left\{ \Prx_{\bw \sim \{0,1\}^{\overline{S}}} \big[f(\sigma(x)\circ\bw)=b\big]\right\}.
$$
\Cref{lem:apple} shows that \Cref{temp44} implies 
  $\reldist(f,\sfg_\sigma)\le \kappa$ and thus, we have $\sfg\in \Apr(h,\kappa)$
  by the construction of $\Apr(h,\kappa)$.
On the other hand, if $f$ is $\eps$-far from every function
  in $\mathcal{C}(k)$ in relative distance,
  then by \Cref{lem:simple2}, every function $g$ in $\Apr(h,\kappa)$ 
  has $\reldist(f,g_\sigma)>\eps/2$ given that $\kappa < \eps/8$.
The goal of Phase 3 and 4 is then to distinguish the following two cases:  (i) there exists a $\sfg\in \Apr(h,\kappa)$ such that 
  $\reldist( f,\sfg_\sigma)\le \kappa$, versus (ii) every $g\in \Apr(h,\kappa)$ has $\reldist(f,g_\sigma)\ge \eps/2$ ( note that there is a big gap between the
  two distance parameters $\kappa$ and $\eps$).  

To gain some intuition, pretend for a moment that $\Apr(h,\kappa)$ is a collection
  of functions over $\{0,1\}^n$, and the two cases we need
  to distinguish are simply either (1) there exists a function~$\sfg \in $ $\Apr(h,\kappa)$ such that $\reldist(f,\sfg)\le \kappa$ or (2)
  every $g\in \Apr(h,\kappa)$ satisfies $\reldist(f,g)$ $\ge \eps/2$.
This could be done easily as follows:
\begin{flushleft}\begin{enumerate}
\item Draw $1/(20\kappa)$  satisfying assignments $\bu\sim f^{-1}(1)$ 
  and remove every  
  $g\in \Apr(h,\kappa)$ with $g(\bu)=0$ for some sampled $\bu$.
Reject if no function is left in $\Apr(h,\kappa)$; otherwise, 
  let $\frakg$ be the function left with the smallest
  $\frakg^{-1}(1)$ and pass it to the next phase.

If we are in case (1), then with probability at least 
  $19/20$ the function $\sfg$ survives so the algorithm does not reject. 
On the other hand, in both cases (1) and (2), by a union bound every $g$ that survives
  must satisfy
$$
\frac{|f^{-1}(1)\setminus g^{-1}(1)|}{|f^{-1}(1)|}{\le \eps/500}.
$$

In particular the function $\frakg$ passed down to the next phase must satisfy this condition in both cases.
For case (1), one can use the choice of $\frakg$ (i.e., $|\frakg^{-1}(1)|$ is no larger than $|(\sfg)^{-1}(1)|$) to show that
\begin{align}\label{eq:temp100}
\frac{|\frakg^{-1}(1)\setminus f^{-1}(1)|}{|\frakg^{-1}(1)|}{\le \eps/400}.
\end{align} %
For case (2), using $\reldist(f,\frakg)\ge \eps/2$,
  one can show that the ratio above is at least $\eps/7$.%

\item Draw $20/\eps$ samples $\bu\sim {\frakg}^{-1}(1)$;
  accept if $f(\bu)=1$ for all $\bu$ sampled and reject otherwise.
Given \Cref{eq:temp100}, in case (1) we will accept with probability at least $19/20$. 
 On the other hand, in case (2), given that the ratio in \Cref{eq:temp100} is at least $\eps/7$ we will reject with probability at least
$$1-(1-\eps/7)^{20/\eps}\ge 1-e^{-20/7}\geq 0.9 .$$  %
\end{enumerate}\end{flushleft}

The discussion above for the simplified situation is 
  exactly what Phase 3 and Phase 4 
  in \Cref{alg:junta-subclass-tester} aim to achieve,
  except that in the real situation, $\Apr(h,\kappa)$ consists
  of functions over $\{0,1\}^h$ and we don't have direct access
  to the ``hidden'' injective map $\sigma$ that connects $f$ and $g$.
In Phase 3,
  \Cref{alg:junta-subclass-tester} 
  mimics step 1 above {as follows:} draw $\bu\sim f^{-1}(1)$, use
  it to \emph{obtain} $\sigma^{-1}(\bu)\in \{0,1\}^h$,
  and remove those $g$ with $g(\sigma^{-1}(\bu))=0$.
(This is because $g_\sigma(\bu)=g(\sigma^{-1}(\bu))$.)
Recall that $\frakg\in \Apr(h,\kappa)$ is the function with the smallest $|\frakg^{-1}(1)|$ that remains (recall the discussion in the final paragraph of \Cref{sec:techniques}).
  Phase 4 draws 
  $\bz\sim {\frakg}^{-1}(1)$, uses it to \emph{obtain}
  $\bu=\sigma(\bz) \circ \bw$ with $\smash{\bw\sim \{0,1\}^{\overline{S}}}$, where $S:=\{\sigma(1),\ldots,\sigma(h)\}$, and rejects if $f(\sigma(\bz)\circ\bw)=0$.
(This is because $\bu$ obtained this way is distributed 
  exactly the same as drawing uniformly from $\frakg_\sigma^{-1}(1)$.)

The challenges are (1) how to obtain $\sigma^{-1}(u)$ given $u\in \{0,1\}^n$ and (2) how to obtain $\sigma(z)\circ \bw$, $\smash{\bw\sim \{0,1\}^{\overline{S}}}$, given $z \in \{0,1\}^h$.
Both of them rely on a simple subroutine called 
   \textbf{RelVarValue}:
Given a Boolean function $\psi$ which is 
  promised to be $(1/30)$-close to an (unknown) literal under the uniform distribution and any
  assignment, \textbf{RelVarValue} can reveal 
  the value of the literal variable in that assignment with high probability. %
Given \textbf{RelVarValue}, obtaining $\sigma^{-1}(u)$ from $u\in \{0,1\}^n$ is easy: just run \textbf{RelVarValue} on $f^\ell$ and $u_{\bX_\ell}$ for each $\ell\in I$. 

Obtaining $\sigma(z)\circ \bw$ from $z$ is more involved.
Given a $y\in \{0,1\}^n$ and a $z \in \zo^h$, we write $y_{\sigma,z}$ to denote the
  following $n$-bit string:
For each block $\bX_\ell$ with $\ell\notin \bI$, $y_{\sigma,z}$ agrees with $y$ in $\bX_\ell$; for each $\ell_i\in \bI$,
$$
\text{the $\bX_{\ell_i}$ block of $y_{\sigma,z}$}= \begin{cases} y_{\bX_{\ell_i}}& \text{if $y_{\tau(\ell_i)}=z_i$}\\[0.5ex]\overline{y_{\bX_{\ell_i}}} & \text{if $y_{\tau(\ell_i)}\ne z_i$}
\end{cases}$$
A key observation is that, when $\by\sim \{0,1\}^n$, $\by_{\sigma,z}$ is distributed exactly the same as $\sigma (z)\circ \bw$
  with $\smash{\bw\sim\{0,1\}^{\overline{S}}}$.
The $\MapBack$ subroutine, which calls on $\RelV$, takes 
  $y$ and $z$ as inputs and outputs $y_{\sigma,z}$ with high probability.

The rest of the argument for testing subclasses of juntas %
is organized as follows.
We first focus on the two subroutines $\RelV$ and $\MapBack$ and 
  analyze their performance guarantees in \Cref{sec:RelVarValue}.
In \Cref{subsec:yescase} we show that when $f\in \calC(k)$, \Cref{alg:junta-subclass-tester} accepts with probability
  at least $2/3$;
  in \Cref{subsec:nocase}, we show that when $f$ is $\eps$-far from $\calC(k)$ in relative 
  distance, \Cref{alg:junta-subclass-tester} rejects with probability at least $2/3$.
Finally we bound the query complexity of \Cref{alg:junta-subclass-tester} in \Cref{subsec:querycomp}.

\begin{algorithm}

\caption{{\textsc{Junta-Subclass-Tester}}$(f,\eps)$}
\label{alg:junta-subclass-tester}

\Algphase{Phase 1: Partition $[n]$ and find relevant blocks}
 
\nonl Parameters used in the algorithm:
$$
\hspace{-1.3cm}r=20k^2,\quad T_1=O\left(\frac{\log |\calC(k)^* |}{\eps}\right),\quad T_2= 100\log (20k+1)\cdot T_1\quad\text{and}\quad \kappa=\frac{1}{20T_1}.
$$

Randomly partition $[n]$ into $r$ blocks $\bX_1, \ldots, \bX_r$; Set $\bX=\emptyset$, $\bI=\emptyset$ and $t=0$\; %
\While{$t \neq T_2$}{
 \hskip0em {Draw $\bu \sim f^{-1}(1)$ and $\smash{\bw \sim \{0,1\}^{\overline{\bX}}}$}; set $t\leftarrow t+1$\;
 
 \hskip0em \If{$f(\bu_{\bX} \circ \bw) \neq f(\bu)$}{ 

Binary Search over blocks of $\bX_1,\dots,\bX_r$ that are subsets of $\overline{\bX}$ to find 
\hskip0em  a new relevant block $\bX_\ell$ and a string $v^{\ell} \in \{0,1\}^n$ such that $f(v^{\ell}) \neq f\left(v^{\ell}_{\overline{\bX_\ell}}\circ \bw_{\bX_\ell} \right).$ \\
\nonl
Denote by $f^{\ell} : \{0,1\}^{\bX_\ell} \rightarrow \{0,1\}$ the following function: $$f^{\ell}(x):=f\left(v^{\ell}_{\overline{\bX_\ell}} \circ x  \right).$$

Set $\bX \leftarrow \bX \cup \bX_\ell$, $\bI \leftarrow \bI \cup \{\ell\}$, %
 $t\leftarrow 0$, and 
 \text{reject} if $|\bI|>k$\;

 }
}

\Algphase{Phase 2: Test the relevant sets} %

\ForEach{$\ell \in \bI$}{
 \hskip0em Run {$\textsc{UniformJunta}(f^{\ell}, 1, 1/15, 1/30)$}; \text{reject} if it rejects\; %
 \hskip0em Draw $\bb \sim \{0,1\}^{{\bX}_\ell}$; \text{Reject} if ${f^{\ell}(\bb)= f^\ell (\overline{\bb} )}$\;
}

\Algphase{Phase 3:  Trimming $\Apr$}

 Set $\bA\leftarrow \Apr(h, \kappa)$, where $h=|\bI|$ and  $\bI=\{\ell_1,\ldots,\ell_h\}$
  with $\ell_1<\ldots<\ell_h$\;
\RepTimes{$T_1$}{
Draw $\bu\sim f^{-1}(1)$ and let $\bv$ be the all-$0$ string in $\{0,1\}^h$\;
\For{$i=1$ \KwTo$h$}{
 Set $\smash{\bv_i\leftarrow \smash{\RelV(f^{\ell_i},\bu_{\bX_{\ell_i}},1/(20k))}}$\;}

 \hskip0em Remove from $\bA$ every $g\in \bA$ with  $g(\bv)=0$\;
 }
 \text{reject} if $\bA=\emptyset$; otherwise let $\frakg$ be the function in $\bA$ with the smallest $|\frakg^{-1}(1)|$;

\Algphase{Phase 4:  Find $k$-variable approximator for $f$} %

\RepTimes{$20/\eps$}{
Draw $\by\sim \{0,1\}^n$ and $\bz\sim {\frakg}^{-1}(1)$ (note that $\bz \in \{0,1\}^h$)\;
Let $\bu= \MapBack(f,\bI,\{\bX_\ell:\ell\in \bI\},\{v^\ell:\ell\in \bI\},\by,\bz)$\;
  \text{reject} if %
  $f(\bu)=0$\;
  }
 \text{Accept}\vspace{0.1cm}\;
\end{algorithm}

\begin{algorithm}
\caption{$\RelV(\psi, w,\delta)$}

  Let $Y_{\zeta}=\{j: w_j = \zeta\}$ for both $\zeta\in \zo$\;
Set $\bG_{0}=
\bG_{1}=0$\;
\RepTimes{$O(\log(1/\delta))$}{
 \hskip0em Draw $\mathbf{b^\zeta} \sim \{0,1\}^{Y_{\zeta}}$ for both $\zeta \in \zo$\;
 \hskip0em If $\smash{{\psi(\mathbf{b}^0 \circ \bb^1 ) \neq \psi(\overline{\mathbf{b}^0} \circ \bb^1)}}$ then set $\bG_{0} \leftarrow \bG_{0}+1$; otherwise,  set $\bG_{1} \leftarrow \bG_{1}+1$\;
 }
  Return $0$ if $\bG_0>\bG_1$ and return $1$ otherwise\;

\end{algorithm}

\subsection{The $\RelV$ and $\MapBack$ subroutines}\label{sec:RelVarValue}%

The $\RelV$ subroutine takes three inputs: a Boolean function $\psi:\{0,1\}^m\rightarrow \{0,1\}$, an assignment
  $w\in \{0,1\}^m$, and an error parameter $\delta>0$.
When $\psi$ is promised to be $(1/30)$-close to a literal (say $x_i$ or $\overline{x_i}$) under the uniform 
  distribution, $\RelV$ 
  returns the value $w_i$ of the literal variable in $w$ with probability at least $1-\delta$.

\begin{lemma}\label[lemma]{lem: FindBlockValue: bad}
$\RelV(\psi,w,\delta)$ makes $O(\log(1/\delta))$ queries and satisfies the following conditions: 
\begin{flushleft}\begin{enumerate}
  \item If $\psi : \{0,1\}^m \to \{0,1\}$ is $(1/30)$-close to a literal $x_i$ or $\overline{x_i}$ under the uniform distribution, then with probability at least $1-\delta$, $\RelV$ returns the bit $w_i$.
\item If $\psi$ is a literal $x_i$ or $\overline{x_i}$, then $\RelV$ always returns $w_i$.
\end{enumerate}\end{flushleft}
\end{lemma}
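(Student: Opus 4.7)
The plan is to analyze a single iteration of the loop first, then combine iterations via a Chernoff-type bound.

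First I would handle the exact case (part 2). If $\psi$ is the literal $x_i$ or $\overline{x_i}$, then for any strings $\bb^0 \in \{0,1\}^{Y_0}$ and $\bb^1 \in \{0,1\}^{Y_1}$, the test $\psi(\bb^0 \circ \bb^1) \neq \psi(\overline{\bb^0} \circ \bb^1)$ succeeds iff flipping the coordinates in $Y_0$ changes the value of the literal, which happens iff $i \in Y_0$. By definition of $Y_0$, this is equivalent to $w_i = 0$. Thus in every iteration, $\bG_0$ is incremented when $w_i = 0$ and $\bG_1$ is incremented when $w_i = 1$, so the final return is always $w_i$.

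For part 1, the key observation is that although $\bb^0 \circ \bb^1$ and $\overline{\bb^0} \circ \bb^1$ are correlated, each marginally is a uniform draw from $\{0,1\}^m$ (since the complement of a uniform string in $\{0,1\}^{Y_0}$ is again uniform). Let $f^\star$ denote the literal that $\psi$ is $(1/30)$-close to. By a union bound over the two queries, with probability at least $1 - 2/30 = 14/15$ the function $\psi$ agrees with $f^\star$ at both points $\bb^0 \circ \bb^1$ and $\overline{\bb^0} \circ \bb^1$; whenever this happens, the test becomes $f^\star(\bb^0 \circ \bb^1) \neq f^\star(\overline{\bb^0} \circ \bb^1)$, which, by the exact-case analysis above, increments $\bG_{w_i}$ deterministically. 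Hence in each iteration the ``correct'' counter $\bG_{w_i}$ is incremented with probability at least $14/15$, independently across iterations.

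Finally I would apply a standard Chernoff bound: with iteration count chosen as $c\log(1/\delta)$ for a sufficiently large absolute constant $c$, the probability that the correct counter fails to strictly exceed the incorrect counter is at most $\delta$, since the expected gap grows linearly with the number of iterations while the bias is a constant bounded away from $1/2$. The only routine technicality is making the constant in $O(\log(1/\delta))$ explicit so that the Chernoff bound actually yields failure probability $\delta$; no genuine obstacle arises. The query count of $O(\log(1/\delta))$ is immediate from the loop structure, since each iteration makes exactly two queries to $\psi$.
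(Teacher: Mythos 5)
Your proof is correct and follows essentially the same route as the paper: both arguments use that $\bb^0\circ\bb^1$ and $\overline{\bb^0}\circ\bb^1$ are each marginally uniform, apply a union bound over the two queries to show the ``wrong'' outcome occurs with probability at most $2\cdot(1/30)=1/15$ per round, and finish with a Chernoff bound over the $O(\log(1/\delta))$ independent rounds. The exact-literal case is likewise handled identically.
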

\begin{proof}
The case when $\psi$ is a literal is simple as
 the correct counter always gets incremented.

Consider the case when $\psi$ is $(1/30)$-close
  to either $x_i$ or $\overline{x_i}$ with $w_i=0$ (meaning $i\in Y_0$).
Then
$$
\Pr\big[\psi(\bb^0\circ\bb^1)= \psi(\overline{\bb^0}\circ\bb^1)\big]
\le \Pr\big[\psi(\bb^0\circ\bb^1)\ne \bb^0_i\big]
+\Pr\big[\psi(\overline{\bb^0}\circ\bb^1)\ne \overline{\bb^0_i}\big]\le 1/15,
$$
where the last inequality follows from the fact that 
  both the marginal distributions of $\bb^0\circ \bb^1$ and 
  $\smash{\overline{\bb^0}\circ\bb^1}$ are uniform.
As a result, in each round of $\RelV$, $\bG_0$ goes up by
  $1$ with probability at least $14/15$ and $\bG_1$ goes
  up by $1$ with probability at most $1/15$.
The claim then follows from a standard Chernoff bound in this case, by making the hidden constant large enough. 
  The case when $w_i=1$ and $i\in Y_1$ can be proved similarly, using 
$$
\Pr\big[\psi(\bb^0\circ\bb^1)\ne \psi(\overline{\bb^0}\circ\bb^1)\big]
\le \Pr\big[\psi(\bb^0\circ\bb^1)\ne \bb^1_i\big]
+\Pr\big[\psi(\overline{\bb^0}\circ\bb^1)\ne {\bb^1_i}\big]\le 1/15.
$$
This finishes the proof of the lemma.%
\end{proof}

\begin{algorithm}[t]
\caption{$\MapBack(f,I,\{X_\ell:\ell\in I\},\{v^\ell:\ell\in I\},y,z)$} %
\label{alg:SAMPfA}

\ForEach{$i\in [h]$}{
 \hskip0em Let $\smash{b_{i}= \RelV(f^{\ell_i},y_{X_{\ell_i}},1/(20k))}$\;
 \hskip0em Flip the $X_{\ell_i}$-block of $y$ to $\overline{y_{X_{\ell_i}}}$ if $b_i\ne z_i$\;
 }
 Return $y$\;
\end{algorithm}

Next we work on the subroutine $\MapBack$ used in 
  Phase 4 of \Cref{alg:junta-subclass-tester},
  which makes calls to $\RelV$.
It takes as input the function $f$; the set $I$; for each $\ell\in I$ a collection of $h\le k$ blocks
  $X_\ell$ together with a string $v^\ell\in \{0,1\}^n$
  such that the function $f^\ell$ over $\{0,1\}^{X_\ell}$ defined using 
  $v^\ell$ in \Cref{eq:temp101} is $(1/30)$-close to 
  a literal $x_{\tau(\ell)}$ or $\overline{x_{\tau(\ell)}}$
  under the uniform distribution; and two strings $y\in \{0,1\}^n$ and  $z\in \{0,1\}^h$.
Let $\ell_1<\cdots<\ell_h$ be the indices in $I$, and let
  $\sigma$ and $S$ be defined using $\tau$ {as in \Cref{sec:overview}.}

\begin{lemma}\label[lemma]{lem:mapback}
$\MapBack$ uses $O(k\log k)$ queries and  satisfies the following conditions: 
\begin{flushleft}\begin{enumerate}
\item If $f^\ell$ is $(1/30)$-close to a literal $x_{\tau(\ell)}$ or $\overline{x_{\tau(\ell)}}$ for all $\ell\in I$ under the uniform distribution, then with probability at least $1-1/20$, $\MapBack$  returns 
  $y_{\sigma,z}$.

\item If $f^\ell$ is a literal for all $\ell\in I$, then
  $\MapBack$ always returns $y_{\sigma,z}$.
\end{enumerate}\end{flushleft}
\end{lemma}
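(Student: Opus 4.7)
The plan is to analyze the algorithm directly: each iteration $i \in [h]$ makes a single call to $\RelV(f^{\ell_i}, y_{X_{\ell_i}}, 1/(20k))$, which by \Cref{lem: FindBlockValue: bad} uses $O(\log(20k)) = O(\log k)$ queries. Since $h \le k$, the total query complexity is $O(k \log k)$, establishing the query bound.

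For correctness, I would first unpack what the algorithm produces. Let $b_1,\ldots,b_h$ be the bits returned by the $h$ calls to $\RelV$. The algorithm outputs the string $y'$ obtained from $y$ by flipping the $X_{\ell_i}$-block exactly when $b_i \neq z_i$. Comparing with the definition of $y_{\sigma,z}$ from \Cref{sec:overview}, the blocks $X_\ell$ with $\ell \notin I$ are left untouched by both the algorithm and the definition, so I only need to check that for each $i\in[h]$ the $X_{\ell_i}$-block of $y'$ equals that of $y_{\sigma,z}$. A quick case analysis shows that this holds precisely when $b_i = y_{\tau(\ell_i)}$: if $b_i = y_{\tau(\ell_i)} = z_i$, no flip occurs and both strings agree with $y_{X_{\ell_i}}$; and if $b_i = y_{\tau(\ell_i)} \neq z_i$, the algorithm flips the block, matching $\overline{y_{X_{\ell_i}}}$, which is exactly what the definition prescribes. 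Hence it suffices to show that with the desired probability, $b_i = y_{\tau(\ell_i)}$ for every $i \in [h]$.

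For part (1), under the hypothesis that each $f^{\ell_i}$ is $(1/30)$-close to a literal $x_{\tau(\ell_i)}$ or $\overline{x_{\tau(\ell_i)}}$ under the uniform distribution, \Cref{lem: FindBlockValue: bad}(1) guarantees that $\RelV(f^{\ell_i}, y_{X_{\ell_i}}, 1/(20k))$ returns $y_{\tau(\ell_i)}$ with probability at least $1 - 1/(20k)$. A union bound over the $h \le k$ calls shows that all $b_i$ are correct with probability at least $1 - k \cdot 1/(20k) = 1 - 1/20$, and by the case analysis above $\MapBack$ then outputs $y_{\sigma,z}$. For part (2), when each $f^{\ell_i}$ is exactly a literal, \Cref{lem: FindBlockValue: bad}(2) says that $\RelV$ deterministically returns $y_{\tau(\ell_i)}$, so $b_i = y_{\tau(\ell_i)}$ holds always and $\MapBack$ returns $y_{\sigma,z}$ with probability $1$.

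There is no real obstacle here beyond bookkeeping: the entire proof reduces to (i) verifying the purely combinatorial identity between the algorithm's flip rule and the definition of $y_{\sigma,z}$, and (ii) invoking \Cref{lem: FindBlockValue: bad} with a union bound whose failure probability was already baked into the choice of error parameter $1/(20k)$ passed to $\RelV$. The most delicate point is just making sure the indexing conventions line up, i.e., that $\tau(\ell_i) \in X_{\ell_i}$ so that $y_{\tau(\ell_i)}$ is indeed a coordinate of $y_{X_{\ell_i}}$, which follows from the setup in \Cref{sec:overview}.
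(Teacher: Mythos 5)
Your proof is correct and follows the same approach as the paper's (much terser) proof: the query bound comes from $h\le k$ calls to $\RelV$ at $O(\log(1/\delta))=O(\log k)$ queries each, and correctness comes from a union bound over the $h\le k$ calls with the failure parameter $\delta=1/(20k)$. The extra bookkeeping you include — checking that the algorithm's flip rule reproduces the definition of $y_{\sigma,z}$ exactly when each $b_i=y_{\tau(\ell_i)}$ — is the part the paper leaves implicit, and you have it right.
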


\begin{proof}
The query complexity follows from the query complexity of $\RelV$, and 
  our choice of $\delta=1/(20k)$.
The probability follows from a union bound over the $h\le k$ calls to $\RelV$.
\end{proof} %
%
%

%

%

%
%

%

%
%

%

%
%
%

%
%
%

%
%
%
%
%
%
%
%
%
%
%
%
%

%

%

%
%
%
%
%

%

%
%
    
%
%
%
%
%
%
%
%

%
%
%
%
%
%
%
%
%

%
    
%
%
%
%
%
%

%

%
%
%
%

%
%
%

%
%
    
%
%
%
%
%
%
%

%
%

%
%
%
%
%
%

%
%
%
%
%
%
%
%
%
%
%
%
%
%

%
%
%
%
%
%
%
%
%
%
%
%
%

%

%
%
%
%
%

%

%
%
%
%
%

%

%

%

%
%

%

%
%
%
%
%
%
%

%
%

%

%

%
%

%
%
%

%
%
%

%
%
%

%

%

%
%
%

%
%
%
%

 %
%
%
%
%

%
%

%
%
%
%
%

%

%
%
%
%
%
%
%
%
%
%
%
%
%

%
%
%

%

%
%
%
%
%
%

%
%

%

%
    
%
%

%

%
%
%
%
%
%
%
%
%
%

%

%
%
%
%
%
%
%

%
%

%
%
%

%
%
%
%
%
%
%
%
%
%
%

%

%

%
%
%
%
%
%
%
%

%

%

%
%

%
%

%

%

%
%
%

%
%

%
%
%

%
%
%
%
%

%
%
%
%
%
%
%
%
%
%

%
%

%
%
%
%
%

%

%
%


\bibliographystyle{alpha}
\bibliography{allrefs}
\clearpage
\appendix

\section{Background and Preliminaries} \label{sec:prelims}

\subsection{Basic notation and terminology}

We use $[n]$ to denote the set $\{1, \ldots, n\}$. Given a set $S \subseteq [n]$, we denote by $\overline{S}$ the set $[n] \setminus S$. 

Given $S \subseteq [n]$ and $z \in \{0,1\}^n$, we denote by $z_S$ the string in $\{0,1\}^S$ that agrees with $z$ on every coordinate in $S$.
Given $z \in \zo^[n]$ we denote by $\overline{z}$ the string with each bit of $z$ flipped, so $\overline{z}_i=1-z_i$ for every $i \in [n]$. Given strings $x,y \in \{0,1\}^n$, we denote by $x \oplus y$ the bit-wise xor of $x$ and $y$, that is $(x \oplus y)_i=x_i \oplus y_i$.
Given a permutation $\pi:[n] \to [n]$ and $x \in \zo^n$, we denote by $\pi(x)$ the string $x_{\pi(1)} x_{\pi(2)} \ldots x_{\pi(n)}$. Given $f:\zo^n \to \zo$, we denote by $f_{\pi}$ the function $f_{\pi}(x)=f(\pi(x))$.

For $J \subseteq [n]$ we say that a function $f: \zo^n \to \zo$ is a \emph{$J$-junta} if $f$ depends only on the variables in $J$, i.e.~$f(x_J \circ z_{\overline{J}})=f(x_J \circ z'_{\overline{J}})$ for every $x,z,z' \in \zo^n$.  If $f$ is a $J$-junta for some set $J$ of size $k$, we say that $f$ is a \emph{$k$-junta}.

\subsection{Standard-model property testing and relative-error property testing} 
\label{sec:background-prop-test}

We briefly review the ``standard'' model for testing Boolean functions \cite{Goldreich17book}. A standard-model testing algorithm for a class ${\cal C}$ of $n$-variable Boolean functions has oracle access $\MQ(f)$ to an unknown and arbitrary function $f: \zo^n \to \zo$.
The requirement on the algorithm is that it must output ``yes'' with high probability (say at least 2/3; this can be amplified using standard techniques) if $f \in {\cal C}$, and must output ``no'' with high probability (again, say at least 2/3) if $f$ disagrees with every function $g \in {\cal C}$ on at least $\eps 2^n$ inputs in $\zo^n$.

As defined and studied in \cite{CDHLNSY2024,CDHNSY25,CPPS25}, a \emph{relative-error} testing algorithm for ${\cal C}$ has oracle access to $\MQ(f)$ and also has access to a ``random satisfying assignment'' oracle which, when called, returns a uniform random element $\bx \sim f^{-1}(1)$.
An $\eps$-relative-error testing algorithm for ${\cal C}$ must output ``yes'' with high probability (say at least 2/3; again this can be easily amplified) if $f \in {\cal C}$ and must output ``no'' with high probability (again, say at least 2/3) if $\reldist(f,{\cal C}) \geq \eps$, where
$$\reldist(f,{\cal C}):=\min_{g \in {\cal C}}\hspace{0.05cm}\reldist(f,g)\ \quad\text{and}\ \quad 
\reldist(f,g) := {\frac {|f^{-1}(1) \ \triangle \ g^{-1}(1)|}{|f^{-1}(1)|}}.
$$
Thus, in the relative-error setting, the distance from $f$ to ${\cal C}$ is measured ``relative to the sparsity of $f$.''

The following easy observation says that relative distance doesn't change when we apply a permutation:
\begin{observation}\label[observation]{obs:permutation_dist}
    If $\pi:[n] \to [n]$ is a permutation, then $\reldist(f,g)=\reldist(f_\pi, g_\pi)$. 
\end{observation}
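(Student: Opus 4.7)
The plan is to unwind the definitions and use the fact that the map induced by $\pi$ on $\zo^n$ (sending $x \mapsto \pi(x)$, or more precisely $x \mapsto (x_{\pi(1)},\ldots,x_{\pi(n)})$) is a bijection on $\zo^n$, so it preserves cardinalities of arbitrary subsets.

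First I would identify the satisfying sets of the permuted functions. By the definition $f_\pi(x) = f(\pi(x))$, we have $f_\pi^{-1}(1) = \{x \in \zo^n : f(\pi(x)) = 1\}$, which is exactly the preimage of $f^{-1}(1)$ under the map $x \mapsto \pi(x)$. Denote this map by $\Pi: \zo^n \to \zo^n$; it is a bijection since $\pi$ is a permutation of $[n]$. Thus $f_\pi^{-1}(1) = \Pi^{-1}(f^{-1}(1))$ and likewise $g_\pi^{-1}(1) = \Pi^{-1}(g^{-1}(1))$.

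Next I would compute the two quantities appearing in $\reldist$. Since $\Pi$ is a bijection, $|f_\pi^{-1}(1)| = |\Pi^{-1}(f^{-1}(1))| = |f^{-1}(1)|$. Furthermore, taking preimages under any function commutes with symmetric difference, so
\[
f_\pi^{-1}(1) \,\triangle\, g_\pi^{-1}(1) \;=\; \Pi^{-1}(f^{-1}(1)) \,\triangle\, \Pi^{-1}(g^{-1}(1)) \;=\; \Pi^{-1}\!\bigl(f^{-1}(1) \,\triangle\, g^{-1}(1)\bigr),
\]
and again because $\Pi$ is a bijection the cardinality of this set equals $|f^{-1}(1) \triangle g^{-1}(1)|$. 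Dividing the two equalities gives $\reldist(f_\pi, g_\pi) = \reldist(f, g)$ directly from the definition of $\reldist$.

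There is essentially no obstacle here: the entire content is the observation that a coordinate permutation is a bijection on $\zo^n$, and both the numerator and denominator of $\reldist$ are invariant under any such bijection applied to both functions simultaneously. The only small care needed is to make sure the direction of the preimage under $\Pi$ is tracked correctly, but this does not affect the cardinalities used in the final ratio.
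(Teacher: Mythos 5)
Your proof is correct. The paper states this observation without proof (treating it as immediate), and your argument — that the coordinate permutation induces a bijection $\Pi$ on $\zo^n$ with $f_\pi^{-1}(1)=\Pi^{-1}(f^{-1}(1))$, so both the symmetric difference in the numerator and the set in the denominator have their cardinalities preserved — is exactly the intended justification.
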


We will need the following two easy lemmas about relative distance:

\begin{restatable}[Approximate symmetry of relative distance]{lemma}{approxsymmetry}\label[lemma]{lem:approx-symetric}
    Let $f,g:\{0,1\}^n \to \{0,1\}$ be functions  such that $\reldist(f,g)\leq \epsilon$ where $\epsilon \leq 1/2$. Then $\reldist(g,f) \leq 2\epsilon$.
\end{restatable}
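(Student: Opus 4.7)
The plan is to unpack both sides of the inequality to their set-theoretic definitions and exploit the symmetry of the symmetric difference. Writing $A := f^{-1}(1)$ and $B := g^{-1}(1)$, the hypothesis is $|A \triangle B| \le \eps |A|$ and the goal is $|A \triangle B| \le 2\eps |B|$. Since the numerator $|A \triangle B|$ is the same on both sides, everything reduces to comparing the denominators $|A|$ and $|B|$.

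The key step is to lower bound $|B|$ in terms of $|A|$. I will use the fact that $B \supseteq A \setminus (A \triangle B)$, which gives
\[
|B| \;\ge\; |A| - |A \triangle B| \;\ge\; |A| - \eps |A| \;=\; (1-\eps)|A|.
\]
Here the second inequality uses the hypothesis $|A \triangle B| \le \eps |A|$.

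Now I combine the two bounds. Dividing the hypothesis $|A \triangle B| \le \eps|A|$ by $|B|$ and using $|B| \ge (1-\eps)|A|$, I get
\[
\reldist(g,f) \;=\; \frac{|A \triangle B|}{|B|} \;\le\; \frac{\eps |A|}{(1-\eps)|A|} \;=\; \frac{\eps}{1-\eps}.
\]
Finally, because $\eps \le 1/2$ we have $1-\eps \ge 1/2$, so $\eps/(1-\eps) \le 2\eps$, which is the desired bound.

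There is no real obstacle here; the only subtle point worth highlighting is why $\eps \le 1/2$ is needed: without it, $|B|$ could in principle be as small as $(1-\eps)|A|$, which would make $\reldist(g,f)$ blow up as $\eps \to 1$. The hypothesis $\eps \le 1/2$ is exactly what converts the $\eps/(1-\eps)$ ratio into the clean multiplicative factor $2\eps$. One minor edge case to mention is the possibility $|A| = 0$, in which case $\reldist(f,g)$ is conventionally either $0$ or undefined; since the hypothesis $\reldist(f,g) \le \eps$ is meaningful, one may assume $|A| > 0$, and then $|B| > 0$ as well whenever $\eps < 1$.
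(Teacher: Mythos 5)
Your proof is correct and follows essentially the same route as the paper's: both lower-bound $|g^{-1}(1)|$ by $|f^{-1}(1)| - |f^{-1}(1)\,\triangle\,g^{-1}(1)|$ and then compare denominators, the only cosmetic difference being that you keep the sharper intermediate bound $\eps/(1-\eps)$ and invoke $\eps\le 1/2$ at the end, while the paper uses $\eps\le 1/2$ up front to get $|g^{-1}(1)|\ge |f^{-1}(1)|/2$ directly. No gaps.
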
 
\begin{proof}
Since $\reldist(f,g)\leq \epsilon$ we have that 
$$ \frac{|f^{-1}(1) \triangle g^{-1}(1)|}{|f^{-1}(1)|} \leq \epsilon \leq \frac{1}{2}.$$
which means we must have that $|g^{-1}(1)| \geq \frac{|f^{-1}(1)|}{2}$. Hence $$\reldist(g,f)=\frac{|f^{-1}(1) \triangle g^{-1}(1)|}{|g^{-1}(1)|}\leq \frac{|f^{-1}(1) \triangle g^{-1}(1)|}{2|f^{-1}(1)|} \leq  2\reldist(f,g) \leq 2\epsilon. $$
\end{proof}

\begin{restatable}[Approximate triangle inequality for relative distance]{lemma}{approxtriangleinequality}\label[lemma]{lem:approx-triangle-ineq}
    Let $f,g,h:\{0,1\}^n \to \{0,1\}$ be such that $\reldist(f,g)\leq \epsilon$ and $\reldist(g,h)\leq \epsilon'$. Then $\reldist(f,h) \leq \epsilon+(1+\epsilon)\epsilon'$.
\end{restatable}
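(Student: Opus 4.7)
The plan is to reduce the claim to the ordinary triangle inequality for the symmetric difference of sets, together with a bound relating $|g^{-1}(1)|$ to $|f^{-1}(1)|$ that follows from the first hypothesis.

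First I would apply the standard set-theoretic inequality $|A \triangle C| \leq |A \triangle B| + |B \triangle C|$ with $A = f^{-1}(1)$, $B = g^{-1}(1)$, $C = h^{-1}(1)$, and then divide through by $|f^{-1}(1)|$ to obtain
\[
\reldist(f,h) \;\leq\; \reldist(f,g) \;+\; \frac{|g^{-1}(1) \triangle h^{-1}(1)|}{|f^{-1}(1)|}.
\]
The first term on the right is at most $\epsilon$ by hypothesis. For the second term, I would multiply and divide by $|g^{-1}(1)|$ to rewrite it as $\reldist(g,h) \cdot (|g^{-1}(1)|/|f^{-1}(1)|)$, which is at most $\epsilon' \cdot (|g^{-1}(1)|/|f^{-1}(1)|)$.

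It remains to bound the ratio $|g^{-1}(1)|/|f^{-1}(1)|$ using the first hypothesis. Since $|g^{-1}(1) \setminus f^{-1}(1)| \leq |f^{-1}(1) \triangle g^{-1}(1)| \leq \epsilon |f^{-1}(1)|$, we get $|g^{-1}(1)| \leq |f^{-1}(1)| + \epsilon |f^{-1}(1)| = (1+\epsilon)|f^{-1}(1)|$, so the ratio is at most $1+\epsilon$. Combining everything yields $\reldist(f,h) \leq \epsilon + (1+\epsilon)\epsilon'$, as desired.

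There is essentially no obstacle here; the only subtlety is to avoid the symmetric version of the triangle inequality, which would give a worse constant. The bound $|g^{-1}(1)| \leq (1+\epsilon)|f^{-1}(1)|$, rather than the tighter-looking but symmetrically worse $|g^{-1}(1)| \leq |f^{-1}(1)|/(1-\epsilon)$ one might be tempted to derive via \Cref{lem:approx-symetric}, is what produces the clean factor $(1+\epsilon)$ in the final bound.
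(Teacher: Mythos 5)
Your proposal is correct and follows essentially the same route as the paper: both use the subadditivity of the symmetric difference (the paper phrases it pointwise, via $f(z)\neq h(z)$ implying $f(z)\neq g(z)$ or $g(z)\neq h(z)$), renormalize the middle term by $|g^{-1}(1)|$, and bound $|g^{-1}(1)|\leq(1+\epsilon)|f^{-1}(1)|$ directly from the first hypothesis. Your closing remark about avoiding the symmetric lemma to keep the clean $(1+\epsilon)$ factor matches what the paper does.
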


\begin{proof}
    Since $\reldist(f,g)\leq \epsilon$ we have that $${\frac {|f^{-1}(1) \ \triangle \ g^{-1}(1)|}{|f^{-1}(1)|}} \leq \eps,$$ which means we must have $|g^{-1}(1)| \leq (1+\epsilon)|f^{-1}(1)|$. Hence
    \begin{align*}
        \reldist(f,h) &= {\frac {|f^{-1}(1) \ \triangle \ h^{-1}(1)|}{|f^{-1}(1)|}}\\
        &=\frac{1}{|f^{-1}(1)|}\left| \{ z \::\: f(z) \neq h(z) \}\right| \\
        & \leq \frac{1}{|f^{-1}(1)|}\left| \{ z \::\: f(z) \neq g(z) \lor g(z) \neq h(z) \}\right| \\
        & \leq \frac{1}{|f^{-1}(1)|}\left| \{ z \::\: f(z) \neq g(z) \}\right| + \frac{1}{|f^{-1}(1)|} \left| \{z :  g(z) \neq h(z) \}\right| \\
        & \leq  \reldist(f,g) + (1+\epsilon) \frac{1}{|g^{-1}(1)|} \left| \{z : g(z) \neq h(z) \}\right| \\
        &= \reldist(f,g) + (1+\epsilon) \reldist(g,h) \\
        &\leq \epsilon+(1+{\epsilon})\epsilon'. 
    \end{align*}
\end{proof}

We will use the following one-sided algorithm for junta testing in the standard setting \cite{Blaisstoc09}:

\begin{theorem}\label{thm: uniform junta blais}
    There exists a one-sided adaptive algorithm, $\textsc{UniformJunta}(f,k,\epsilon , \delta)$, for $\epsilon$-testing $k$-juntas that
makes $O((k/\epsilon +k\log k) \log(1/\delta))$ queries. The algorithm always accepts $f$ if it is a $k$-junta
  and rejects $f$ with probability at least $1-\delta$ if it is $\epsilon$-far from
every $k$-junta with respect to the uniform distribution.
\end{theorem}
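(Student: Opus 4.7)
The plan is to appeal to the block-based adaptive tester of Blais, and to spell out why its complexity has the stated form. First I would randomly partition $[n]$ into $r = \Theta(k^2)$ disjoint blocks $X_1,\dots,X_r$; a second-moment/birthday-style argument shows that with probability at least (say) $9/10$, no block contains more than one of the (at most $k$) relevant variables of any $k$-junta that $f$ could be close to. Conditioned on this good event, "finding a relevant variable" becomes equivalent to "finding a block that is relevant to $f$," and having at most $k$ relevant blocks is exactly the condition we want to certify.

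Next I would describe the core \emph{detector} subroutine. Draw $\bx,\by\sim\{0,1\}^n$ uniformly and a uniformly random subset $\bB$ of the blocks currently marked "irrelevant"; define $\bz$ to agree with $\by$ on variables in $\bB$ and with $\bx$ elsewhere, and query $f(\bx),f(\bz)$. A disagreement certifies that some block in $\bB$ contains a relevant variable of $f$. A standard binary search over $\bB$, adaptively querying hybrids between $\bx$ and $\bz$, isolates one such block using $O(\log r) = O(\log k)$ additional queries. Whenever a block is isolated, mark it relevant; immediately reject if more than $k$ blocks have been marked relevant. Otherwise, after $T$ detector rounds, accept. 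One-sidedness is automatic: a true $k$-junta can never produce more than $k$ relevant blocks (under the good partition event), and all queries whose answers are used for rejection are actual inputs witnessing non-junta behavior, so every rejection is justified by a concrete witness.

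For the completeness and soundness counts, if $f$ is $\epsilon$-far from every $k$-junta under the uniform distribution, then after the algorithm has marked any set $J$ of at most $k$ blocks as relevant, the best $k$-junta depending only on variables in $J$ is still $\Omega(\epsilon)$-far from $f$, so a standard influence/noise-sensitivity calculation (as in Blais's paper) shows that one detector round discovers a new relevant block with probability $\Omega(\epsilon)$. Thus $T = O(k/\epsilon)$ detector rounds suffice to discover more than $k$ relevant blocks (and hence reject) with constant probability, with each round costing $O(1)$ queries for the detector plus $O(\log k)$ queries for the binary search, yielding $O(k/\epsilon + k\log k)$ queries overall for constant error probability.

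Finally, to drive the failure probability below $\delta$, run the whole tester $O(\log(1/\delta))$ independent times and reject iff any copy rejects; since the tester is one-sided, this preserves perfect completeness and amplifies soundness via a standard Chernoff bound, giving the advertised $O((k/\epsilon + k\log k)\log(1/\delta))$ total query complexity. The main obstacle in the argument is the "$\Omega(\epsilon)$ discovery probability per round" step: it requires quantitatively relating the uniform-distribution distance of $f$ from the class of $k$-juntas restricted to the currently known relevant variables to the probability that a random hybrid $(\bx,\bz)$ exposes a new relevant block, which is exactly the technical heart of Blais's analysis and is what we would cite/invoke here.
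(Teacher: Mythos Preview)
The paper does not prove this theorem; it is quoted as a known result from \cite{Blaisstoc09} and used as a black box throughout. So there is no ``paper's own proof'' to compare against---your sketch is essentially a summary of the cited result, and as such is appropriate in spirit.

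Two small remarks on the sketch itself. First, one-sidedness does not require the good-partition event: if $f$ is a $k$-junta it has at most $k$ relevant variables, so at most $k$ blocks can ever be certified relevant \emph{regardless} of how the partition falls, and the tester can never reject. The birthday bound on the partition is only needed in the soundness analysis (so that relevant blocks correspond to single relevant variables). Second, your detector samples a \emph{random subset} $\bB$ of the currently irrelevant blocks, whereas Blais's tester rerandomizes \emph{all} currently irrelevant blocks at once; the latter is what directly yields the $\Omega(\epsilon)$ per-round discovery probability via the ``closest junta on the known blocks'' argument you allude to. Your variant can also be made to work, but the clean $\Omega(\epsilon)$ bound you invoke is stated for the all-blocks version, so if you were writing this out you would want to match the detector to the lemma you cite.
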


A key subroutine our algorithm will use is binary search over ``blocks'' of variables to find new relevant blocks. In more detail, let $X_{1}, \ldots X_{r}$ be a partition of $[n]$ (we will frequently refer to a set $X_i$ as a ``block''). Fix some $I \subseteq [r]$, and let $X=\cup_{\ell \in I} X_\ell$. Given $u \in \zo^n$ and $w \in \zo^{\overline{X}}$ such that $f(u) \neq f(u_X \circ w)$, it is straightforward, using binary search, to identify an $X_\ell, \ell \not \in I$, and a string $v \in \{0,1\}^n$ with $f(v) \neq f(w_{X_\ell} \circ v_{\overline{X_\ell}})$, using $O(\log r)$ queries to $f$.
The (simple and standard) idea is as follows: Let $R=[r] \setminus I$ be  indices of blocks outside $X$, and let $a=u, b=u_X \circ w$. We repeat the following until $|R|=1$:
\begin{quote}
Let $R'$ be the first half of $R$, and $Y=\cup_{\ell \in R'} X_\ell$.
Query $c=u_{[n] \setminus Y} \circ w_{Y}$ (this is $u$ with blocks in $S'$ set to $w$). If $f(a) \neq f(c)$, we set $R=R'$ and $b=c$. Else we set $R=R \setminus R'$ and $a=c$. 
\end{quote}
(See Subroutine 2 of \cite{LCSSX19} for an detailed implementation of the above sketch.)

\section{Soundness and completeness analysis of the junta subclass tester:  \newline Proof of \Cref{thm:junta-subclass-main} and \Cref{cor:subclass}}
\label{sec:end-of-proof}
 {Throughout \Cref{sec:end-of-proof} , we assume that $0 < \epsilon < 1/2$.} 
\subsection{The analysis when $f$ is a $k$-junta}\label{subsec:yescase}
Throughout this subsection we assume that $f:\{0,1\}^n\rightarrow \{0,1\}$ is a $k$-junta in $\calC(k)$,  and let
 $J$ be its set
  of relevant variables with $|J|\le k$.
The main result of this subsection is \Cref{thm:yes-case}, which establishes that \Cref{alg:junta-subclass-tester} accepts in this case with probability at least $2/3.$
  
\subsubsection{Phases 1 and 2}

We start by noting that most likely, every block $\bX_\ell$ 
  contains at most one variable in $J$:

\begin{lemma}\label[lemma]{lem: one var per block}
In Phase 1, with probability at least $1-1/20$, we have $|\bX_i\cap J|\le 1$ for all $i\in [r]$.
\end{lemma}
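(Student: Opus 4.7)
\medskip

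\noindent\textbf{Proof plan.} The plan is to prove this by a straightforward birthday-style union bound over pairs of relevant variables. I will interpret the random partition as placing each of the $n$ variables independently and uniformly into one of the $r$ blocks (this is the standard reading of ``Randomly partition $[n]$ into $r$ blocks''); the argument below only uses that for any two distinct variables $i,j$, the probability that they land in the same block is exactly $1/r$.

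First, I would fix the set $J$ of relevant variables of $f$, with $|J|\le k$, and observe that the event ``some block $\bX_i$ contains more than one variable of $J$'' is the union, over unordered pairs $\{i,j\}\subseteq J$, of the events $E_{i,j} := \{\text{$i$ and $j$ lie in the same block}\}$. Each $E_{i,j}$ has probability $1/r$ since $i$ and $j$ are placed independently and uniformly at random into one of $r$ blocks. Next I would apply a union bound:
\[
\Pr[\exists\, i : |\bX_i\cap J|\ge 2] \;\le\; \binom{|J|}{2}\cdot \frac{1}{r} \;\le\; \binom{k}{2}\cdot \frac{1}{20k^2} \;\le\; \frac{k^2/2}{20k^2} \;=\; \frac{1}{40} \;\le\; \frac{1}{20}.
\]
Taking the complement gives that with probability at least $1-1/20$, every block contains at most one variable of $J$, which is the claim.

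There is essentially no obstacle here: the only thing to be careful about is the interpretation of the random partition and the constant in $r = 20k^2$, both of which are chosen precisely so that the union bound clears $1/20$ (in fact with room to spare). If one instead interpreted ``randomly partition'' as choosing a uniformly random ordered partition into $r$ nonempty parts, the pairwise collision probability would differ by lower-order terms and the same conclusion would follow; the $20k^2$ factor gives plenty of slack for any such reasonable interpretation.
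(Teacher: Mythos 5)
Your proof is correct and is essentially identical to the paper's: both fix a pair of relevant variables, note the collision probability is $1/r$, and apply a union bound over the at most $\binom{k}{2}$ pairs with $r = 20k^2$. The extra discussion about interpreting the random partition is harmless but not needed.
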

\begin{proof}
    Fix any $j_1,j_2\in J$. The probability that they lie in the same block is $1/r$. By a union bound and the assumption that $|J|\le k$, the probability of $|\bX_i\cap J|>1$ for some $i$ is at most $${k \choose 2} \cdot \frac{1}{r} \leq \frac{1}{20}. 
    $$
\end{proof}
From now on we fix the partition   
  $X_1,\ldots,X_r$ of $[n]$ and assume that $|X_i\cap J|\le 1$ for all $i\in [r]$. 
The main lemma for Phase 1 is as follows:

\begin{lemma}\label[lemma]{lemma: testSet-good}
Assume that $f\in \calC(k)$ and  $|X_i\cap J|\le 1$ for all $i$. Then \Cref{alg:junta-subclass-tester} always reaches Phase 3 and when this happens,  
  $f^\ell$ is a literal for every $\ell\in \bI$.
Moreover,
  with probability at least $1-1/20$, \Cref{alg:junta-subclass-tester} reaches Phase 3 with $\bI$ and $\bX$ satisfying the following condition:
\begin{equation}\label{eq:temp202}
\Prx_{\substack{\by \sim f^{-1}(1) \\ \bw \sim \{0,1\}^{\overline{{\bX}}}}}\big[f(\by_{\bX} \circ \bw)=0\big] \le \kappa/4.
\end{equation}
\end{lemma}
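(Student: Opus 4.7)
The plan is to establish the three claims separately, with the bulk of the work going into the probability bound on equation \eqref{eq:temp202}.

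I would start with the two deterministic claims. For the claim that the algorithm always reaches Phase 3, observe that whenever the binary search succeeds and adds a new block $\bX_\ell$ to $\bI$, the guarantee $f(v^\ell)\neq f(v^\ell_{\overline{\bX_\ell}}\circ \bw_{\bX_\ell})$ means $f$ is non-constant when restricted to $\bX_\ell$, so $\bX_\ell$ must contain at least one variable in $J$. Combining this with the hypothesis that $|X_i\cap J|\le 1$ for all $i$ and the fact that the blocks of $\bI$ are disjoint, we get $|\bI|\le |J|\le k$ throughout the run, so Phase 1 never rejects on the line ``reject if $|\bI|>k$''. The while loop exits after at most $T_2$ iterations without finding a new block, so the algorithm reaches Phase 3. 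For the claim that each $f^\ell$ is a literal, note that since $f$ depends only on variables in $J$ and $|\bX_\ell\cap J|\le 1$, the restriction $f^\ell : \{0,1\}^{\bX_\ell}\to\{0,1\}$ depends on at most one input; combined with non-constancy (witnessed by the binary search output), $f^\ell$ is a literal.

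For the probability bound, I would analyze the Phase 1 loop as a sequence of ``epochs'' corresponding to the at most $k+1$ distinct states the variable $\bX$ can take (one per value of $|\bI|\in\{0,1,\ldots,k\}$). Fix one such state $X$, and let $p(X)=\Prx_{\by,\bw}[f(\by_X\circ\bw)=0]$. Each iteration of the loop while in state $X$ is, conditional on the history, an independent Bernoulli trial with success probability $p(X)$, where ``success'' means triggering binary search (and moving to a larger state). If $p(X)>\kappa/4$, then the probability that we undergo $T_2$ consecutive failures while in $X$ is at most
\[
(1-p(X))^{T_2}\ \le\ e^{-\kappa T_2/4}.
\]
Plugging in $T_2=100\log(20k+1)\cdot T_1$ and $\kappa=1/(20T_1)$ yields $e^{-\kappa T_2/4}=(20k+1)^{-5/4}$ (after absorbing constants into $T_2$ if $\log$ is base~$2$).

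Finally, I would apply a union bound. The bad event -- that the algorithm reaches Phase 3 with a final $\bX$ violating \eqref{eq:temp202} -- requires settling into some state $X$ with $p(X)>\kappa/4$ and experiencing $T_2$ consecutive failures there. Since the algorithm visits at most $k+1$ distinct states along its trajectory, summing the conditional probability $e^{-\kappa T_2/4}$ over these states gives total failure probability at most $(k+1)(20k+1)^{-5/4}\le 1/20$, which can be verified by direct computation for $k=1$ and holds comfortably for larger $k$.

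The main subtlety -- and where I expect to have to be careful -- is making the union bound clean: one must argue that it suffices to union over the $k+1$ possible levels $|\bI|=0,1,\ldots,k$ rather than over all exponentially many subsets of blocks that $\bI$ could in principle equal. The right framing is to note that the algorithm's trajectory passes through at most $k+1$ states, and for each, the event ``$T_2$ consecutive failures at this state given $p>\kappa/4$'' has probability at most $e^{-\kappa T_2/4}$ conditional on reaching it, so the unconditional contribution of each level to the failure probability is bounded by the same quantity.
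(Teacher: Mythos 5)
Your proposal follows essentially the same route as the paper's proof: the deterministic claims come from the fact that each discovered block contains exactly one variable of $J$ (so each $f^\ell$ is a literal and $|\bI|\le k$), and the probability bound comes from observing that a state $\bX$ violating \eqref{eq:temp202} survives $T_2$ consecutive iterations with probability at most $(1-\kappa/4)^{T_2}\le(20k+1)^{-5/4}$, followed by a union bound over the at most $k+1$ states the algorithm visits. One concrete omission: to conclude that the algorithm ``always reaches Phase 3'' you must also check that Phase~2 --- which sits between Phase~1 and Phase~3 and contains two rejection points --- never rejects. This does follow from the fact you establish (each $f^\ell$ is a literal): $\textsc{UniformJunta}$ is one-sided and hence always accepts a $1$-junta, and $f^\ell(\bb)\neq f^\ell(\overline{\bb})$ holds for every $\bb$ when $f^\ell$ is a literal, so the check on line~11 never triggers. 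You should add this sentence; otherwise the argument is complete and matches the paper's.
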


\begin{proof}
    For each $\ell$ added to $\bI$, the algorithm has found a $v^{\ell}$ such that $f^\ell$ is not a constant function and thus, $X_\ell$ contains exactly one relevant variable in $J$ and $f^\ell$ must be a literal of that variable.
    
    Given that $|J|\le k$, $\bI$ can never be
    larger than $k$ and thus, \Cref{alg:junta-subclass-tester} always reaches Phase~2. Using \Cref{thm: uniform junta blais} and the fact that 
      $\smash{f^\ell(x)=\overline{f^\ell(\overline{x})}}$ for any string $x$ when $f^\ell$ is a literal,
      we have that \Cref{alg:junta-subclass-tester} always reaches Phase 3.

Now for \Cref{alg:junta-subclass-tester} to violate \Cref{eq:temp202} when
  reaching Phase 3, %
  it must be the case that it has drawn $\bu\sim f^{-1}(1)$ and $\smash{\bw\sim \{0,1\}^{\overline{\bX}}}$ for $T_2=(5\log(20k+1))/\kappa$ many times, and  $f(\bu_{\bX}\circ \bw) =1$ every time.
But given \Cref{eq:temp202}, the probability that this happens is at most 
    $$
    \big(1-\kappa/4\big)^{T_2}\le e^{-(5/4)\log(20k+1)}\le 1/(20k).
    $$
By a union bound (over the at most $k$ 
  different $\bX$'s that are in play across all executions of line {4})  
  the probability of  \Cref{alg:junta-subclass-tester}
  reaching Phase 3 while violating \Cref{eq:temp202} is at most $1/20$. 
\end{proof}

In the rest of the proof, we fix $I$ and $X$ and assume 
  the following:

\begin{assumption}\label[assumption]{assumption:good_case}
\Cref{alg:junta-subclass-tester} reaches Phase 3 
  with $I$ and $X$ that satisfies the following conditions:
\begin{flushleft}\begin{enumerate} 
\item For each $\ell\in I$, we have $|X_\ell\cap J|=1$, and 
  denote the unique element in $X_\ell\cap J$ by $\tau(\ell)$;
\item Every $f^\ell$, $\ell\in I$, is a literal that is  
  either $x_{\tau(\ell)}$ or $\overline{x_{\tau(\ell)}}$.
\item \Cref{eq:temp202} holds.
\end{enumerate}\end{flushleft}
\end{assumption}

Let $h=|I|$ and $\ell_1<\ldots<\ell_h$ be elements 
  in $I$.
Let $\sigma:[h]\rightarrow [n]$ be the 
  injective map defined by $\sigma(i)=\tau(\ell_i)$, and 
  let $S$ be $\{\tau(\ell):\ell\in I\}$ (or the image of $\sigma$) with $|S|=h$.

Let $\sfg:\{0,1\}^h\rightarrow \{0,1\}$ be the following $h$-variable function:
\begin{equation}\label{eq:defg}
\sfg(z):= \text{arg}\max_{b \in \{0,1\}}\left\{ \Prx_{\bw \sim \{0,1\}^{\overline{S}}} \big[f(\sigma(z) \circ \bw))=b\big]\right\}.
\end{equation} 
  To see that $\sfg \in \Apr(h,\kappa)$ whenever $\reldist(f, \sfg_{\sigma}) \leq \kappa$, recall that $J$ is the set of relevant variables of $f$. So let
  $\{\alpha_1, \ldots, \alpha_{{|J|-|I|}}\}$ be the elements of $J \setminus S$. And denote $[n] \setminus J$ as $\{\beta_1, \ldots, \beta_{{n-|J|}}\}$.
  Consider the permutation $\pi:[n] \to [n]$ defined as follow:
  $$
  \pi(j)=\begin{cases}
      \tau(\ell_j) \text{ if } 1 \leq j \leq |I| \\
      \alpha_{j-|I|} \text{ if } |I|+1 \leq j \leq |J|\\
      \beta_{j-|J|} \text{ if } |J|+1 \leq j \leq n
  \end{cases}
  $$
Since $\mathcal{C}(k)$ is closed under permutation $f_\pi \in \mathcal{C}(k)$, more specifically we have $f_{\pi} \in \mathcal{C}(k)^*$ (since $f_\pi$ only depends on the first $|J|$ variables). So letting $\sigma':[h] \to [n]$ be the injective map such that $\sigma'(i)=i$. We have $(\sfg_\sigma)_\pi=\sfg_{\sigma'}$, we have 
$$
\sfg_{\sigma'}(z)= \text{arg}\max_{b \in \{0,1\}}\left\{ \Prx_{\bw \sim \{0,1\}^{n-h}} \big[f_\pi(z_{[h]} \circ \bw)=b\big]\right\}.
$$
By \Cref{obs:permutation_dist} we have $\reldist(f_\pi,\sfg_{\sigma'})=\reldist(f,\sfg_\sigma) \leq \kappa$. Finally since $\sfg(z)=\sfg_{\sigma'}(z \circ 0^{n-h})$, by the construction of $\Apr(h,\kappa)$, we indeed have that $\sfg\in \Apr(h,\kappa)$. 

The lemma below shows that condition 3 in \Cref{assumption:good_case} 
  (\Cref{eq:temp202}) implies $\reldist(f,\sfg_\sigma)\le \kappa$.

\begin{lemma}\label[lemma]{lem:apple}
If $\reldist(f,\sfg_\sigma)> \kappa$, then we have 
    $$\Prx_{\substack{\bu \sim f^{-1}(1) \\ \bw \sim \{0,1\}^{\overline{X}}}}\big[f(\bu_X \circ \bw)=0\big] > \kappa/4.$$
\end{lemma}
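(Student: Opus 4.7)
\textbf{Proof plan for \Cref{lem:apple}.} I will prove the contrapositive: assuming
\[
\Prx_{\bu \sim f^{-1}(1),\ \bw \sim \{0,1\}^{\overline{X}}}\big[f(\bu_X\circ\bw)=0\big]\le \kappa/4,
\]
I will show $\reldist(f,\sfg_\sigma)\le \kappa/2\le \kappa$. The setup is the following ``slice decomposition.'' By \Cref{assumption:good_case}, for every $\ell\in I$ we have $X_\ell\cap J=\{\tau(\ell)\}$, so $J\cap X = S$ and hence $J\setminus S\subseteq \overline{X}$. Because $f$ depends only on $J$, the quantity $\Pr_{\bw\sim\{0,1\}^{\overline{X}}}[f(u_X\circ \bw)=1]$ depends only on $u_S$; call it $p_s$ for $s\in\{0,1\}^S$. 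From the definition of $\sfg$ in \Cref{eq:defg} and the fact that $f$ is independent of coordinates in $\overline{S}\setminus(J\setminus S)$, we have $\sfg_\sigma(u)=1\iff p_{u_S}\ge 1/2$. Let $A:=\{s:p_s\ge 1/2\}$.

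The next step is a direct counting for both sides. Fixing $u_S=s$, the number of $u\in\{0,1\}^n$ with $f(u)=1$ equals $p_s\cdot 2^{n-h}$ (since $f$ is independent of $u$-coordinates in $\overline{S}\setminus(J\setminus S)$, and only a $p_s$-fraction of $J\setminus S$ assignments give $f=1$). Therefore
\[
|f^{-1}(1)| = 2^{n-h}\sum_s p_s,\qquad \sum_{u:f(u)=1}\big(1-p_{u_S}\big)=2^{n-h}\sum_s p_s(1-p_s),
\]
so the hypothesis rewrites as
\[
\frac{\sum_s p_s(1-p_s)}{\sum_s p_s} \;\le\; \frac{\kappa}{4}.
\]
Similarly, on the slice $u_S=s$ the symmetric difference $f^{-1}(1)\triangle \sfg_\sigma^{-1}(1)$ contributes $p_s\cdot 2^{n-h}$ points if $s\notin A$ (where $\sfg_\sigma\equiv 0$) and $(1-p_s)\cdot 2^{n-h}$ points if $s\in A$ (where $\sfg_\sigma\equiv 1$). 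Setting $\mu_s:=\min(p_s,1-p_s)$, we get
\[
\reldist(f,\sfg_\sigma) \;=\; \frac{\sum_s \mu_s}{\sum_s p_s}.
\]

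The final step is the pointwise inequality $\mu_s \le 2 p_s(1-p_s)$ for all $s$. If $p_s\le 1/2$ then $\mu_s=p_s$ and $p_s\le 2p_s(1-p_s)$ reduces to $1\le 2(1-p_s)$, which is exactly $p_s\le 1/2$. If $p_s\ge 1/2$ then $\mu_s=1-p_s$ and $(1-p_s)\le 2p_s(1-p_s)$ reduces to $1\le 2p_s$, which is $p_s\ge 1/2$. Summing over $s$ and combining with the two identities above gives
\[
\reldist(f,\sfg_\sigma)\;\le\; \frac{2\sum_s p_s(1-p_s)}{\sum_s p_s} \;\le\; 2\cdot \frac{\kappa}{4}\;=\;\frac{\kappa}{2},
\]
establishing the contrapositive. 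The only real ``obstacle'' is bookkeeping: isolating the single scalar $p_s$ per slice and checking that the $X\setminus S$ coordinates are irrelevant to $f$ (which is what makes $p_s$ well-defined and lets the counting close); everything after that is the elementary inequality $\min(p,1-p)\le 2p(1-p)$.
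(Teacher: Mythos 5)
Your proof is correct, and it takes a genuinely different (and in fact sharper) route than the paper's. The paper argues the implication directly via a case split on which half of the symmetric difference $F \triangle G$ (with $F = f^{-1}(1)$, $G = \sfg_\sigma^{-1}(1)$, $N = |F|$) has size at least $\kappa N/2$: when $|F\setminus G|$ is large it lower-bounds the probability by $(\kappa/2)\cdot(1/2)$, using that $\sfg_\sigma(\bu)=0$ forces $\Pr_{\bw}\big[f(\bu_S\circ\bw)=0\big]\ge 1/2$ by the majority-vote definition of $\sfg$; when $|G\setminus F|$ is large it shows each point of $G\setminus F$ is realized as $\bu_S\circ\bw$ with probability at least $1/(2N)$ and sums. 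You instead prove the contrapositive by slicing along $u_S$, computing \emph{both} the rejection probability and $\reldist(f,\sfg_\sigma)$ exactly, as $\sum_s p_s(1-p_s)\big/\sum_s p_s$ and $\sum_s \min(p_s,1-p_s)\big/\sum_s p_s$ respectively, and closing with the pointwise inequality $\min(p,1-p)\le 2p(1-p)$. Both arguments rest on the same two underlying facts --- the coordinates in $X\setminus S$ are irrelevant to $f$, and $\sfg$ is the per-slice majority vote --- but your version unifies the paper's two cases into a single identity and yields the quantitatively stronger conclusion $\reldist(f,\sfg_\sigma)\le 2\Pr\big[f(\bu_X\circ\bw)=0\big]$, i.e., a constant of $2$ where the paper settles for $4$. (The only caveat, which is immaterial, is the tie-breaking in the argmax defining $\sfg$ when $p_s=1/2$; either convention gives a slice contribution of exactly $\min(p_s,1-p_s)$, so your key inequality is unaffected.)
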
 
We delay the proof to \Cref{sec:apple}. 
We can now replace condition 3 in \Cref{assumption:good_case}
  by
\begin{flushleft}\begin{enumerate}
\item[$3'$.] The function $\sfg$ defined in \Cref{eq:defg} 
  satisfies $\reldist(f,\sfg_\sigma)\le \kappa$ and is in $\Apr(h,\kappa).$
\end{enumerate}\end{flushleft}

\subsubsection{Phases 3 and 4}

Assume that \Cref{alg:junta-subclass-tester} reaches
  Phase 3 with $I$ and $X$ satisfying \Cref{assumption:good_case} with condition~$3'$.
We first show that
$\sfg\in \Apr(h,\kappa)$ survives the loop on
  lines 14--20 with high probability:

\begin{lemma}\label[lemma]{lem: trimSet not empty}
Under \Cref{assumption:good_case}, with probability at least $1-1/20$, $\sfg$ remains in $\bA$ when \Cref{alg:junta-subclass-tester} reaches line 21. %
\end{lemma}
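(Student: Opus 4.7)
The plan is to exploit two facts that follow directly from \Cref{assumption:good_case}: first, each $f^\ell$ is \emph{exactly} a literal, so the $\RelV$ subroutine behaves deterministically; second, $\sfg$ is close to $f$ in relative distance under the map $\sigma$, so $\sfg$ almost always agrees with $f$ on a random positive example. Combining these shows $\sfg$ survives each iteration of the trimming loop with probability at least $1-\kappa$, and a union bound over $T_1$ iterations yields the desired $1/20$ failure probability.

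In more detail, fix any single iteration of the loop on lines 14--20, which draws $\bu \sim f^{-1}(1)$ and sets $\bv_i \leftarrow \RelV(f^{\ell_i},\bu_{\bX_{\ell_i}},1/(20k))$ for $i \in [h]$. By condition 2 of \Cref{assumption:good_case}, $f^{\ell_i}$ is the literal $x_{\tau(\ell_i)}$ or $\overline{x_{\tau(\ell_i)}}$, and hence part 2 of \Cref{lem: FindBlockValue: bad} guarantees that $\RelV$ returns $\bu_{\tau(\ell_i)} = \bu_{\sigma(i)}$ with probability $1$. Therefore $\bv = \sigma^{-1}(\bu)$ deterministically, and by the notational convention from \Cref{sec:some-notation} we have $\sfg(\bv) = \sfg(\sigma^{-1}(\bu)) = \sfg_\sigma(\bu)$. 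Consequently, $\sfg$ is removed in this iteration only if $\sfg_\sigma(\bu) = 0$, and since $f(\bu) = 1$ this forces $\bu \in f^{-1}(1) \setminus \sfg_\sigma^{-1}(1) \subseteq f^{-1}(1) \triangle \sfg_\sigma^{-1}(1)$. By condition $3'$ of \Cref{assumption:good_case},
\[
\Prx_{\bu \sim f^{-1}(1)}[\sfg(\bv) = 0] \;\le\; \frac{|f^{-1}(1)\,\triangle\,\sfg_\sigma^{-1}(1)|}{|f^{-1}(1)|} \;=\; \reldist(f,\sfg_\sigma) \;\le\; \kappa.
\]

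Finally, a union bound over the $T_1$ iterations of the loop gives that $\sfg$ is removed from $\bA$ at any point with probability at most $T_1 \cdot \kappa = T_1 \cdot \frac{1}{20 T_1} = \frac{1}{20}$, so $\sfg$ remains in $\bA$ when the algorithm reaches line 21 with probability at least $1-1/20$, as claimed. There is no real obstacle here: the argument is a direct consequence of (i) the exactness of the literals in the yes-case (making $\RelV$ error-free, so that no extra slack has to be budgeted for subroutine failure), and (ii) the fact that the parameter $\kappa = 1/(20 T_1)$ was chosen precisely so that the per-round removal probability times $T_1$ is bounded by $1/20$.
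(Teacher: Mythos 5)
Your proof is correct and follows essentially the same route as the paper's: you use the exactness of the literals $f^{\ell_i}$ in the yes-case to get $\bv=\sigma^{-1}(\bu)$ deterministically via part 2 of \Cref{lem: FindBlockValue: bad}, bound the per-iteration removal probability by $\reldist(f,\sfg_\sigma)\le\kappa$, and conclude with a union bound over the $T_1=1/(20\kappa)$ iterations. The only difference is that you spell out the inclusion $f^{-1}(1)\setminus\sfg_\sigma^{-1}(1)\subseteq f^{-1}(1)\triangle\sfg_\sigma^{-1}(1)$, which the paper leaves implicit.
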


\begin{proof}
By \Cref{lem: FindBlockValue: bad},  the $\bv$ on line 19 is always $\sigma^{-1}(\bu)$ with $\bu\sim f^{-1}(1)$ and thus $\sfg(\bv)=\sfg_\sigma(\bu)$.
Given that  $\reldist(f,\sfg_\sigma)\le \kappa$ by \Cref{assumption:good_case}, 
  we have that each iteration of the loop removes $\sfg$ from $\bA$ with probability at most $\kappa$.
    Hence, by a union bound over all $T_1=1/(20\kappa)$ loops, the probability that $\sfg$ is removed is at most $T_1\kappa = 1/20$.
\end{proof}

\begin{lemma}\label[lemma]{lem: functions in S are good}
Under \Cref{assumption:good_case}, with probability at least $1-1/20$, on line 21 every 
  $g\in \bA$ 
  has
    $$\Prx_{\bu \sim f^{-1}(1)}\big[g_\sigma(\bu) =0\big] \leq \epsilon/500. $$
\end{lemma}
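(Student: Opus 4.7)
The plan is to combine a per-iteration survival probability bound with a union bound over $|\Apr(h,\kappa)|$ candidate functions. Call a function $g\in \Apr(h,\kappa)$ \emph{bad} if $\Pr_{\bu\sim f^{-1}(1)}[g_\sigma(\bu)=0] > \epsilon/500$; we want to show that with probability at least $1-1/20$ no bad function survives the loop on lines 14--20.

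First, I would observe that under \Cref{assumption:good_case}, condition 2 tells us that each $f^{\ell}$ is \emph{exactly} a literal. Therefore, by part 2 of \Cref{lem: FindBlockValue: bad}, every call to $\RelV$ on line 19 returns the correct value, so in each iteration the string $\bv$ is exactly $\sigma^{-1}(\bu)$, and hence $g(\bv)=g_\sigma(\bu)$ for every $g\in \bA$. This means the per-iteration removal probability for a bad $g$ is precisely $\Pr_{\bu \sim f^{-1}(1)}[g_\sigma(\bu)=0]>\epsilon/500$.

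Next, since the $T_1$ iterations draw independent samples $\bu\sim f^{-1}(1)$, the probability that a fixed bad $g$ survives all $T_1$ iterations is at most
\[
(1-\epsilon/500)^{T_1} \;\le\; \exp\!\pbra{-\epsilon T_1/500}.
\]
Recalling $T_1=O(\log|\calC(k)^*|/\epsilon)$, I would choose the hidden constant so that $\epsilon T_1/500 \ge \ln(20\,|\calC(k)^*|)$, making the above bound at most $1/(20\,|\calC(k)^*|)$.

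Finally, I would apply a union bound. By \Cref{labelfact}, $|\Apr(h,\kappa)|\le |\calC(k)^*|$, so the probability that some bad $g\in \Apr(h,\kappa)$ survives is at most $|\calC(k)^*|\cdot \tfrac{1}{20|\calC(k)^*|}=1/20$. This is exactly the desired conclusion. The argument is essentially routine; the only ``obstacle'' is making sure that the constant in the definition of $T_1$ is chosen appropriately so that the union bound can absorb the factor of $|\calC(k)^*|$, and noting that $\RelV$ is deterministic on literals so there is no additional failure probability to account for in this lemma.
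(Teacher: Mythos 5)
Your proof is correct and follows essentially the same route as the paper's: since each $f^{\ell}$ is an exact literal under \Cref{assumption:good_case}, $\RelV$ deterministically yields $\bv=\sigma^{-1}(\bu)$, so a bad $g$ survives each iteration with probability at most $1-\epsilon/500$, and the $(1-\epsilon/500)^{T_1}\le 1/(20|\calC(k)^*|)$ bound combined with \Cref{labelfact} and a union bound gives the claim. No gaps.
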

\begin{proof}
By 
\Cref{lem: FindBlockValue: bad},  the $\bv$ on line 19 is always $\sigma^{-1}(\bu)$ with $\bu\sim f^{-1}(1)$ and thus $g(\bv)=g_\sigma(\bu)$.
Let $g$ be any function in $\Apr(h,\kappa)$ with
$$
\Prx_{\bu \sim f^{-1}(1)}\big[g_\sigma(\bu)=0\big] \geq \epsilon/500.
$$ 
Each iteration of the loop removes $g$ with probability at least $\eps/500$ and thus,
  $g$ survives with probability at most
$ 
(1-\eps/500)^{T_1}.
$ 
Given the choice of 
$$
T_1=O\left(\frac{\log |\calC(k)^* |}{\eps}\right),
$$
by making the hidden constant large enough, $g$ survives with
  probability at most $1/(20|\calC(k)^*|)$. So the lemma 
  follows from \Cref{labelfact} and a union bound over all functions $g$.
\end{proof}
Hence we get the following corollary about the function $\frakg$ passed down to Phase 4:
 
\begin{corollary}\label[corollary]{cor:rd_fA_f_small}
Under  \Cref{assumption:good_case}, with probability at least $1-1/10$, 
  \Cref{alg:junta-subclass-tester} reaches Phase 4 with $\frakg\in \Apr(h,\kappa)$ that satisfies
\begin{equation}\label{eq:temp99}\Prx_{\substack{\bz \sim {\frakg}^{-1}(1)\\ \bw\sim \{0,1\}^{\overline{S}}}}\big[f(\sigma(\bz)\circ \bw)=0\big] \leq \epsilon/400.
\end{equation}
\end{corollary}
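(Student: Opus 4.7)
The plan is to combine \Cref{lem: trimSet not empty} and \Cref{lem: functions in S are good} by a union bound, and then convert the ``one-sided'' bound from \Cref{lem: functions in S are good} into the ``reverse-direction'' bound demanded by the statement, using the minimality of $|\frakg^{-1}(1)|$. By the union bound applied to the two failure events in those lemmas, with probability at least $1-1/10$ the following both hold: (i) $\sfg$ (as defined in \Cref{eq:defg}) survives the trimming loop and remains in $\bA$ at line 21, so in particular $\bA \neq \emptyset$ and $\frakg$ is well-defined at the start of Phase 4; and (ii) every surviving $g \in \bA$ (in particular $\frakg$) satisfies $\Pr_{\bu \sim f^{-1}(1)}[g_\sigma(\bu)=0] \leq \epsilon/500$. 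I will assume both of these events for the rest of the argument.

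Next I would reformulate the target probability. Since $\frakg_\sigma$ depends only on coordinates in $S$, drawing $\bz \sim \frakg^{-1}(1)$ and an independent $\bw \sim \{0,1\}^{\overline{S}}$ produces a uniform sample $\sigma(\bz)\circ \bw$ from $\frakg_\sigma^{-1}(1)$. Consequently the probability in \Cref{eq:temp99} equals $|\frakg_\sigma^{-1}(1) \setminus f^{-1}(1)|\big/|\frakg_\sigma^{-1}(1)|$. Event (ii) only bounds $|f^{-1}(1)\setminus \frakg_\sigma^{-1}(1)|$, so the main task is to argue that $|\frakg_\sigma^{-1}(1)|$ itself is not much larger than $|f^{-1}(1)|$; this is where the minimality of $|\frakg^{-1}(1)|$ enters. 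Since $\sfg \in \bA$ and $\frakg$ was chosen to minimize $|\frakg^{-1}(1)|$, we have $|\frakg^{-1}(1)| \leq |\sfg^{-1}(1)|$, and hence $|\frakg_\sigma^{-1}(1)| \leq |\sfg_\sigma^{-1}(1)|$. Condition~$3'$ of \Cref{assumption:good_case} gives $\reldist(f,\sfg_\sigma)\leq \kappa$, which implies $|\sfg_\sigma^{-1}(1)| \leq (1+\kappa)|f^{-1}(1)|$, and therefore $|\frakg_\sigma^{-1}(1)| \leq (1+\kappa)|f^{-1}(1)|$. On the other hand, event (ii) applied to $\frakg$ gives $|\frakg_\sigma^{-1}(1)\cap f^{-1}(1)|\geq (1-\epsilon/500)|f^{-1}(1)|$, which in particular lower-bounds $|\frakg_\sigma^{-1}(1)|$.

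Putting the two bounds together, $|\frakg_\sigma^{-1}(1)\setminus f^{-1}(1)| = |\frakg_\sigma^{-1}(1)| - |\frakg_\sigma^{-1}(1)\cap f^{-1}(1)| \leq (\kappa + \epsilon/500)|f^{-1}(1)|$, while $|\frakg_\sigma^{-1}(1)|\geq (1-\epsilon/500)|f^{-1}(1)|$, so the desired ratio is at most $(\kappa+\epsilon/500)/(1-\epsilon/500)$. Because $\kappa = 1/(20T_1)$ with $T_1 = \Theta(\log|\calC(k)^*|/\epsilon)$, taking the hidden constant in $T_1$ sufficiently large makes $\kappa$ an arbitrarily small multiple of $\epsilon$ (say $\kappa \leq \epsilon/2000$), which suffices to bring the ratio below $\epsilon/400$. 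The only nontrivial step is the asymmetric switch from $f^{-1}(1)\setminus \frakg_\sigma^{-1}(1)$ (which \Cref{lem: functions in S are good} controls) to $\frakg_\sigma^{-1}(1)\setminus f^{-1}(1)$ (which the corollary demands); the leverage for this comes exactly from $\sfg$'s survival in $\bA$ together with $\frakg$'s minimality, so these two ingredients must be used together.
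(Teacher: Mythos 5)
Your proof is correct and follows essentially the same route as the paper's: a union bound over \Cref{lem: trimSet not empty} and \Cref{lem: functions in S are good}, the upper bound $|\frakg_\sigma^{-1}(1)|\le|\sfg_\sigma^{-1}(1)|\le(1+\kappa)|f^{-1}(1)|$ from $\frakg$'s minimality and $\sfg$'s survival, the lower bound $|\frakg_\sigma^{-1}(1)|\ge(1-\eps/500)|f^{-1}(1)|$ from the trimming guarantee, and the final ratio bound using that $\kappa$ is a sufficiently small multiple of $\eps$. The only addition is that you make explicit the (correct) identification of the probability in \Cref{eq:temp99} with $|\frakg_\sigma^{-1}(1)\setminus f^{-1}(1)|/|\frakg_\sigma^{-1}(1)|$, which the paper uses implicitly.
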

\begin{proof}
Assume that the conclusions of both \Cref{lem: trimSet not empty} and 
  \Cref{lem: functions in S are good} hold.
So $\frakg$ is picked and satisfies 
\begin{equation} \label{eq:likeable-equation}
|\frakg_\sigma^{-1}(1)|\le |\sfg_\sigma^{-1}(1)|\quad\text{and}\quad
\frac{|f^{-1}(1)\setminus \frakg_\sigma^{-1}(1)|}{|f^{-1}(1)|}\le \eps/500.
\end{equation}
Let $N=|f^{-1}(1)|$. Since $\reldist(f,\sfg_\sigma)\le \kappa$, we have that
$$
|\frakg_\sigma^{-1}(1)|\le |\sfg_\sigma^{-1}(1)|\le (1+\kappa)N.
$$
On the other hand, from the second part of \Cref{eq:likeable-equation} we have that 
$$
|\frakg_\sigma^{-1}(1)|\ge |f^{-1}(1)\cap \frakg_\sigma^{-1}(1)|\ge (1-\eps/500)N.
$$
As a result, the probability on the left-hand side of \Cref{eq:temp99} is 
$$
\frac{|\frakg_\sigma^{-1}(1)\setminus f^{-1}(1)|}{|\frakg_\sigma^{-1}(1)|}
\le \frac{(1+\kappa)N-(1-\eps/500)N}{(1-\eps/500)N}\le \eps/400,
$$
where we used the choice that $\kappa$ can be made sufficiently small
  compared to $\eps$ by making the hidden constant in $T_1$ large enough. 
\end{proof}

Finally we show that Phase 4 accepts with high probability:
\begin{lemma}\label[lemma]{lem: Verify accepts w.h.p}
Under \Cref{assumption:good_case} and assuming that  
  \Cref{alg:junta-subclass-tester} reaches Phase 4 with a function $\frakg$ that satisfies \Cref{eq:temp99},
then  \Cref{alg:junta-subclass-tester} rejects in Phase 4 with probability at most $1/20$.
\end{lemma}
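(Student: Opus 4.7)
The plan is to reduce rejection in Phase 4 to a direct application of \Cref{eq:temp99} via a union bound, using the distributional identity implicit in the definition of $y_{\sigma,z}$.

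First I would observe that, under \Cref{assumption:good_case}, each $f^\ell$ with $\ell\in I$ is an exact literal (either $x_{\tau(\ell)}$ or $\overline{x_{\tau(\ell)}}$), not merely close to one. Consequently, the second clause of \Cref{lem:mapback} applies, and every call to $\MapBack$ in Phase 4 returns exactly $\by_{\sigma,\bz}$ (not just with high probability). So I may replace $\bu$ by $\by_{\sigma,\bz}$ everywhere in the analysis of Phase 4 without incurring any error.

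Next I would verify the key distributional identity announced in \Cref{sec:overview}: if $\by\sim \{0,1\}^n$ and $\bz\in \{0,1\}^h$ is any fixed string, then $\by_{\sigma,\bz}$ is distributed identically to $\sigma(\bz)\circ \bw$ where $\bw\sim \{0,1\}^{\overline{S}}$. This is essentially by construction:  on blocks $X_\ell$ with $\ell\notin I$, $\by_{\sigma,\bz}$ just equals $\by$, which is uniform on those coordinates; on each block $X_{\ell_i}$ with $\ell_i\in I$, the definition of $\by_{\sigma,\bz}$ forces the coordinate $\tau(\ell_i)=\sigma(i)$ to equal $z_i$, and (conditioned on this) leaves the other coordinates of the block uniformly distributed, since flipping the whole block on the ``wrong-bit'' event is a measure-preserving bijection on $\{0,1\}^{X_{\ell_i}}$. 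Since this identity holds pointwise in $\bz$, it also holds jointly when $\bz\sim \frakg^{-1}(1)$ independently of $\by$.

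Combining these two points, the probability that a single iteration of Phase 4 rejects is
\[
\Prx_{\substack{\by\sim \{0,1\}^n\\ \bz \sim \frakg^{-1}(1)}}\big[f(\by_{\sigma,\bz})=0\big]
\;=\;\Prx_{\substack{\bz\sim \frakg^{-1}(1)\\ \bw\sim \{0,1\}^{\overline{S}}}}\big[f(\sigma(\bz)\circ \bw)=0\big]
\;\le\; \eps/400,
\]
where the final inequality is precisely \Cref{eq:temp99}. Since Phase 4 performs $20/\eps$ such iterations independently, a union bound yields a total rejection probability of at most $(20/\eps)\cdot (\eps/400) = 1/20$, completing the proof.

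There is no real technical obstacle here; the only step that requires any care is confirming the distributional identity for $\by_{\sigma,\bz}$, but this is a direct unpacking of the definition of $y_{\sigma,z}$ given at the end of \Cref{sec:overview}. Everything else is a clean invocation of \Cref{lem:mapback}(2), \Cref{eq:temp99}, and a union bound.
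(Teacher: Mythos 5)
Your proof is correct and follows essentially the same route as the paper's: invoke \Cref{lem:mapback} (which under \Cref{assumption:good_case} gives the exact, not merely high-probability, guarantee since each $f^\ell$ is a genuine literal), use the distributional identity between $\by_{\sigma,\bz}$ and $\sigma(\bz)\circ\bw$, and then union-bound $20/\eps$ iterations each rejecting with probability at most $\eps/400$ by \Cref{eq:temp99}. Your write-up is just a more explicit unpacking of the paper's terse argument, including a correct verification of the measure-preserving flip that underlies the distributional identity.
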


\begin{proof}
By \Cref{lem:mapback} and the observation made in the overview, 
   $\bu$ on line 24 has the same distribution as 
  $\sigma(\bz)\circ \bw$ where $\smash{\bw\sim \{0,1\}^{\overline{S}}}$.
As a result, \Cref{alg:junta-subclass-tester} rejects with probability
  at most $1/20$ by a union bound over the $20/\eps$ many loops.
\end{proof}

\subsubsection{Putting the pieces together}

We summarize the above step-by-step analysis with the following theorem:

\begin{theorem} \label{thm:yes-case}
    If $f \in \mathcal{C}(k)$, then \Cref{alg:junta-subclass-tester} accepts $f$ with probability at least $2/3$.
\end{theorem}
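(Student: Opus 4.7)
The plan is to assemble the theorem directly from the chain of lemmas developed earlier in \Cref{subsec:yescase} by a union bound over a few bad events, each of which is controlled by a lemma with a $1/20$ or $1/10$ failure probability. No new technical work is needed; the main task is to verify that the ``good events'' chain together logically so that \Cref{assumption:good_case} (with condition $3'$) is in force when Phases~3 and~4 are analyzed, and that the failure probabilities sum to less than $1/3$.

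First I would apply \Cref{lem: one var per block} to conclude that with probability at least $1-1/20$ over the random partition $\bX_1,\ldots,\bX_r$, every block contains at most one relevant variable of $f$; from this point on I may condition on the partition and treat $X_1,\ldots,X_r$ as fixed with $|X_i \cap J|\le 1$. Next, I invoke \Cref{lemma: testSet-good}: under the above conditioning, \Cref{alg:junta-subclass-tester} always reaches Phase~3, each $f^\ell$ with $\ell\in \bI$ is a literal (so the calls to $\textsc{UniformJunta}$ and the literal check in Phase~2 always pass), and with probability at least $1-1/20$, the sets $I,X$ at the entry of Phase~3 satisfy \Cref{eq:temp202}. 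Applying \Cref{lem:apple} then upgrades this to condition~$3'$, namely $\reldist(f,\sfg_\sigma)\le \kappa$ and $\sfg\in \Apr(h,\kappa)$, putting us in \Cref{assumption:good_case} with condition~$3'$.

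With \Cref{assumption:good_case} in force, I would apply \Cref{cor:rd_fA_f_small}, which bundles \Cref{lem: trimSet not empty} and \Cref{lem: functions in S are good}: with probability at least $1-1/10$, Phase~3 finishes with $\bA\ne \emptyset$ and selects a function $\frakg$ satisfying \Cref{eq:temp99}. Finally, \Cref{lem: Verify accepts w.h.p} guarantees that, conditioned on this $\frakg$, Phase~4 accepts except with probability at most $1/20$.

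Collecting the failure probabilities by a union bound, the overall probability that \Cref{alg:junta-subclass-tester} rejects a function $f\in \calC(k)$ is at most
\[
\frac{1}{20} + \frac{1}{20} + \frac{1}{10} + \frac{1}{20} = \frac{1}{4} < \frac{1}{3},
\]
which yields acceptance probability at least $2/3$ as claimed. The only subtlety to be careful about is bookkeeping: at each step I must make sure the event I am conditioning on is precisely the one required as a hypothesis of the next lemma (in particular, that \Cref{lem:apple} lets us swap \Cref{eq:temp202} for condition~$3'$, and that the ``$\frakg$ satisfies \Cref{eq:temp99}'' event produced by \Cref{cor:rd_fA_f_small} is exactly the hypothesis of \Cref{lem: Verify accepts w.h.p}). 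There is no genuine obstacle here beyond this bookkeeping.
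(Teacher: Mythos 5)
Your proposal is correct and follows essentially the same route as the paper: the paper's own proof of \Cref{thm:yes-case} is exactly this union bound over \Cref{lem: one var per block}, \Cref{lemma: testSet-good}, \Cref{cor:rd_fA_f_small}, and \Cref{lem: Verify accepts w.h.p}, with the same total failure probability of $1/4 < 1/3$. Your extra care in noting that \Cref{lem:apple} converts \Cref{eq:temp202} into condition~$3'$ matches the bookkeeping the paper does just before its Phase~3/4 analysis.
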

\begin{proof}
    By \Cref{lem: one var per block} and \Cref{lemma: testSet-good}, with probability at least $1-1/10$ 
    \Cref{alg:junta-subclass-tester} reaches Phase 3 with $I,X$ satisfying 
\Cref{assumption:good_case}.
    When this happens, by \Cref{cor:rd_fA_f_small} and \Cref{lem: Verify accepts w.h.p}, \Cref{alg:junta-subclass-tester} passes through Phases 3 and 4 and 
    accepts with probability at least $1-3/20$.
    Thus, the probability that the algorithm rejects is at most $1/10 + 3/20 < 1/3$.
\end{proof}

\subsection{The analysis when $f$ is far from $\calC(k)$.}\label{subsec:nocase}
Throughout this subsection we assume that $f$ is $\eps$-far from every function in $\calC(k)$ in relative distance,
and we show in \Cref{thm:no-case} that \Cref{alg:junta-subclass-tester} rejects $f$ with probability at least 2/3.

\subsubsection{Phases 1 and 2}

Fix any partition $X_1,\ldots,X_\ell$ of $[n]$.
Clearly $|\bI|\le k$ when \Cref{alg:junta-subclass-tester} reaches Phase 2; otherwise $f$ would have been rejected in Phase 1. 

The lemma below shows that it is unlikely for \Cref{alg:junta-subclass-tester} to reach Phase 3 if
  any $f^\ell$, $\ell\in \bI$, is far from every literal under the uniform distribution:

\begin{lemma}\label[lemma]{lem:all_f^l_are_close}
The probability of \Cref{alg:junta-subclass-tester} reaching
  Phase 3 with $\bI$ such that $f^\ell$ for some $\ell\in \bI$ is 
  $(1/30)$-far from every literal under the uniform distribution is at most $1/15$.
\end{lemma}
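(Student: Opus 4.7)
\medskip
\noindent\textbf{Proof proposal.}

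The plan is to fix a single $\ell\in\bI$, bound the probability that $f^\ell$ survives both Phase~2 tests under the hypothesis that $f^\ell$ is $(1/30)$-far from every literal under the uniform distribution, and then finish by a union bound over $|\bI|\le k$ (recall Phase~1 already rejects whenever $|\bI|>k$). This reduces the lemma to a per-$\ell$ analysis of the two Phase~2 checks: the call $\textsc{UniformJunta}(f^\ell,1,1/15,1/30)$, and the ``anti-correlation'' check that draws $\bb\sim\{0,1\}^{\bX_\ell}$ and rejects when $f^\ell(\bb)=f^\ell(\overline{\bb})$.

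I would split the per-$\ell$ analysis into two cases based on where $f^\ell$ sits relative to the class of $1$-juntas on $\bX_\ell$ (the four literals $x_i,\overline{x_i}$ for $i\in\bX_\ell$ together with the constants $0,1$). First, if $f^\ell$ is $(1/15)$-far from every $1$-junta, then by \Cref{thm: uniform junta blais} the \textsc{UniformJunta} call rejects with probability at least $1-1/30$. Second, if $f^\ell$ is $(1/15)$-close to some $1$-junta $h$, then because $f^\ell$ is $(1/30)$-far from every literal I would argue $h$ must be a constant; once $h$ is a constant, both $\bb$ and $\overline{\bb}$ are marginally uniform, so a union bound gives
\[
\Prx_{\bb\sim\{0,1\}^{\bX_\ell}}\big[f^\ell(\bb)=f^\ell(\overline{\bb})=h\big]\;\ge\;1-\tfrac{2}{15}\;=\;\tfrac{13}{15},
\]
so the anti-correlation check rejects with probability at least $13/15$.

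Combining the two cases, for each $\ell$ the probability of surviving both Phase~2 checks is bounded by a small absolute constant; amplifying the $\delta$ parameters of the two tests to be $O(1/k)$ (at a cost of only an extra $\log k$ factor per test) then suffices for the final union bound over the $|\bI|\le k$ indices to close at the claimed $1/15$ overall bound.

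The hard part will be closing the arithmetic gap between the $1/30$ distance threshold in the lemma's hypothesis and the $1/15$ accuracy used inside \textsc{UniformJunta}. Concretely, the Case~2 argument leans on the implication ``$f^\ell$ is $(1/15)$-close to some $1$-junta and $(1/30)$-far from every literal $\Rightarrow$ that $1$-junta must be a constant'', which is clean only when the two thresholds coincide; I plan to handle this by invoking \textsc{UniformJunta} with accuracy parameter $1/30$ (rather than $1/15$), so that the intermediate sliver where $f^\ell$ could be $(1/15)$-close to a literal at distance in $(1/30,1/15]$ disappears. With that tightening in place, the two-case analysis above together with the final union bound finishes the argument.
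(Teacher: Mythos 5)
Your two-case decomposition is exactly the paper's: either $f^\ell$ is $(1/30)$-far from every $1$-junta, in which case \textsc{UniformJunta} rejects except with probability $1/15$, or $f^\ell$ is $(1/30)$-close to a $1$-junta which (being $(1/30)$-far from every literal) must be a constant, in which case the $\bb$ vs.\ $\overline{\bb}$ check fails to reject with probability at most $2\cdot(1/30)=1/15$. Your diagnosis of the $1/15$-versus-$1/30$ threshold mismatch is also on target, and your resolution (run the case split at accuracy $1/30$) is the one the paper's own proof implicitly uses — its proof invokes \textsc{UniformJunta} with accuracy $1/30$ and confidence $1/15$, which is what makes ``close to a $1$-junta but far from every literal $\Rightarrow$ close to a constant'' go through and also makes both per-case failure probabilities come out to exactly $1/15$.

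The genuine gap is your last step. A union bound over the $\le k$ indices of $\bI$, patched by amplifying the tests' confidence parameters to $O(1/k)$, is both the wrong direction and not available to you: the lemma is a statement about \Cref{alg:junta-subclass-tester} as written, with its fixed constants, so you cannot re-parameterize the subroutine calls inside the proof (and doing so would also perturb the query-complexity accounting). More importantly, no union bound over $\ell$ is needed. The bad event is that the algorithm \emph{reaches Phase 3} while \emph{some} $f^\ell$ is $(1/30)$-far from every literal. Condition on the outcome of Phase 1; if it produced at least one such bad index $\ell^*$, then reaching Phase 3 requires in particular that the two Phase 2 checks for $\ell^*$ both fail to reject, which by the per-index analysis happens with probability at most $1/15$. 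Since this holds for every Phase-1 outcome containing a bad index, the unconditional probability of the bad event is at most $1/15$ — you only need \emph{one} bad block to trigger rejection, not all of them, so summing failure probabilities over $\ell$ only weakens the bound. With that correction (and the $1/30$ tightening you already proposed, which upgrades your Case 2 rejection probability from $13/15$ to $14/15$), your argument closes.
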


\begin{proof}
    If $f^{\ell}$ is $1/30$-far from every literal with respect to the uniform distribution then it is either \begin{enumerate}
        \item $1/30$-far from every $1$-Junta,
        \item or $1/30$ far from every literal and $1/30$ close to a $0$-Junta (i.e.~the constant-0 or constant-1 function).
    \end{enumerate}
     In case $1$, by \Cref{thm: uniform junta blais}, with probability at least $14/15$, $\textsc{UniformJunta}(f^\ell, 1, 1/30, 1/15$) rejects, in which case \Cref{alg:junta-subclass-tester} rejects. In case $2$,  $f^{\ell}$ is $(1/30)$-close to a constant function; say w.l.o.g.~that it is close to the constant-$0$ function. The probability we fail to reject on line $11$ is :

    \begin{align*}
        \Prx_{\bb \sim \{0,1\}^{X_\ell}}\big[f^\ell(\bb) \neq f^\ell(\overline{\bb})\big] %
         \leq \Prx_{\bb \sim \{0,1\}^{X_\ell}}\big[ f^\ell(\overline{\bb}) =1\big]  + \Prx_{\bb \sim \{0,1\}^{X_\ell}}\big[f^\ell(\bb)=1\big]  
         \leq 1/15,
    \end{align*}
    where the last inequality follows from the fact $f^{\ell}$ is $(1/30)$-close to the constant-$0$ function under the uniform distribution and $\bb$, $\overline{\bb}$ are both sampled uniformly at random.
\end{proof}

Fix $I$ and $X$ with $|I|=h\le k$, and
let $\ell_1 <\cdots< \ell_{h}$ be the elements of $I$.
We make the following assumption in the rest of this subsection:

\begin{assumption}
\label[assumption]{assumption:bad_case}
\Cref{alg:junta-subclass-tester} reaches Phase 3 with $I$ and $X$ satisfying the following condition: 
\begin{enumerate}\item[]For every $\ell\in I$, $f^{\ell}$ is $(1/30)$-close to a literal $x_{\tau(\ell)}$ or $\overline{x_{\tau(\ell)}}$ under the uniform distribution. \end{enumerate}
\end{assumption}

Similar to the previous subsection, we let 
  $\sigma:[h]\rightarrow [n]$ be the injective map with
  $\sigma(i)=\tau(\ell_i)$ for each $i\in [h]$, and 
  let $S=\{\tau(\ell):\ell\in I\}$ with $|S|=h$. 

\subsubsection{Phases 3 and 4}

Under \Cref{assumption:bad_case}, we show in the next lemma that most likely, every function $g\in \Apr(h,\kappa)$ that survives in $\bA$ {up to line~21} satisfies that $g(\sigma^{-1}(\bu))=0$ with low probability when $\bu\sim f^{-1}(1)$:

\begin{lemma}\label[lemma]{lem:trim_bad_case}
Under \Cref{assumption:bad_case}, with probability at least $1-1/20$, when \Cref{alg:junta-subclass-tester} reaches {line 21}
 every $g\in \bA $ satisfies%
    $$\Prx_{\bu \sim f^{-1}(1)}\big[g_\sigma(\bu)=0\big]\le  \epsilon/500.$$
\end{lemma}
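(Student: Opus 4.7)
The proof will mirror the yes-case argument (Lemma on ``functions in $S$ are good'' in \Cref{subsec:yescase}), but with one essential modification: under \Cref{assumption:bad_case} we only know that each $f^{\ell}$ is $(1/30)$-close to a literal (rather than being a literal exactly, as in \Cref{assumption:good_case}), so $\RelV$ may now fail. The plan is to show that its failure probability is small enough to be absorbed by the loop's repetition count $T_1$.

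First I would fix an arbitrary ``bad'' $g \in \Apr(h,\kappa)$, meaning one satisfying $\Pr_{\bu \sim f^{-1}(1)}[g_\sigma(\bu)=0] > \eps/500$, and bound the probability that $g$ survives all $T_1$ iterations of the loop on lines 14--20. In a single iteration we draw $\bu\sim f^{-1}(1)$ and then call $\RelV(f^{\ell_i},\bu_{\bX_{\ell_i}},1/(20k))$ for each $i\in [h]$. Under \Cref{assumption:bad_case}, each $f^{\ell_i}$ is $(1/30)$-close to the literal $x_{\tau(\ell_i)}$ or $\overline{x_{\tau(\ell_i)}}$ under the uniform distribution, so by the first bullet of \Cref{lem: FindBlockValue: bad} each call outputs $\bu_{\sigma(i)}$ with probability at least $1-1/(20k)$ over the internal randomness of $\RelV$, independently of $\bu$. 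A union bound over the $h\le k$ calls then gives, for every fixed $\bu=u$,
$$\Pr\big[\bv = \sigma^{-1}(u) \,\big|\, \bu=u\big] \ge 1 - h/(20k) \ge 19/20,$$
and hence
$$\Pr\big[g(\bv)=0 \,\big|\, \bu=u\big] \ge \tfrac{19}{20}\cdot \mathbbm{1}\big[g(\sigma^{-1}(u))=0\big].$$

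Taking expectation over $\bu \sim f^{-1}(1)$, a bad $g$ is removed in a single iteration with probability at least $(19/20)\cdot (\eps/500) \ge \eps/600$. Since the $T_1$ iterations are mutually independent, the probability that a bad $g$ survives all of them is at most $(1-\eps/600)^{T_1}$. By choosing the hidden constant in $T_1 = O(\log|\calC(k)^*|/\eps)$ sufficiently large, this is at most $1/(20|\calC(k)^*|)$. Applying \Cref{labelfact} (which gives $|\Apr(h,\kappa)|\le|\calC(k)^*|$) together with a union bound over all bad functions in $\Apr(h,\kappa)$ concludes that with probability at least $1-1/20$, no bad $g$ remains in $\bA$ at line 21.

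The only subtlety, and the sole place where the no-case analysis differs materially from the yes-case, is the handling of $\RelV$'s failure probability. I expect this to be essentially routine rather than a serious obstacle: the per-iteration failure probability $h/(20k) \le 1/20$ is a constant, which merely contributes a constant factor $19/20$ to the per-iteration removal probability and is absorbed by the $T_1$ repetitions. The assignment of $\delta = 1/(20k)$ in the algorithm was chosen precisely so that this union bound inside a single iteration remains bounded by a small constant, which is what makes the argument go through.
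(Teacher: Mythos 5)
Your proposal is correct and follows essentially the same route as the paper: reduce to the yes-case lemma, account for $\RelV$'s failure via a union bound over the at most $k$ calls (using $\delta=1/(20k)$) to get a per-iteration removal probability of at least $(19/20)\cdot\eps/500$ for any bad $g$, then amplify over $T_1$ iterations and union-bound over $|\Apr(h,\kappa)|\le|\calC(k)^*|$ functions. No gaps.
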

\begin{proof}
The proof is very similar to that of \Cref{lem: functions in S are good}.
The only difference is that it is not always the case that $\bv=\sigma^{-1}(\bu)$;
  by a union bound over the at most $k$ calls to $\RelV$ (as well~as the choice of $\delta=1/(20k)$), we have $\bv=\sigma^{-1}(\bu)$
  with probability at least $19/20$. Therefore,
  any $g$ that violates the inequaility above would be removed 
  with probability at least $(19/20) \eps/500$ in~each loop.
The rest of the proof is the same as that of \Cref{lem: functions in S are good}.
\end{proof}

Assuming the conclusion of \Cref{lem:trim_bad_case} holds, we have that either \Cref{alg:junta-subclass-tester} rejects because $\bA$ is empty at the end of Phase 3 (in which case we are done), or the function $\frakg \in \Apr(h,\kappa)$ passed down to Phase 4 satisfies  
$$\Prx_{\bu \sim f^{-1}(1)}\big[\frakg_\sigma (\bu) =0\big]\le   \epsilon/500.$$
Letting $N=|f^{-1}(1)|$, we have that
$$
 |f^{-1}(1)\setminus \frakg_\sigma^{-1}(1)|\le  \eps N/500.
$$
On the other hand, given that $\frakg\in \Apr(h,\kappa)$, by our assumption in this subsection that $\reldist(f,\calC(k) >\eps$ and \Cref{lem:simple2}
  (and the fact that $\kappa$ is sufficiently smaller than $\eps$), we have 
$
\reldist(f,\frakg_\sigma)\ge \eps/2
$,
and we hope to show that
\begin{equation}\label{eq:temp303}
\frac{|\frakg_\sigma^{-1}(1)\setminus f^{-1}(1)|}{|\frakg_\sigma^{-1}(1)|}\ge \eps/7.
\end{equation}
To this end, consider two cases.
(1) If $|\frakg_\sigma^{-1}(1)|\ge 2\cdot |f^{-1}(1)|$, then the ratio above is at least 
$$
\frac{|\frakg_\sigma^{-1}(1)|-|f^{-1}(1)|}{|\frakg_\sigma^{-1}(1)|}\ge \frac{|\frakg_\sigma^{-1}(1)|/2}{|\frakg_\sigma^{-1}(1)|}\ge 1/2 > \eps/7.
$$
(2) If $|\frakg_\sigma^{-1}(1)|< 2\cdot |f^{-1}(1)|$, then the ratio is at least
$$
\frac{|\frakg_\sigma^{-1}(1) \ \triangle \ f^{-1}(1)|-|f^{-1}(1)\setminus \frakg_\sigma^{-1}(1)|}{2N}\ge \frac{\eps N/2-\eps N/500}{2N}\ge \eps/7.
$$
Finally, we show that \Cref{alg:junta-subclass-tester} rejects in 
  Phase 4 with high probability:
\begin{lemma}\label[lemma]{lem:verify_bad_case}
   Under \Cref{assumption:bad_case}, assume that \Cref{alg:junta-subclass-tester} reaches Phase 4 with $\frakg$ satisfying \Cref{eq:temp303}. Then Phase 4 rejects with probability at least $0.8 $.
\end{lemma}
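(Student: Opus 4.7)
The plan is to mirror the analysis of \Cref{lem: Verify accepts w.h.p} from the yes-case, but now \emph{lower-}bounding the per-iteration rejection probability using \Cref{eq:temp303} in place of the small ratio \Cref{eq:temp99}.

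First I would analyze a single iteration of the loop in Phase~4. Under \Cref{assumption:bad_case}, every $f^{\ell}$ with $\ell\in I$ is $(1/30)$-close to a literal under the uniform distribution, so \Cref{lem:mapback} applies: for each fixed $y\in\{0,1\}^n$ and $z\in\{0,1\}^h$, the call $\MapBack(f,I,\ldots,y,z)$ returns $y_{\sigma,z}$ with probability at least $1-1/20$ over its internal randomness. Combined with the key distributional identity already noted in the overview --- namely, that for fixed $z$ the string $\by_{\sigma,z}$ with $\by\sim\{0,1\}^n$ is identically distributed to $\sigma(z)\circ\bw$ with $\bw\sim\{0,1\}^{\overline{S}}$ --- marginalizing over $\bz\sim\frakg^{-1}(1)$ shows that $\by_{\sigma,\bz}$ is uniform over $\frakg_\sigma^{-1}(1)$.

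Chaining these two observations, the probability that in a single iteration we have $\bu=\by_{\sigma,\bz}$ and $f(\by_{\sigma,\bz})=0$ is at least
$$(1-1/20)\cdot \Pr_{\bx\sim \frakg_\sigma^{-1}(1)}\big[f(\bx)=0\big]\ \geq\ \frac{19}{20}\cdot\frac{\eps}{7}\ \geq\ \frac{19\eps}{140},$$
where the inner inequality is precisely \Cref{eq:temp303}. Since the $20/\eps$ iterations in Phase~4 use independent samples $(\by,\bz)$ and independent $\MapBack$ randomness, the probability that none of them triggers a rejection is at most $(1-19\eps/140)^{20/\eps}\leq e^{-19/7}<0.2$, yielding the claimed lower bound of $0.8$ on the rejection probability.

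The only mild subtlety is that $\MapBack$'s success event depends on $\by$, since its internal calls to $\RelV$ read the blocks $\by_{\bX_{\ell_i}}$. However, the guarantee of \Cref{lem:mapback} is already pointwise in $(y,z)$, so the factor $(1-1/20)$ can be pulled outside the joint average over $\by$, $\bz$, and $\MapBack$'s randomness without any complication. Beyond this bookkeeping, the proof is essentially dual to the yes-case calculation of \Cref{lem: Verify accepts w.h.p}.
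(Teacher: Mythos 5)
Your proposal is correct and follows essentially the same route as the paper: it uses the distributional identity that $\by_{\sigma,\bz}$ is uniform over $\frakg_\sigma^{-1}(1)$, invokes \Cref{eq:temp303} for the $\eps/7$ bound, and accounts for the $19/20$ success probability of $\MapBack$; the only (immaterial) difference is that you combine the two factors per iteration before taking the product over the $20/\eps$ loops, whereas the paper first bounds the probability that some loop has $f(\by_{\sigma,\bz})=0$ and then applies the $\MapBack$ guarantee to that loop. Both yield a rejection probability comfortably above $0.8$.
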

\begin{proof}
Recall the observation in the overview that $\by_{\sigma,\bz}$
  has the same distribution as $\sigma(\bz)\circ \bw$ with
  $\smash{\bw\sim \{0,1\}^{\overline{S}}}$.
Hence the probability of $f(\by_{\sigma,\bz})=0$ for at least 
  one loop, by  \Cref{eq:temp303}, is at least
$$
1-(1-\eps/7)^{20/\eps}\ge 1-e^{-20/7}.
$$
For that loop, by \Cref{lem:mapback}, we have 
  $\bu=\by_{\sigma,\bz}$ with probability at least $19/20$,
  in which case $f(\bu)$ $=0$ and \Cref{alg:junta-subclass-tester} rejects.
Overall \Cref{alg:junta-subclass-tester} rejects with probability at least $$(19/20)(1-e^{-20/7})\ge 0.8 . $$ %

\end{proof}

\subsubsection{Putting the pieces together}
\begin{theorem} \label{thm:no-case}
    If $f:\zo^n \to \zo$ is $\epsilon$-far from every function in $\mathcal{C}(k)$ in relative distance, then \Cref{alg:junta-subclass-tester} rejects with probability at least $2/3$.
\end{theorem}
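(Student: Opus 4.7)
The plan is a short union bound that combines the three lemmas already established in this subsection, one for each pair of phases of \Cref{alg:junta-subclass-tester}. In the no-case there are exactly three ways the algorithm can fail to reject: (i) Phases~1--2 pass through to Phase~3 with a choice of $I,X$ for which some $f^\ell$ is actually $(1/30)$-far from every literal under the uniform distribution; (ii) Phase~3 reaches line~21 with some surviving $g\in \bA$ whose one-sided disagreement $\Prx_{\bu\sim f^{-1}(1)}[g_\sigma(\bu)=0]$ with $f$ is too large; or (iii) Phase~4 receives a suitably trimmed $\frakg$ but happens by bad luck to sample only points in $\frakg_\sigma^{-1}(1)\cap f^{-1}(1)$. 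Each of these has already been controlled individually above, so the final argument is essentially bookkeeping.

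Concretely, the plan is as follows. First invoke \Cref{lem:all_f^l_are_close}: with probability at least $1-1/15$, the algorithm either already rejects in Phases~1--2 (in which case we are done) or reaches Phase~3 with $I,X$ satisfying \Cref{assumption:bad_case}. Conditional on the latter, \Cref{lem:trim_bad_case} ensures that with probability at least $1-1/20$, every surviving $g\in \bA$ at line~21 satisfies $\Prx_{\bu\sim f^{-1}(1)}[g_\sigma(\bu)=0]\le \epsilon/500$. In this event either $\bA=\emptyset$ and the algorithm rejects immediately, or $\frakg$ itself enjoys this same bound; combined with the no-case hypothesis $\reldist(f,\calC(k))>\epsilon$ and \Cref{lem:simple2}, the two-case analysis written out just before \Cref{lem:verify_bad_case} then yields \Cref{eq:temp303}. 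Finally, \Cref{lem:verify_bad_case} gives that Phase~4 rejects with probability at least $4/5$. Union-bounding the three failure probabilities yields $1/15+1/20+1/5 = 19/60 < 1/3$, so the overall probability of rejection is at least $41/60 > 2/3$, as desired.

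The only step of this plan that is not pure bookkeeping is the passage from the trimming guarantee to \Cref{eq:temp303}: we must convert a bound on the fraction of $f^{-1}(1)$ killed by $\frakg_\sigma$ into a lower bound on the complementary fraction of $\frakg_\sigma^{-1}(1)$ killed by $f$. That derivation splits according to whether $|\frakg_\sigma^{-1}(1)|\ge 2|f^{-1}(1)|$ or not, and is exactly the place where the one-sided nature of sampling from $f^{-1}(1)$ in the relative-error model has to be handled with care --- this is the genuine structural difference from the distribution-free setting where the analogous step would be essentially free. Fortunately the argument is already spelled out in the discussion preceding \Cref{lem:verify_bad_case}, so it can be invoked off-the-shelf and the overall proof reduces to the clean union bound above.
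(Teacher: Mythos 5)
Your proposal is correct and matches the paper's proof, which is exactly the same union bound over the bad events of \Cref{lem:all_f^l_are_close}, \Cref{lem:trim_bad_case}, and \Cref{lem:verify_bad_case}, giving acceptance probability at most $1/15+1/20+1/5<1/3$. The intermediate derivation of \Cref{eq:temp303} from the trimming guarantee that you flag is indeed the two-case analysis the paper carries out between those lemmas, so nothing is missing.
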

\begin{proof}
By a union bound over the bad events in 
  \Cref{lem:all_f^l_are_close}, \Cref{lem:trim_bad_case}
  and \Cref{lem:verify_bad_case}, we have that 
  \Cref{alg:junta-subclass-tester} accepts with probability at most
  $1/15+1/20+0.2<1/3.$
\end{proof}

\subsection{Query complexity of \Cref{alg:junta-subclass-tester}}\label{subsec:querycomp}

Finally, we show that the query complexity of \Cref{alg:junta-subclass-tester}
  (the number of random samples drawn from $f^{-1}(1)$ and 
    membership queries on $f$) is  at most $\smash{\tilde{O}\pbra{{  {(k/\eps) \log |\calC(k)^*|} }}}$.

Recall the choice of $T_2$ in \Cref{alg:junta-subclass-tester}.
 In Phase~1, we repeat lines 3--{7} until either $|\bI| >k$~or $t=T_2$.
In one iteration of the loop, if the ``if'' condition on line {4} is true, then lines {5--7}~are executed which makes $O(\log r)=O(\log k)$ queries, and $|\bI|$ increases by one and $t$ is set to $0$.  
And if the ``if'' condition on line {4} is false, then line {4} only makes two queries and $|\bI|$ does not increase, but $t$ increases by one.
Therefore, the query complexity of this phase is 
$$O\pbra{k(T_2 + \log k)}= {\tilde{O}}\pbra{(k/\eps) \log|\calC(k)^*|}.$$ 
For Phase 2, given that $|\bI| \leq k$, the query complexity is $O(k)$ by %
\Cref{thm: uniform junta blais}. %

In Phase~3, given that the number of queries made by each call to $\RelV$ is $O(\log k)$ and $h\le k$,
  the query complexity is $$O\pbra{kT_1\log k}={\tilde{O}}\pbra{(k/\eps) \log|\calC(k)^*|}.$$ 
Given that $\MapBack$ makes $O(k\log k)$ queries, the
  query complexity of Phase 4 is $O((k/\eps)\log k)$.
 Putting things together, the query complexity of 
 \Cref{alg:junta-subclass-tester} is  $\smash{\tilde{O}\pbra{{ {(k/\eps) \log |\calC(k)^*|} }}}$ as claimed.

\subsection{Proof of \Cref{lem:apple}}\label{sec:apple}

\begin{proof}
Recall that in the context of \Cref{lem:apple}, $f$ is a $J$-junta, $S=X\cap J$, and each block of $X$ contains exactly one relevant variable. 
We consider the following equivalent way of drawing $\bu_X\circ \bw$ with $\bu\sim f^{-1}(1)$ and $\smash{\bw\sim \{0,1\}^{\overline{X}}}$: draw $\bu\sim f^{-1}(1)$ and $\smash{\bw\sim \{0,1\}^{\overline{S}}}$ and return $\bu_S\circ \bw$.
The two distributions are the same because every variable 
  in $X\setminus S$ is irrelevant. As a result, 
 \begin{equation}\label{eq:temp404}
  \Prx_{\substack{\bu \sim f^{-1}(1) \\ \bw \sim \{0,1\}^{\overline{X}}}}
  \big[f(\bu_{X} \circ \bw)=0\big] =      \Prx_{\substack{\bu \sim f^{-1}(1) \\ \bw \sim \{0,1\}^{\overline{S}}}}\big[f(\bu_{S} \circ \bw)=0\big]. 
    \end{equation} 

Let $N=|f^{-1}(1)|$, $F=f^{-1}(1)$ and $G=\sfg_\sigma^{-1}(1)$.
Given that $|F \ \triangle \  G|\ge \kappa N$, we
  consider the two cases of $|F\setminus G|>\kappa N/2$
    and $|G\setminus F|>\kappa N/2.$
    
For the case when $|F\setminus G|>\kappa N/2$, we have
$$\Prx_{\substack{\bu \sim f^{-1}(1)\\ \bw \sim \zo^{\overline{S}} }}\big[f(\bu_{S} \circ \bw)=0\big] \geq 
\Prx_{\bu \sim f^{-1}(1) } \big[\bu \in F\setminus G\big] \cdot \Prx_{\substack{\bu \sim f^{-1}(1)\\ \bw \sim \zo^{\overline{S}}} }\big[f(\bu_{S} \circ \bw)=0 \hspace{0.08cm}|\hspace{0.08cm} \bu \in F\setminus G \big].$$
The first probability on the RHS is at least $\kappa/2$;
  the second probability on the RHS is at least $1/2$ given the definition of
  $\sfg$ from $f$ and the condition that $\bu\in F\setminus G$ and thus, $\sfg_\sigma(\bu)=0$.

For the second case when $|G \setminus F| \geq \kappa N/2$, consider any fixed $a \in G \setminus F$. Since $a \in G$, we have $$\Prx_{\bw \sim \zo^{\overline{S}}}\big[f(a_{S} \circ \bw)=1\big] \geq 1/2$$ and thus there must exist $2^{|\overline{S}|}/2$ many $y \in F$ with $y_{S}=a_S$. So 
$$\Prx_{\substack{\bu \sim f^{-1}(1)\\ \bw \sim \{0,1\}^{\overline{S}}}}\big[\bu_{S} \circ \bw = a\big] \geq \Prx_{\bu \sim f^{-1}(1)}\big[\bu_{S} = a_{S}\big] \cdot \Prx_{\bw \sim \zo^{\overline{S}}}\big[\bw=a_{\overline{S}}\big] \geq \frac{{}2^{|\overline{S}|}}{2|N|}\times 2^{-|\overline{S}|} \geq \frac{1}{2|N|}.$$
As a result, we have $$\Prx_{\substack{\bu \sim f^{-1}(1)\\ \bw \sim \{0,1\}^{\overline{S}}}}\big[f(\bu_{S} \circ \bw)=0\big] \geq \sum_{a \in G \setminus F} \left(\Prx_{\substack{\bu \sim f^{-1}(1)\\ \bw \sim \{0,1\}^{\overline{S}}}}\big[\by_{S} \circ \bw=a\big]\right) \geq \frac{\kappa|N|}{2} \cdot \frac{1}{2|N|} = \frac{\kappa}{4}.$$
This finishes the proof of the lemma.
\end{proof}

\subsection{Applications of \Cref{thm:junta-subclass-main}:  Proof of \Cref{cor:subclass}}\label{sec:subclassapplication}

For the class $\calC(k)$ of size-$k$ decision trees, we observe that every size-$k$ decision tree is a $k$-junta and we use the fact that $|\calC(k)^*|$, the number of size-$k$ decision trees over $x_1,\dots,x_k$, is at most $k^{O(k)}$ (see \cite{DLM+:07} for the simple proof).  The claimed testing result follows immediately by applying \Cref{thm:junta-subclass-main}. 
A similar argument is used for the class of size-$k$ branching programs, using the fact that there are at most $k^{O(k)}$ size-$k$ branching programs, and likewise for the class of size-$k$ Boolean formulas; again see \cite{DLM+:07} for the simple counting arguments giving these upper bounds. (Here we are using the standard definition that the size of a Boolean formula is the number of leaves.) %

Finally, we note that \Cref{thm:junta-subclass-main} can be applied to obtain efficient relative-error testing algorithms for a number of other subclasses of juntas that were studied in the uniform-distribution setting; as detailed in \cite{Bshouty20}, these include $s$-term monotone $r$-DNF, $s$-term unate $r$-DNF, length-$k$-decision list, parities of length at most $k$, conjunctions of length at most $k$, $s$-sparse polynomials over $\F_2$ of degree $d$, and functions with Fourier degree at most $d$. We leave the straightforward application of \Cref{thm:junta-subclass-main} to these classes as an exercise for the interested reader.

\section{Testing Juntas with Relative Error:  Proof of \Cref{thm:junta-main}} \label{sec:juntas}

In this section we present a two-sided $\tilde{O}(k/\eps)$-tester for $k$-juntas under relative distance. {We assume that $0<\epsilon<1/2$}. 

Before presenting the tester, we prove a lemma (\Cref{lem:keyjunta})
  essential for both the design and analysis of our tester. 
We start with a relative-distance analogue of Lemma~3.2 of \cite{LCSSX19}; in words, it says that if $\reldist(f,g)$ is large for every $k$-junta $g$, then for any fixed set $J$ of at most $k$ variables, rerandomizing all variables outside of $J$ in a random satisfying assignment of $f$ is fairly likely to make it into an unsatisfying assignment of $f$.

\begin{lemma} \label[lemma]{lem:3.2-analogue}
    Let $f:\{0,1\}^n\rightarrow \{0,1\}$ be a Boolean function such that $\reldist(f,g) \geq \eps$ for every $k$-junta $g$.  Then for any $J \subseteq [n]$ with $|J|\leq k$, we have 
    $$
    \Prx_{\substack{\bu \sim f^{-1}(1)\\ \bw \sim \{0,1\}^{\overline{J}}}}\big[f(\bu_J \circ \bw)=0\big] =
    \Prx_{\substack{\bu \sim f^{-1}(1)\\ \bw \sim \{0,1\}^{\overline{J}}}}\big[f(\bu) \neq f(\bu_J \circ \bw)\big] \geq \epsilon/4.$$
 \end{lemma}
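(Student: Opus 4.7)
The first equality in the statement is immediate: since $\bu \sim f^{-1}(1)$ we have $f(\bu)=1$ always, so $f(\bu)\neq f(\bu_J\circ\bw)$ is the same event as $f(\bu_J\circ\bw)=0$. So all the work is in the lower bound on this probability, and the plan is to contrapose: construct an explicit $k$-junta $g$ (depending only on the coordinates in $J$) and show $\reldist(f,g)$ is controlled by the probability $p := \Prx_{\bu,\bw}[f(\bu_J\circ\bw)=0]$, so that the hypothesis $\reldist(f,g)\geq\eps$ forces $p$ to be large.

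Concretely, define $g: \{0,1\}^n \to \{0,1\}$ by $g(x) = 1$ iff $q(x_J) \geq 1/2$, where
\[
q(y) := \Prx_{\bw \sim \{0,1\}^{\overline{J}}}\big[f(y \circ \bw)=1\big]
\qquad \text{for } y\in\{0,1\}^J.
\]
Then $g$ is a $J$-junta and hence a $k$-junta, so by hypothesis $\reldist(f,g)\geq \eps$. Let $F = f^{-1}(1)$, $G = g^{-1}(1)$, and $N = |F|$. The plan now is to compute $p$ and $|F\triangle G|$ in terms of the per-slice biases $q(y)$, and compare.

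First, partitioning $F$ by the value of $\bu_J$ shows $|\{\bu\in F:\bu_J=y\}| = q(y)\cdot 2^{|\overline{J}|}$, so
\[
p \;=\; \sum_{y\in\{0,1\}^J}\frac{q(y)\cdot 2^{|\overline{J}|}}{N}\bigl(1-q(y)\bigr)
\;=\; \frac{2^{|\overline{J}|}}{N}\sum_{y} q(y)\bigl(1-q(y)\bigr).
\]
Second, on each slice $\{x: x_J = y\}$, the functions $f$ and $g$ disagree on exactly $\min(q(y),1-q(y))\cdot 2^{|\overline{J}|}$ points (depending on whether $g$ predicts $1$ or $0$ on that slice), so
\[
|F \triangle G| \;=\; 2^{|\overline{J}|}\sum_{y} \min\!\bigl(q(y),\,1-q(y)\bigr).
\]

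The final step is the elementary inequality $\min(t,1-t) \leq 2\,t(1-t)$ for all $t\in[0,1]$ (check the two cases $t\leq 1/2$ and $t\geq 1/2$), which yields
\[
|F \triangle G| \;\leq\; 2\cdot 2^{|\overline{J}|}\sum_{y} q(y)\bigl(1-q(y)\bigr) \;=\; 2pN,
\]
so $\reldist(f,g) = |F\triangle G|/N \leq 2p$. Combined with $\reldist(f,g)\geq\eps$ this gives $p \geq \eps/2$, which is in fact stronger than the $\eps/4$ claimed.

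There is no real obstacle here: once one picks the natural ``majority over random completions'' junta $g$, everything reduces to one slice-by-slice calculation and the single elementary inequality $\min(t,1-t)\le 2t(1-t)$. The only point requiring any care is remembering that $g$ automatically lies in the class of $k$-juntas (since $|J|\leq k$), so the hypothesis $\reldist(f,g)\geq\eps$ applies to it.
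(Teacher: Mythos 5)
Your proof is correct, and it even improves the constant from $\eps/4$ to $\eps/2$. The core construction is the same as the paper's: both arguments compare $f$ to the majority-over-random-completions junta on $J$. But the way you relate $\reldist(f,g)$ to $p$ is genuinely different. The paper splits $f^{-1}(1)\,\triangle\, g^{-1}(1)$ into the two one-sided parts ($f=1,g=0$ versus $f=0,g=1$), notes one has size at least $\eps N/2$, and bounds the probability separately in each case --- in the first case using that a point of $f^{-1}(1)$ in a minority slice gets rerandomized to a $0$ of $f$ with probability at least $1/2$, and in the second case using that each point of $g^{-1}(1)\setminus f^{-1}(1)$ lies in a slice at least half-full of satisfying assignments and hence is hit by $\bu_J\circ\bw$ with probability at least $1/(2N)$. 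This loses a factor of $2$ in the case split and another factor of $2$ within each case, hence $\eps/4$. You instead compute both $p$ and $|F\triangle G|$ \emph{exactly} as sums over slices of $q(y)(1-q(y))$ and $\min(q(y),1-q(y))$ respectively, and compare them pointwise via $\min(t,1-t)\le 2t(1-t)$, losing only a single factor of $2$. Your version is cleaner and tighter; the paper's case-$2$ argument does make more visible the relative-error subtlety (points in $g^{-1}(1)\setminus f^{-1}(1)$ are never sampled directly, but the majority condition forces their slices to be reachable), which your unified identity absorbs silently. Since the lemma is only used with the constant $\eps/4$ downstream, either bound suffices.
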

\begin{proof} 
    Consider the function $h: \{0,1\}^n \to \{0,1\}$ defined as follows:
    $$h(x)= \text{arg}\max_{b \in \{0,1\}}\left\{ \Prx_{\bw \sim \{0,1\}^{\overline{J}}} \big[f(x_J \circ \bw))=b\big]\right\}$$
    where ties are broken arbitrarily.
Since $h$ is a $J$-junta, letting $N:= |f^{-1}(1)|$, we have
$$
\big|f^{-1}(1)\ \triangle\ h^{-1}(1)\big|\ge \eps N.
$$
Let $Z_1$ be the set of $z\in \{0,1\}^n$ with  
  $f(z)=1$ and $h(z)=0$, and let 
$Z_2$ be the set of $z\in \{0,1\}^n$ with $f(z)=0$ and $h(z)=1$.
Then either $Z_1$ or $Z_2$ is of size at least  $\eps N/2$.

In the first case ($|Z_1|\ge \eps N/2$), we have
    \begin{align*}
        \Prx_{\substack{\bu \sim f^{-1}(1) \\ \bw \sim \{0,1\}^{\overline J}}}\big[ f(\bu_J \circ \bw)=0\big]\ge  \frac{1}{N} \sum_{z \in Z_1}  \Prx_{\bw \sim \{0,1\}^{\overline J}}\big[  f(z_J \circ \bw)=0\big] 
        \geq \frac{1}{N}\cdot |Z_1|\cdot \frac{1}{2}  \geq \epsilon/4,
    \end{align*}
where the second inequality used $h(z)=0$ and the definition of $h$.

In the second case ($|Z_2|\ge \eps N/2$), 
given the definition of $Z_2$ we have
\begin{equation}\label{eq:temp1}
\Prx_{\substack{\bu \sim f^{-1}(1)\\ \bw \sim \{0,1\}^{\overline{J}}}}\big[f(\bu_J \circ \bw)=0\big]
\ge \sum_{z\in Z_2} \Prx_{\substack{\bu \sim f^{-1}(1)\\ \bw \sim \{0,1\}^{\overline{J}}}}\big[ \bu_J \circ \bw =z\big].
\end{equation}
Fixing any
  point $z\in Z_2$, we have
\begin{equation}\label{eq:temp2}
\Prx_{\substack{\bu \sim f^{-1}(1)\\ \bw \sim \{0,1\}^{\overline{J}}}}\big[ \bu_J \circ \bw=z\big]
\ge \frac{2^{|\overline{J}|-1}}{N}\cdot \frac{1}{2^{|\overline{J}|}}\ge \frac{1}{2N},
\end{equation}
where $2^{|\overline{J}|-1}$ lowerbounds the number of $x\in f^{-1}(1)$ with $x_J=z_J$ given the definition 
  of $h$.
The second case then follows by combining $|Z_2|\ge \eps N/2$, \Cref{eq:temp1}, and \Cref{eq:temp2}.
\end{proof}

We are now ready to prove \Cref{lem:keyjunta}.
\begin{lemma} \label[lemma]{lem:keyjunta}
Let $f:\{0,1\}^n\rightarrow \{0,1\}$ be a Boolean function that is $\epsilon$-far from every $k$-junta in relative distance. Let $X \subseteq [n]$ be such that $$\Prx_{\substack{\bu \sim f^{-1}(1) \\ \bw \sim \{0,1\}^{\overline X}}}\big[f(\bu) \neq f(\bu_X \circ \bw)\big] \leq \epsilon/20$$
and let $J$ be a subset of at most $k$ variables from $X$.
Then we have  
$$\Prx_{\substack{\bu \sim f^{-1}(1) \\ \bw \sim \{0,1\}^{\overline X},\\ \by \sim \{0,1\}^{X \setminus J}}}\big[f(\bu_X \circ \bw) \neq f(\bu_J \circ \by \circ \bw)\big] \geq \epsilon/5.$$
\end{lemma}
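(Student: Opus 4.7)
The plan is to combine Lemma~\ref{lem:3.2-analogue} with the hypothesis via a simple union/triangle-inequality on the events that $f$'s value changes. First I would apply Lemma~\ref{lem:3.2-analogue} directly to the set $J$ (legal since $|J|\le k$), which gives
\[
\Prx_{\substack{\bu \sim f^{-1}(1)\\ \bz \sim \{0,1\}^{\overline{J}}}}\big[f(\bu) \neq f(\bu_J \circ \bz)\big] \geq \epsilon/4.
\]

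Next I would rewrite this probability using the decomposition $\overline{J} = (X\setminus J) \sqcup \overline{X}$. Under this partition, drawing $\bz\sim \{0,1\}^{\overline{J}}$ is equivalent to drawing $\by \sim \{0,1\}^{X\setminus J}$ and $\bw \sim \{0,1\}^{\overline X}$ independently and setting $\bz = \by \circ \bw$. Hence
\[
\Prx_{\substack{\bu \sim f^{-1}(1)\\ \by \sim \{0,1\}^{X\setminus J}\\ \bw \sim \{0,1\}^{\overline X}}}\big[f(\bu) \neq f(\bu_J \circ \by \circ \bw)\big] \geq \epsilon/4.
\]

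Then I would apply a union bound through the intermediate string $\bu_X \circ \bw$: the event $\{f(\bu) \neq f(\bu_J \circ \by \circ \bw)\}$ is contained in
\[
\{f(\bu) \neq f(\bu_X \circ \bw)\}\cup \{f(\bu_X \circ \bw) \neq f(\bu_J \circ \by \circ \bw)\},
\]
since if $f$-values at the two endpoints differ, at least one of the two consecutive pairs must differ. Combining this with the hypothesis (which bounds the first event by $\epsilon/20$) and the lower bound above yields
\[
\Prx\big[f(\bu_X \circ \bw) \neq f(\bu_J \circ \by \circ \bw)\big] \geq \epsilon/4 - \epsilon/20 = \epsilon/5,
\]
which is the desired conclusion.

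The ``hard part'' is really just bookkeeping: making sure the rewriting in the second step genuinely reproduces the distribution from Lemma~\ref{lem:3.2-analogue} (so that the $\epsilon/4$ bound can be imported verbatim). This is immediate because $X\setminus J$ and $\overline{X}$ partition $\overline{J}$ and the product of the two uniform distributions on these blocks equals the uniform distribution on $\{0,1\}^{\overline{J}}$. After that, the triangle-inequality step is routine, and the constants $\epsilon/4$ and $\epsilon/20$ are balanced precisely so that their difference is the claimed $\epsilon/5$.
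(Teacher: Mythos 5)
Your proof is correct and follows essentially the same route as the paper: both apply \Cref{lem:3.2-analogue} to $J$, use the decomposition $\overline{J}=(X\setminus J)\sqcup\overline{X}$ to identify the distributions, and then combine the two probability bounds via a triangle-inequality/union-bound step to get $\epsilon/4-\epsilon/20=\epsilon/5$. The paper phrases the last step as $\Pr[A\land B]\ge\Pr[A]+\Pr[B]-1$ rather than your union bound on complements, but these are the same calculation.
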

\begin{proof}
Let $\bu\sim f^{-1}(1)$, $\bw\sim \{0,1\}^{\overline{X}}$ and $\by\sim \{0,1\}^{X\setminus J}$ as above. Then the following inequality completes the proof (where the last line uses the assumption of the Lemma, and \Cref{lem:3.2-analogue}.
\begin{align*}
 \Prx \big[f(\bu_X \circ \bw) \neq f(\bu_J \circ \by \circ \bw)\big] 
    & \geq \Prx \big[f(\bu)=f(\bu_X \circ \bw) \land f(\bu) \neq f(\bu_J \circ \by \circ \bw)\big] \\[0.4ex]
    & \geq \Prx \big[f(\bu)=f(\bu_X \circ \bw) \big]+  \Prx \big[f(\bu) \neq f(\bu_J \circ \by \circ \bw)\big] - 1 \\[0.4ex]
    & \geq 1-\epsilon/20 + \epsilon/4 -1 \\[0.4ex]
    & = \epsilon/5,
    \end{align*}
\end{proof}

\subsection{The tester and an overview}

The junta testing algorithm, called {\textsc{JuntaTester}}, is presented in  \Cref{algo:JuntaTester}.
We bound its sample and query complexity in  \Cref{subsec:query}.
We show in \Cref{subsec:isjunta} that \Cref{algo:JuntaTester} accepts with
  probability at least $2/3$ when $f$ is a $k$-junta, and show in  \Cref{subsec:farfromjunta} that it rejects with probability at least $2/3$ when $f$ is $\eps$-far from every $k$-junta in relative distance.

Before the formal proof, in this subsection we give an overview of the algorithm
  and its analysis.
  Since we are concerned with the entire class of $k$-juntas rather than a subclass, there is no need to do the kind of ``implicit learning'' that was used in \Cref{alg:junta-subclass-tester}; this lets us achieve a much better query complexity than would follow from a naive application of \Cref{alg:junta-subclass-tester}.
Looking ahead, compared to \Cref{alg:junta-subclass-tester},
  the savings comes from the fact that no trimming is needed in \Cref{algo:JuntaTester}.
  Instead, 
  at a high level 
  the oranization of our \Cref{algo:JuntaTester} 
is set up so as to follow the distribution-free junta testing algorithm and proof of correctness from \cite{Bshouty19} quite closely,
 but there are some differences which we describe at the end of this subsection.

Let $f:\{0,1\}^n\rightarrow \{0,1\}$ be the input function that is being tested.
In Phase A,  \Cref{algo:JuntaTester} starts by drawing a partition of
  variables $[n]$ into $r=O(k^2)$ many blocks $\bX_1,\ldots,\bX_r$
  uniformly at random.
In Phase~B,  \Cref{algo:JuntaTester} tries to find as many \emph{relevant} blocks $\bX_\ell$ as possible under a query complexity budget. Here intuitively we say $\bX_\ell$ is a relevant block of $f$
  if we have found a string $v^\ell\in \{0,1\}^n$ such that 
  $f^\ell:\{0,1\}^{\bX_\ell}\rightarrow \{0,1\}$, defined as 
  $$ {f^\ell(x) =f\left(x\circ v^\ell_{\overline{\bX_\ell}}\right)},$$ is not a constant function.
Let $\{\smash{\bX_\ell}\}_{\smash{\ell\in I}}$ be the set of relevant blocks found so far (with $\smash{I=\emptyset}$ at the beginning), and let $\bX$ be their union. 
Phase B draws 
  a uniform satisfying assignment $\bu \sim f^{-1}(1)$ and a random string $\bw$ 
  over $\smash{\{0,1\}^{\overline{\bX}}}$; if $f(\bu_{\bX}\circ\bw)=0$,  then a binary search over blocks can be run
  on $\bu$ and $\bu_{\bX}\circ \bw$, using $O(\log r)=O(\log k)$ many queries, to find a new relevant block $\bX^\ell$ together with an accompanying $v^\ell$.
If Phase B found too many relevant blocks (i.e., $|I|>k$) then \Cref{algo:JuntaTester} rejects; on the other hand, if it fails to make progress by finding a new relevant block after $T=O(\log(k/\eps))$ rounds, \Cref{algo:JuntaTester} moves on to Phase~C. The latter helps ensure that the number of queries used in Phase B stays within the budget of $\tilde{O}(k/\eps)$.

In Phase~C \Cref{algo:JuntaTester} checks whether every function $f^\ell$, $\ell\in I$, is close to a literal under the uniform distribution  (i.e., $x_{\tau(\ell)}$ or $\overline{x_{\tau(\ell)}}$ for some variable $\tau(\ell)\in \bX_\ell$),
and rejects if any of them is not.

When \Cref{algo:JuntaTester} reaches Phase~D, one may assume that the relevant blocks found satisfy the following two conditions:
\begin{flushleft}\begin{enumerate}
\item First, we have\begin{equation}\label{temp22}
\Prx_{\substack{\bu \sim f^{-1}(1)\\ \bw \sim \{0,1\}^{\overline \bX}}}\big[f(\bu_{\bX} \circ \bw) \neq f(\bu)\big] \le \epsilon/20;
\end{equation}
since otherwise it is unlikely for Phase B to fail to find a new 
  relevant block after $T$ repetitions before moving to Phase~C.
\item Second, every $f^\ell$ is close to a literal $x_{\tau(\ell)}$ or $\overline{x_{\tau(\ell)}}$ for some unique $\tau(\ell)\in \bX_\ell$;
since otherwise Phase~C rejects with high probability.
\end{enumerate}
\end{flushleft}
Let $J$ be the set of indices $\{\tau(\ell)\}_{\ell\in I}$, with $|J|\le k$. Then by \Cref{lem:keyjunta}, we have 
\begin{equation}\label{temp33}
\Prx_{\substack{\bu \sim f^{-1}(1) \\ \bw \sim \{0,1\}^{\overline X},\\ \by \sim \{0,1\}^{X \setminus J}}}\big[f(\bu_X \circ \bw) \neq f(\bu_J \circ \by \circ \bw)\big] \geq \epsilon/5
\end{equation}
when $f$ is $\eps$-far from every $k$-junta in relative distance.
While $\bu$ and $\bw$ above are easy to sample, 
  Phase~D uses a delicate subroutine (lines {16--25})
  to sample $\by$.
The challenge here is that \Cref{algo:JuntaTester} 
  does not know $J$ (even though it is well defined given that 
  every $f^\ell$ is close to a literal) and it would require too many queries
  to explicitly identify any $\tau(\ell)$ given that each block is likely to contain $\Theta(n/k^2)$ many variables (recall that we wants to avoid $O(\log n)$-type query complexities).
We explain how the subroutine works in  \Cref{subsec:farfromjunta}; 
\Cref{algo:JuntaTester} rejects in Phase~D if 
\begin{equation} \label{eq:a}
f(\bu_X \circ \bw) \neq f(\bu_J \circ \by \circ \bw)\end{equation}
  and it accepts if this does not happen after $M=O(1/\eps)$ rounds.

For the case when $f$ is a $k$-junta, as will become clear in 
  the analysis in  \Cref{subsec:isjunta}, \Cref{algo:JuntaTester} always accepts if the partition is such 
  that every block $\bX_\ell$ contains at most one relevant variable of $f$. Given that we draw $r=O(k^2)$ blocks, this happens with high probability by a standard birthday paradox analysis.
For the case when every $k$-junta is $\eps$-far from $f$ in relative distance, on the other hand, it follows from \Cref{temp33} (which follows from \Cref{lem:keyjunta} and the analysis of Phase B and C) that  \Cref{algo:JuntaTester} rejects with high probability.

Finally, we briefly discuss how our algorithm and (mostly) its analysis differ from  \cite{Bshouty19}.  Each of our algorithm's phases is quite similar to the corresponding phase of the  \cite{Bshouty19} algorithm. However, in Phase~B,
\cite{Bshouty19} picks a restriction $f'$ of $f$ in which every variable outside the ``relevant blocks'' is set to $0$. Since \cite{Bshouty19} works in the distribution free setting it's easy to check whether $f$ and $f'$ are close, by simply sampling $\bu \sim \mathcal{D}$ and checking if $f(\bu)=f'(\bu)$. Finally, in Phase~D \cite{Bshouty19} checks if $f'$ is close to a specific $k$-junta under the distribution $\mathcal{D}$. In the no case of the distribution-free setting, since $f$ is far from every $k$-junta and $f'$ is close to $f$, $f'$ must fail this final check with high probability.  

In contrast, in our relative-error setting it is not easy to check if $f$ is relative-error close to a restriction like $f'$. In the relative-error setting we can only get samples from $f^{-1}(1)$, and it is not clear how to use such samples to detect that $f$ is far from $f'$. This is because it could be the case that $\reldist(f,f')$ is large, but for $\bu \sim f^{-1}(1)$ we always have $f'(\bu)=1$. 

To deal with this issue, we do not use a restriction of $f$, but rather we set variables in the ``irrelevant blocks'' uniformly at random in both Phase~B and Phase~D.  We do not get that $f$ is close to some function $f'$ by the end of Phase~B; instead, the correctness of our algorithm crucially relies on \Cref{lem:keyjunta}, which does not have an analogue in the analysis of \cite{Bshouty19}.

\subsection{Sample and query complexity of \Cref{algo:JuntaTester}}\label{subsec:query}

We first note that Phase A does not make any samples or queries.
In Phase B, we repeat lines {4--8} until either $|I| >k$ or $t=T$.
In one iteration of the loop, if the ``if'' condition on line {5} is true, then lines {6--7} are executed which makes $O(\log r)=O(\log k)$ queries, and $|I|$ increases by one and $t$ is set to $0$.  
And if the ``if'' condition on line {5} is false, then line {5} only makes two queries and $|I|$ does not increase, but $t$ increases by one.
Therefore, the query complexity of this phase is $O(k(T + \log k))= O(k \log(k/\epsilon))$ given that $T=O(\log(k/\eps))$. Moreover, the number of samples from $f^{-1}(1)$ that are drawn across Phase~B is at most $O(kT)=O(k \log(k/\eps))$, since line~{4} is executed at most $T$ times between every two consecutive increments of $|I|$, and $|I|$ never grows larger than $k$.
In Phase~C, lines {11--12} are executed at most $k$ times, and by \Cref{thm: uniform junta blais}, each call to \textsc{UniformJunta} makes $O(1)$ queries (and no samples), so the query complexity of this phase is $O(k)$.
In the final phase, lines {15--27} are repeated $M=O(1/\epsilon)$ times, and each time the query complexity is $O(kh)=O(k\log (k/\eps))$, so the query complexity of this phase is $O((k/\epsilon) \log(k/\epsilon))$ (and this phase makes $M$ samples, corresponding to the $M$ executions of line~{26}).  Therefore, the total number of queries and samples made by \Cref{algo:JuntaTester} is $O((k/\epsilon)\log(k/\epsilon))$.

\begin{algorithm}
\caption{The relative error junta testing algorithm, \textsc{JuntaTester}$(f,k, \epsilon)$}
\label{algo:JuntaTester}

\Algphase{Phase A: Randomly partition $[n]$ into blocks}

\nonl Parameters used in the algorithm: $r=2k^2$, $T=O(\log (k/\epsilon))$, $M=O(1/\epsilon)$, and $h=O(\log(k/\epsilon))$\; 
 Choose uniformly at random an $r$-way partition $\bX_1, \ldots, \bX_r$ of $[n]$\; %

\Algphase{Phase B: Find relevant blocks}

 Set $\bX=\emptyset$, $I=\emptyset$ and $t=0$\;
\While{$t \neq T$}{

 \hskip0em Draw uniform random $\smash{\bu \sim f^{-1}(1)}$ and $\smash{\bw \sim \{0,1\}^{\overline{\bX}}}$, and set $t \leftarrow t +1$\;
 \If{$\smash{f(\bu_{\bX} \circ \bw) \neq f(\bu)}$}{
 {\hskip0em Binary Search over the elements of $\bX_1,\dots,\bX_r$ that are subsets of $\overline{\bX}$  \hskip0em to find a new relevant set $\bX_\ell$ and a string $v^{\ell} \in \{0,1\}^n$ such that {$$f(v^{\ell}) \neq f\Big(\bw_{\bX_\ell} \circ v^{\ell}_{\overline{\bX_\ell}}\Big).$$}
 \nonl 
 \hskip0em Denote by $f^{\ell} : \{0,1\}^{\bX_\ell} \rightarrow \{0,1\}$ the following function: $$f^{\ell}(x):=f\Big(x \circ v^{\ell}_{\overline{\bX}_\ell}\Big).  $$}

Set $\bX \leftarrow \bX \cup \bX_\ell$, $I \leftarrow I \cup \{\ell\}$ and $t\leftarrow 0$. Reject if $|I| >k$\;

}
}

\Algphase{Phase~C: Test if each $f^\ell$, $\ell\in I$,  is close to a literal}

\ForEach{$\ell \in I$}{
 \hskip0em Run \textsc{UniformJunta}$(f^{\ell} , 1, 1/30, 1/15)$;  Reject if it rejects\;
 \hskip0em Draw $\smash{\bb \sim \{0,1\}^{\bX_\ell}}$; Reject and halt if $f^{\ell}(\bb)= f^{\ell}(\overline{\bb})$\;
}
\Algphase{Phase D: The final test}

\RepTimes{ $M$ }{
 \hskip0em Draw $\smash{\bs \sim \{0,1\}^{\bX}}$ and set $\bz$ to be the all-zero string in $\{0,1\}^{\bX}$\; %
 \ForEach{$\ell \in I$}{
 \hskip0em Set $\smash{\bY_{\ell, \zeta}=\{j \in \bX_\ell \;|\; \bs_j = \zeta\}}$ and $\bG_{\ell,\zeta}=0$ for both $\zeta\in \{0,1\}$\;
 \RepTimes{$h$}{
 \hskip0em Draw  $\smash{ {\bb^\zeta} \sim \{0,1\}^{\bY_{\ell,\zeta}}}$ for both $\zeta\in \{0,1\}$\;
\hskip0em If $\smash{f^{\ell}( {\bb^0} \circ {\bb^1} ) \neq f^{\ell}(\overline{ {\bb^0}} \circ  {\bb^1})}$ then $\bG_{\ell,0} \leftarrow \bG_{\ell,0}+1$\;
\hskip0em If $\smash{f^{\ell}( {\bb^0} \circ {\bb^1}) \neq f^{\ell}({\bb^{0}} \circ \overline{ {\bb^1}} )}$ then $\bG_{\ell,1} \leftarrow \bG_{\ell,1}+1$\;}
 \hskip0em If $\{\bG_{\ell,0}, \bG_{\ell,1}\} \neq \{0,h\}$ then reject\;
 \hskip0em If $\bG_{\ell,0}=h$ then $\bz_{\bX_\ell} \leftarrow  \bs_{\bX_\ell}$ else $\bz_{\bX_\ell} \leftarrow  \overline{\bs_{\bX_\ell}}$\;
 }
 \hskip0em Draw $\smash{\bu \sim f^{-1}(1)}$, $\smash{\bw \sim \{0,1\}^{\overline{\bX}}}$\; If $f(\bu_{\bX} \circ \bw) \neq f((\bu_{\bX} \oplus \bz_{\bX}) \circ \bw)$ then reject\;
  }
Accept\;
\end{algorithm}

\subsection{The analysis when $f$ is a $k$-junta}\label{subsec:isjunta}

Throughout this subsection we assume that $f:\{0,1\}^n\rightarrow \{0,1\}$ is a $k$-junta,  and let
 $J$ be its set
  of relevant variables with $|J|\le k$.
The following lemma implies that after Phase A of  \Cref{algo:JuntaTester}, with probability at least $2/3$, the partition satisfies $|\bX_i \cap J| \leq 1$ for all $i \in [r]$.

\begin{lemma}%
\label[lemma]{lem: Nader 6}
With probability at least $2/3$, we have that $|\bX_i\cap J|\le 1$ for all $i\in [r]$.
\end{lemma}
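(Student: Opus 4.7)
The plan is to use a straightforward union bound over pairs of relevant variables, in the spirit of a birthday-paradox calculation. This mirrors the argument given earlier in \Cref{lem: one var per block} for the subclass tester, with the only difference being the value of $r$ (here $r = 2k^2$ rather than $20k^2$), which changes the constants but not the structure of the argument.

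First I would fix any two distinct indices $j_1, j_2 \in J$ and compute the probability that they land in the same block of the random $r$-way partition $\bX_1,\dots,\bX_r$. Since the partition is drawn uniformly at random, conditioning on the block containing $j_1$, the probability that $j_2$ lies in that same block is exactly $1/r$.

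Next I would union-bound over all $\binom{|J|}{2} \le \binom{k}{2}$ such pairs. This gives
\[
\Prx\bigl[\exists\, i \in [r] : |\bX_i \cap J| \ge 2\bigr] \;\le\; \binom{k}{2} \cdot \frac{1}{r} \;\le\; \frac{k^2}{2} \cdot \frac{1}{2k^2} \;=\; \frac{1}{4}.
\]
Therefore the complementary event, namely $|\bX_i \cap J| \le 1$ for all $i \in [r]$, occurs with probability at least $3/4 \ge 2/3$, which is what we wanted.

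There is essentially no obstacle here: the bound is immediate from the choice of $r = 2k^2$ together with $|J| \le k$, and does not rely on anything about the function $f$ beyond the fact that it is a $k$-junta (so that $|J|\le k$). The only thing worth flagging is that the $2/3$ success probability established here is what feeds into the overall success probability of Phases B--D; subsequent lemmas in the analysis will condition on this ``good partition'' event.
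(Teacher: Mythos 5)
Your proof is correct and is essentially identical to the paper's: fix a pair of relevant variables, note they collide with probability $1/r$, and union-bound over $\binom{k}{2}\le k^2/2$ pairs to get a failure probability of at most $1/4$ (the paper only states the weaker bound $1/3$, which still suffices). Nothing further is needed.
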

\begin{proof}
    Fix any $j_1,j_2\in J$. The probability that they lie in the same block is $1/r$. By a union bound and the assumption that $|J|\le k$, the probability of $|\bX_i\cap J|>1$ for some $i$ is at most $${k \choose 2} \cdot \frac{1}{r} \leq \frac{1}{3}. %
    $$
\end{proof}

Next we consider Phases B and C of the algorithm assuming that$|\bX_i\cap J|\le 1$ for all $i\in [r]$.

\begin{lemma}%
\label[lemma]{lem: Nader7}
Assume that 
  $| \bX_i\cap J|\le 1$ for all $i\in [r]$. Then 
 \Cref{algo:JuntaTester} always reaches Phase~D and when this happens, we have  $|\bX_i\cap J|=1$ for all $i\in I$. 
\end{lemma}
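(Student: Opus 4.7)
The plan is to argue that under the hypothesis $|\bX_i \cap J| \le 1$ for all $i$, every mechanism in Phases~B and~C that could reject actually never fires, and moreover every block placed into $I$ genuinely intersects $J$. I will establish this by tracking what structure the restrictions $f^\ell$ inherit from $f$ being a $J$-junta, then reading off both conclusions from that structure.

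First I would analyze when a block $\bX_\ell$ gets added to $I$ in Phase~B. The subroutine only adds $\bX_\ell$ together with a witness $v^\ell$ such that $f(v^\ell) \ne f(\bw_{\bX_\ell} \circ v^\ell_{\overline{\bX_\ell}})$, which means the restriction $f^\ell : \{0,1\}^{\bX_\ell} \to \{0,1\}$ defined by $f^\ell(x) = f(x \circ v^\ell_{\overline{\bX_\ell}})$ is non-constant. Since $f$ only depends on variables in $J$, the function $f^\ell$ depends only on the variables in $\bX_\ell \cap J$; thus non-constancy of $f^\ell$ forces $|\bX_\ell \cap J| \ge 1$, and combined with the hypothesis $|\bX_\ell \cap J| \le 1$ we obtain $|\bX_\ell \cap J| = 1$, as claimed in the second assertion of the lemma. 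Since each block added to $I$ uses up a distinct variable of $J$ and $|J| \le k$, we always have $|I| \le k$, so the $|I|>k$ rejection on line~8 never triggers; hence Phase~B terminates normally and control passes to Phase~C.

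Next I would verify that Phase~C never rejects. From the previous paragraph, for every $\ell \in I$ the restriction $f^\ell$ depends on exactly one variable $x_{\tau(\ell)}$ with $\tau(\ell) \in \bX_\ell \cap J$, and $f^\ell$ is non-constant, so $f^\ell$ is a literal ($x_{\tau(\ell)}$ or its negation). This makes $f^\ell$ a $1$-junta, so by the one-sided guarantee of \textsc{UniformJunta} in \Cref{thm: uniform junta blais}, the call on line~11 always accepts. For line~12, note that a literal always satisfies $f^\ell(\bb) \ne f^\ell(\overline{\bb})$, so the rejection condition on that line never holds either. Therefore the algorithm passes through Phase~C and reaches Phase~D with the claimed property $|\bX_\ell \cap J| = 1$ for every $\ell \in I$.

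There is no real obstacle here; the proof is essentially a bookkeeping argument that exploits two simple facts about a restriction of a junta: (i) restricting variables outside a block leaves a function depending only on the block's intersection with $J$, and (ii) a non-constant function of at most one variable is a literal. The only subtlety worth double-checking is that the binary-search step in Phase~B is guaranteed to terminate with a non-constant $f^\ell$ (which is immediate from the invariant $f(v^\ell) \ne f(\bw_{\bX_\ell} \circ v^\ell_{\overline{\bX_\ell}})$ maintained by the search), and that the \textsc{UniformJunta} tester is one-sided on actual $1$-juntas, which is part of its specification.
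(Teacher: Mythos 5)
Your proof is correct and follows essentially the same route as the paper's: you show each block added to $I$ yields a non-constant restriction $f^\ell$, deduce $|\bX_\ell \cap J| = 1$ and that $f^\ell$ is a literal, and then check that none of the rejection conditions on lines 7, 11, or 12 can fire. The only cosmetic difference is that the paper phrases the second half as a proof by contradiction, whereas you verify each non-rejection condition directly.
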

\begin{proof}
    On line {7}, for $\ell$ to be added to $I$,  we found $v^{\ell}$ such that $$f(v^{\ell}) \neq f\Big(\bw_{\bX_\ell} \circ v^{\ell}_{\overline{\bX_\ell}}\Big)$$ and thus, $|\bX_\ell\cap J|$ must be exactly $1$. As such, for every $\ell \in I$, we have that the function $f^{\ell}$ is a literal.

    If \Cref{algo:JuntaTester} fails to reach Phase D, then it must have halted on line 7, 11 or 12. If it halts on line 7, then $|I|>k$, but this implies $|J|>k$ which is a contradiction. If it halts on line ${11}$, then since {\textsc{UniformJunta}} has one-sided error it must be the case that $f^{\ell}$ is not a $1$-junta, but we know $f^{\ell}$ is a literal (so it is a $1$-junta). And if halts on line 12, then $f^{\ell}(\bb) = f^{\ell}(\overline{\bb})$ which contradicts the fact that  $f^{\ell}$ is a literal. So in all cases, we reach a contradiction.
\end{proof}

\begin{lemma}%
\label[lemma]{lem: Nader 8}
Assume that 
  $|\bX_i \cap J|\le 1$ for all $i\in [r]$. Then the algorithm always reaches line 29 and outputs ``accept.''
\end{lemma}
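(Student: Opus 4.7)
The plan is to show deterministically that, under the assumption $|\bX_i\cap J|\le 1$ for all $i\in[r]$, every iteration of Phase~D's outer loop completes without rejection, so \Cref{algo:JuntaTester} necessarily reaches the final ``Accept.'' By \Cref{lem: Nader7} we already know the algorithm reaches Phase~D and that for each $\ell\in I$ the block $\bX_\ell$ contains a unique relevant variable $\tau(\ell)\in \bX_\ell\cap J$, with $f^\ell$ being exactly the literal $x_{\tau(\ell)}$ or $\overline{x_{\tau(\ell)}}$ (not just close to one). The plan is then a simple deterministic case analysis, layered as three invariants of each outer-loop iteration.

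First I would argue that the counter check never rejects. Since $f^\ell$ is exactly a literal of $\tau(\ell)$, the test $f^\ell(\bb^0\circ\bb^1)\ne f^\ell(\overline{\bb^0}\circ\bb^1)$ succeeds iff $\tau(\ell)\in \bY_{\ell,0}$ (equivalently $\bs_{\tau(\ell)}=0$), and analogously for the $\bb^1$-flip; this holds regardless of the random draws $\bb^0,\bb^1$. So after the inner $h$ rounds, $(\bG_{\ell,0},\bG_{\ell,1})$ is $(h,0)$ if $\bs_{\tau(\ell)}=0$ and $(0,h)$ otherwise, and the ``$\{\bG_{\ell,0},\bG_{\ell,1}\}\ne\{0,h\}$'' check cannot fire.

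Next I would verify that the assignment rule produces $\bz_{\tau(\ell)}=0$ in every case. A two-line split on $\bs_{\tau(\ell)}$ handles this: if $\bs_{\tau(\ell)}=0$ we have $\bG_{\ell,0}=h$ and take $\bz_{\bX_\ell}\leftarrow \bs_{\bX_\ell}$, giving $\bz_{\tau(\ell)}=\bs_{\tau(\ell)}=0$; if $\bs_{\tau(\ell)}=1$ we have $\bG_{\ell,0}=0$ and take $\bz_{\bX_\ell}\leftarrow \overline{\bs_{\bX_\ell}}$, giving $\bz_{\tau(\ell)}=\overline{1}=0$. Finally, for the last comparison of $f(\bu_{\bX}\circ\bw)$ against $f((\bu_{\bX}\oplus\bz_{\bX})\circ\bw)$, I would observe that the two inputs agree on every $f$-relevant coordinate: they agree on $\overline{\bX}$ since both use $\bw$; they agree on each $\tau(\ell)$ since $\bz_{\tau(\ell)}=0$; and any other coordinate in $\bX$ lies in some $\bX_\ell\setminus\{\tau(\ell)\}$ with $\ell\in I$, which by the hypothesis $|\bX_\ell\cap J|\le 1$ and $\bX_\ell\cap J=\{\tau(\ell)\}$ is not in $J$ and is therefore irrelevant to $f$. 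So $f$ gives the same value on both inputs and the check passes.

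The proof is entirely deterministic once one knows $f^\ell$ is exactly a literal, so there is no real obstacle; the only point requiring care is the two-case check that the ``$\bz_{\bX_\ell}\leftarrow\bs_{\bX_\ell}$ or $\overline{\bs_{\bX_\ell}}$'' rule produces $\bz_{\tau(\ell)}=0$ in both branches, which is the tiny arithmetic that drives the whole argument.
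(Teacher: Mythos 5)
Your proposal is correct and follows essentially the same route as the paper's proof: invoke \Cref{lem: Nader7} to get that each $f^\ell$ is exactly a literal, deterministically deduce $(\bG_{\ell,0},\bG_{\ell,1})\in\{(h,0),(0,h)\}$ so line 23 never fires, check via the two-case split that $\bz_{\tau(\ell)}=0$ in both branches, and conclude that $\bu_{\bX}$ and $\bu_{\bX}\oplus\bz_{\bX}$ agree on all relevant coordinates so the final test passes. Your write-up is, if anything, slightly more careful in spelling out the two-case arithmetic that the paper relegates to a parenthetical.
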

\begin{proof}
    By \Cref{lem: Nader7}, \Cref{algo:JuntaTester} reaches Phase~D. To show that it reaches the last line,  we need to show that it does not halt on line {23} or {27}. %
    
    For every $\ell \in I$ we have that $\bX_\ell$ contains exactly one relevant variable in $J$, which we denote by $x_{\tau(\ell)}$. So we have that $f^{\ell}$ is a literal; in particular, either $f^{\ell}(x) = x_{\tau(\ell)}$ or  $f^{\ell}(x) = \overline{x_{\tau(\ell)}}$. Since $\bY_{\ell,0}, \bY_{\ell,1}$ is a partition of $\bX_\ell$,  $\tau(\ell)$ is in exactly one of $\bY_{\ell,0}$ or $\bY_{\ell,1}$. Suppose w.l.o.g.~that  $\tau(\ell) \in \bY_{\ell,0}$, then for any $b = b^0 \circ b^1$ with $b^i \in \{0,1\}^{\bY_{{\ell,i}}}$ we have  
    $$f^\ell(b^{0} \circ b^{1}) \neq f^\ell(\overline{b^{0}} \circ {b^{1}})\quad\text{and}\quad
     f^\ell(b^{0} \circ b^{1})= f^\ell(b^{0} \circ \overline{b^{1}}).$$
Therefore, we always have $\bG_{\ell,0}=h$ and $\bG_{\ell,1}=0$. 
    Thus, the algorithm reaches line ${26}$.
    
Next, observe that if $\tau(\ell) \in \bY_{\ell,0}$, then we set $\bz_{\bX_\ell}=\bw_\ell$ and thus, $\bz_{\tau(\ell)}=\bw_{\tau(\ell)}=0$ (and if we had $\tau(\ell) \in \bY_{\ell,1}$, then we would set $\bz_{\bX_\ell}=\overline{\bw_\ell}$ and thus,  $\bz_{\tau(\ell)}=\overline{\bw_{\tau(\ell)}}=0$).

    So, for every $\ell \in I$ we have $\bz_{\tau(\ell)}=0$ and $\tau(\ell)$ is the only relevant variable in $\bX_\ell$. So $\bu_{\bX}$ and $\bu_{\bX} + \bz_{\bX}$ have the same value on all the relevant variables $J$ in $\bX$. As such we must always have $f(\bu_{\bX} \circ \bv)=f((\bu_{\bX} \oplus \bz_{\bX}) \circ \bv)$ so the algorithm reaches the last line and accepts. %
\end{proof}

\begin{lemma}
    If $f$ is a $k$-junta then  \Cref{algo:JuntaTester}  outputs ``accept'' with probability at least $2/3$.
\end{lemma}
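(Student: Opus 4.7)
The plan is to combine the two preceding lemmas directly. By \Cref{lem: Nader 6}, the random partition chosen in Phase A of \Cref{algo:JuntaTester} satisfies $|\bX_i \cap J| \leq 1$ for every $i \in [r]$ with probability at least $2/3$, where $J$ is the set of at most $k$ relevant variables of $f$. Condition on this good event: by \Cref{lem: Nader 7} the algorithm must reach Phase D (it cannot halt in Phases B or C, since $|I|$ never exceeds $|J| \le k$ and each $f^\ell$ is a literal), and by \Cref{lem: Nader 8} it then deterministically reaches the final line and outputs ``accept.''

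Therefore, the probability of acceptance is bounded below by the probability of the good partition event, namely $2/3$. Since the conditional rejection probability given the good partition is zero, no additional slack is needed and no union bound over Phases B--D is required. This is essentially a one-step synthesis: all the work has been done in the two preceding lemmas, and the current statement is their immediate corollary.

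There is no real obstacle here; the only subtlety to mention is that \Cref{lem: Nader 8} gives a \emph{deterministic} acceptance guarantee conditioned on the partition being good (every literal check in Phase C succeeds because \textsc{UniformJunta} has one-sided error on genuine $1$-juntas, and the parity voting in Phase D produces $\bz$ that agrees with $\bu$ on every relevant coordinate, so the final comparison on line~27 cannot fail). Thus the only randomness that could cause rejection is the partition itself, which is controlled by \Cref{lem: Nader 6}.
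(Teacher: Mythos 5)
Your proof is correct and matches the paper's argument exactly: the paper also derives the lemma as an immediate consequence of \Cref{lem: Nader 6}, \Cref{lem: Nader7}, and \Cref{lem: Nader 8}, with the latter two providing deterministic guarantees conditioned on the good partition event. No further commentary is needed.
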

\begin{proof}
    The result follows from \Cref{lem: Nader 6}, \Cref{lem: Nader7}, and \Cref{lem: Nader 8}.
\end{proof}

\subsection{The analysis when $f$ is $\eps$-far in relative distance from every $k$-junta} \label{subsec:farfromjunta}

Throughout this subsection we assume that $\reldist(f,g) > \eps$ for every $k$-junta $g: \zo^n \to \zo$.
Our goal is to show that in this case, \Cref{algo:JuntaTester} rejects with high probability.

For Phase A of the algorithm, we just fix any $r$-way 
  partition $X_1,\ldots,X_r$ of $[n]$.Let us consider Phase B of the algorithm.  
It is clear that if \Cref{algo:JuntaTester} passes Phase B (instead of being rejected on line {7}),
  then at the moment when it reaches Phase C we must have $t=T=O(\log (k/\eps))$ and $|I|\le k$.
Furthermore, we now show that the following condition is unlikely to happen:

\begin{lemma}%
\label[lemma]{lem: Nader 10}
The probability of  \Cref{algo:JuntaTester} reaching Phase C with 
$$\Prx_{\substack{\bu \sim f^{-1}(1)\\ \bw \sim \{0,1\}^{\overline \bX}}}\big[f(\bu_{\bX} \circ \bw) \neq f(\bu)\big] \geq \epsilon/20 $$ is at most $1/15$.
\end{lemma}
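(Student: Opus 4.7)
The plan is to perform a union bound over the at most $k+1$ distinct successive values that $\bX$ can take during Phase~B, combined with an independence argument within each such ``segment'' of Phase~B. Concretely, I would first observe that $\bX$ only changes on line~7, when the algorithm finds a new relevant block, and each such change increments $|I|$; since the algorithm rejects outright as soon as $|I|>k$, during any run of Phase~B that does not reject, the set $\bX$ takes at most $k+1$ distinct successive values (the initial $\emptyset$ plus at most $k$ updates). I will call each maximal stretch of iterations during which $\bX$ is held constant a \emph{segment}. Reaching Phase~C means that $t$ attains the value $T$ within some segment, i.e., that $T$ consecutive iterations in that segment all produced $(\bu,\bw)$ with $f(\bu_{\bX}\circ\bw)=f(\bu)$.

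Next, I would fix an arbitrary segment, let $X$ denote its value of $\bX$, and write $p_X:=\Pr_{\bu\sim f^{-1}(1),\,\bw\sim\{0,1\}^{\overline X}}[f(\bu_X\circ\bw)\neq f(\bu)]$. The key point is that the $T$ successive $(\bu,\bw)$ draws within the segment are mutually independent, and are also independent of the random choices used to arrive at this segment, so conditional on reaching the segment, the probability that all $T$ of its iterations fail to detect a difference is exactly $(1-p_X)^T$. Under the bad hypothesis $p_X\ge\epsilon/20$, this is at most $(1-\epsilon/20)^T \le e^{-\epsilon T/20}$.

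To finish, I would apply a union bound over the at most $k+1$ segments from which the transition into Phase~C could occur, which gives an overall bound of $(k+1)\,e^{-\epsilon T/20}$ on the probability that Phase~C is entered while the bad condition holds for the final $\bX$. Choosing the hidden constant in $T$ large enough -- specifically, so that $T \ge (20/\epsilon)\ln(15(k+1))$ -- drives this below $1/15$, as required. The only piece of bookkeeping that warrants care is the independence claim within each segment: it is clearly true because each iteration of the while loop draws fresh uniform $\bu\sim f^{-1}(1)$ and $\bw\sim\{0,1\}^{\overline{\bX}}$, but it must be articulated precisely so that conditioning on ``the algorithm has arrived at a particular $\bX=X$'' does not introduce any hidden correlation with the subsequent $T$ samples that are used to trigger the transition to Phase~C.
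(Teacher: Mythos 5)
Your proposal is correct and follows essentially the same route as the paper: condition on the bad event for a fixed value of $\bX$, bound the probability of $T$ consecutive non-detections by $(1-\epsilon/20)^T$, and union bound over the at most $k+1$ (the paper says $k$) distinct values $\bX$ can take before Phase~C is entered. Your extra care about the independence of the $T$ fresh draws within a segment from the randomness that produced that segment is a point the paper leaves implicit, but the argument is the same.
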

\begin{proof}
For \Cref{algo:JuntaTester} to reach Phase~C,
  it must be the case that it has drawn $\bu\sim f^{-1}(1)$ and $\smash{\bw\sim \{0,1\}^{\overline{\bX}}}$ for $T$ times and they satisfy $f(\bu_{\bX}\circ \bw) =f(\bu)$ every time.
But given the condition above, the probability that this happens is at most 
    $(1-\epsilon/20)^{T}$, which is at most $\frac{1}{15k}$
by setting the hidden constant in $T=O(\log(k/\eps))$ 
  sufficiently large.
By a union bound (over the at most $k$ many
  different $\bX$'s that are in play across all the executions of line~{5}), 
  the probability of  \Cref{algo:JuntaTester}
  reaching Phase~C with the condition above is at most $1/15$.
\end{proof}

Now assume that  \Cref{algo:JuntaTester} reaches Phase C with an $X$ that satisfies  
\begin{equation}\label{temp11}
\Prx_{\substack{\bu \sim f^{-1}(1)\\ \bw \sim \{0,1\}^{\overline X}}}\big[f(\bu_X \circ \bw) \neq f(\bu)\big] \le \epsilon/20.
\end{equation}

\begin{lemma}%
\label[lemma]{lem: Nader 11}If for some $\ell \in I$, $f^{\ell}:\{0,1\}^{X_\ell}\rightarrow \{0,1\}$ is $(1/30)$-far from every literal with respect to the uniform distribution, then \Cref{algo:JuntaTester} rejects in Phase~C with probability at least $14/15$.%
\end{lemma}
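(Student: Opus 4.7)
The plan is to argue by case analysis on how $f^\ell$ fails to be close to a literal. If $f^\ell$ is $(1/30)$-far from every literal under the uniform distribution, then either (i) $f^\ell$ is $(1/30)$-far from every $1$-junta, or (ii) $f^\ell$ is $(1/30)$-close to some $1$-junta but (by assumption) $(1/30)$-far from every literal, which forces $f^\ell$ to be $(1/30)$-close to a constant function (the only non-literal $1$-juntas over $\{0,1\}^{X_\ell}$).

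In case~(i), the line~11 call to $\textsc{UniformJunta}(f^\ell,1,1/30,1/15)$ directly detects the issue: by \Cref{thm: uniform junta blais}, this call rejects with probability at least $1-1/15=14/15$, and when it rejects, so does \Cref{algo:JuntaTester}. This is the easy case, and no further work is needed.

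In case~(ii), assume without loss of generality that $f^\ell$ is $(1/30)$-close to the constant-$0$ function (the constant-$1$ case is symmetric). The rejection criterion on line~12 triggers when $f^\ell(\bb)=f^\ell(\overline{\bb})$, so I want to upper bound the probability of the complementary event that $f^\ell(\bb)\ne f^\ell(\overline{\bb})$. A union bound gives
\[
\Prx_{\bb \sim \{0,1\}^{X_\ell}}\big[f^\ell(\bb)\ne f^\ell(\overline{\bb})\big]
\le \Prx_{\bb}\big[f^\ell(\bb)=1\big]+\Prx_{\bb}\big[f^\ell(\overline{\bb})=1\big]\le 2\cdot (1/30)=1/15,
\]
using that both $\bb$ and $\overline{\bb}$ are marginally uniform on $\{0,1\}^{X_\ell}$ together with the $(1/30)$-closeness of $f^\ell$ to the constant-$0$ function. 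Hence line~12 rejects with probability at least $14/15$.

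Combining the two cases yields the desired $14/15$ rejection probability; the entire argument is essentially the same as the proof of \Cref{lem:all_f^l_are_close} from the subclass tester section, so there is no real obstacle beyond carefully setting up the dichotomy and applying \Cref{thm: uniform junta blais} in case~(i).
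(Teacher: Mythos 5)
Your proof is correct and follows essentially the same route as the paper's: the same dichotomy into ($f^\ell$ far from every $1$-junta) versus ($f^\ell$ close to a constant), with \Cref{thm: uniform junta blais} handling the first case and the union bound over $\bb$ and $\overline{\bb}$ on line~12 handling the second. No gaps.
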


\begin{proof}
    If $f^{\ell}$ is $(1/30)$-far from every literal with respect to the uniform distribution then it is either \begin{enumerate}
        \item $(1/30)$-far from every $1$-junta,
        \item or $(1/30)$-far from every literal but $(1/30)$-close to a $0$-junta (i.e., a constant-$0$ or $1$ function).
    \end{enumerate}
    In case $1$, by \Cref{thm: uniform junta blais}, with probability at least $14/15$, UniformJunta($f^\ell, 1, 1/30, 1/15$) rejects, in which case \Cref{algo:JuntaTester} rejects. In case $2$,  $f^{\ell}$ is $(1/30)$-close to a constant function; say w.l.o.g.~that it is close to the constant-$0$ function. The probability we fail to reject on line {$12$} is :

    \begin{align*}
        \Prx_{\bb \sim \{0,1\}^{X_\ell}}\big[f^\ell(\bb) \neq f^\ell(\overline{\bb})\big] %
         \leq \Prx_{\bb \sim \{0,1\}^{X_\ell}}\big[ f^\ell(\overline{\bb}) =1\big]  + \Prx_{\bb \sim \{0,1\}^{X_\ell}}\big[f^\ell(\bb)=1\big]  
         \leq 1/15,
    \end{align*}
    where the last inequality follows from the fact $f^{\ell}$ is $(1/30)$-close to the constant-$0$ function under the uniform distribution and $\bb$, $\overline{\bb}$ are both sampled uniformly at random.
\end{proof}

Next assume that  \Cref{algo:JuntaTester} reaches Phase D
  with $X$ satisfying \Cref{temp11} and $I$ satisfying that $f^\ell$ is $(1/30)$-close to
  a literal with respect to the uniform distribution for 
  every $\ell\in I$.
For each $\ell\in I$, let $\tau(\ell)\in X_\ell$ be the (unique) variable such that $f^\ell$ is $(1/30)$-close to either 
  $x_{\tau(\ell)}$ or $\overline{x_{\tau(\ell)}}$. 

\begin{lemma}%
\label[lemma]{lem: Nader 12}
For each execution of the inner loop on $\ell$ in Phase D (i.e., lines {17--24}), 
  the probability that $\bz_{\tau(\ell)}$ is set to 1 on line {24} (in which case the algorithm does not reject on line {23}) is at most $(1/15)^h$.
\end{lemma}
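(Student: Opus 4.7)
The plan is to directly analyze the inner loop and show that $\bz_{\tau(\ell)} = 1$ forces the ``wrong'' counter to reach $h$. I would first split on which side $\tau(\ell)$ falls under the random $\bs$: if $\bs_{\tau(\ell)} = 0$ then $\tau(\ell) \in \bY_{\ell,0}$, and line 24 assigns $\bz_{\tau(\ell)}$ equal to $\bs_{\tau(\ell)} = 0$ exactly when $\bG_{\ell,0} = h$; hence in this case $\bz_{\tau(\ell)} = 1$ forces $\bG_{\ell,1} = h$ (and $\bG_{\ell,0} = 0$, because of the check on line 23). The symmetric case $\bs_{\tau(\ell)} = 1$ forces $\bG_{\ell,0} = h$. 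So in either case, $\bz_{\tau(\ell)} = 1$ is a subevent of the ``wrong'' counter reaching $h$.

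Next I would bound the per-round probability of incrementing the wrong counter. By assumption $f^\ell$ is $(1/30)$-close under the uniform distribution to some literal $\psi \in \{x_{\tau(\ell)}, \overline{x_{\tau(\ell)}}\}$. In the first case (with $\tau(\ell) \in \bY_{\ell,0}$), flipping $\bb^1$ does not touch coordinate $\tau(\ell)$, so $\psi$ takes the same value on $\bb^0 \circ \bb^1$ and on $\bb^0 \circ \overline{\bb^1}$. A union bound, using that both $\bb^0 \circ \bb^1$ and $\bb^0 \circ \overline{\bb^1}$ are marginally uniform over $\{0,1\}^{\bX_\ell}$, then yields $\Pr[f^\ell(\bb^0 \circ \bb^1) \neq f^\ell(\bb^0 \circ \overline{\bb^1})] \leq 2/30 = 1/15$, mirroring the computation already carried out in the proof of \Cref{lem: FindBlockValue: bad}. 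The symmetric argument gives the same $1/15$ bound on the probability of incrementing $\bG_{\ell,0}$ in the case $\tau(\ell) \in \bY_{\ell,1}$.

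Finally I would assemble the pieces using independence. Conditioned on $\bs$, the $h$ iterations of the inner loop draw fresh independent $\bb^0, \bb^1$, so the indicator that the wrong counter is incremented forms an i.i.d.\ Bernoulli sequence with parameter at most $1/15$, and the probability that this counter reaches $h$ is therefore at most $(1/15)^h$. The law of total probability over $\{\bs_{\tau(\ell)} = 0\}$ and $\{\bs_{\tau(\ell)} = 1\}$ then gives the claimed bound $(1/15)^h$. The only subtlety is the bookkeeping that identifies which side of the partition $\bY_{\ell,\zeta}$ contains $\tau(\ell)$ and which counter must fail to reach $h$ in order for $\bz_{\tau(\ell)}$ to equal $0$; once this mapping is fixed, the rest is a direct combination of the per-round bound with independence.
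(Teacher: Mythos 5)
Your proof is correct and follows essentially the same route as the paper's: identify which counter must reach $h$ for $\bz_{\tau(\ell)}=1$ (using the line-23 check in the $\bs_{\tau(\ell)}=0$ case), bound the per-round increment probability of that counter by $1/15$ via the union bound against the nearby literal, and multiply over the $h$ independent rounds. The only cosmetic difference is that you spell out both cases of $\bs_{\tau(\ell)}$ where the paper treats one and declares the other symmetric.
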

\begin{proof}
    Fix some $\ell \in I$, and assume without loss of generality that $f^{\ell}$ is $(1/30)$-close to $x_{\tau(\ell)}$ under the uniform distribution. The case where $\smash{f^\ell}$ is close to $\overline{x_{\tau(\ell)}}$ is similar. Let $\bY_{\ell,0}, \bY_{\ell,1}$ be the partition of $X_\ell$ drawn in the loop. Then $\tau(\ell)$ is in exactly one of $\bY_{\ell,0}$ or $\bY_{\ell,1}$. 
Assume without loss of generality that {$\bs_{\tau(\ell)}=1$ so} $\tau(\ell) \in \bY_{\ell,1}$; the case when $\tau(\ell)\in \bY_{\ell,0}$ is similar. We have that  $\bz_{\tau(\ell)}=1$ only if we set $\bz_{\bX_\ell} \leftarrow \bs_{\bX_\ell}$, meaning  $\bG_{\ell,0}=h$ is a necessary condition for $\bz_{\tau(\ell)}=1$. So, it suffices to show that  
$\Pr[\bG_{\ell,0}=h] \leq (1/15)^h$. 

Focusing on a single execution of the loop spanning lines {19--21},
  we have%
$$
\Pr\big[f^\ell(\bb^0\circ\bb^1)\ne f^\ell(\overline{\bb^0}\circ\bb^1)\big]\le 
\Pr\big[f^\ell(\bb^0\circ\bb^1)\ne \bb^1_{\tau(\ell)}\big]+
\Pr\big[f^\ell(\overline{\bb^0}\circ {\bb^1})\ne \bb^1_{\tau(\ell)}\big]
\le 1/15,
$$
where the second inequality follows from the assumption that $f^\ell$ is $(1/30)$-close to $x_{\tau(\ell)}$ under the uniform distribution and the marginal distribution of both $\smash{\bb^0\circ\bb^1}$ and $\smash{\overline{\bb^0}\circ\bb^1}$ is uniform.
It follows that $\Pr[\bG_{\ell,0}=h]\leq (1/15)^h$.
\end{proof}

We are now ready to show that the algorithm accepts with
  probability at most $1/3$. 

\begin{lemma}%
\label[lemma]{lem:last-lemma-no-case}
    If $\reldist(f,g)>\eps$ for every $k$-junta $g$, then \Cref{algo:JuntaTester}  rejects
    with probability at least $2/3$.
\end{lemma}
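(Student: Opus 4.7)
The plan is to decompose the accept probability of \Cref{algo:JuntaTester} into a ``bad reach'' event (the algorithm reaches Phase D without the right structural conditions) plus a ``good reach and accept in Phase D'' event, and then bound each by a small constant.

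For the bad reach, \Cref{lem: Nader 10} bounds by $1/15$ the probability of reaching Phase C with a block structure $\bX$ violating \eqref{temp11}, and \Cref{lem: Nader 11} bounds by $1/15$ the probability of passing Phase C while some $f^\ell$ is still $(1/30)$-far from every literal. So with probability at least $1-2/15$ the algorithm has either already rejected by the end of Phase C or reaches Phase D with the ``good configuration'': $\bX$ satisfies \eqref{temp11}, every $f^\ell$ is $(1/30)$-close to some literal $x_{\tau(\ell)}$ or $\overline{x_{\tau(\ell)}}$, and $J:=\{\tau(\ell):\ell\in I\}$ has $|J|\le k$.

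Under the good configuration I will apply \Cref{lem:keyjunta} to this $J$ to obtain
\[
\Prx_{\bu,\bw,\by}\big[f(\bu_\bX \circ \bw) \neq f(\bu_J \circ \by \circ \bw)\big] \geq \epsilon/5,
\]
and then relate this to the algorithm's actual test on line $27$. To do so I introduce an ``ideal'' vector $\bz^*$ defined by $\bz^*_{\tau(\ell)}=0$ and $\bz^*_i = \bs_i \oplus \bs_{\tau(\ell)}$ for $i \in \bX_\ell\setminus\{\tau(\ell)\}$. Since $\bs$ is uniform and independent of $\bu,\bw$, a short direct calculation shows that the joint law of $(\bu_\bX \oplus \bz^*_\bX)\circ \bw$ coincides with that of $\bu_J \circ \by \circ \bw$ from \Cref{lem:keyjunta}. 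Simultaneously, \Cref{lem: Nader 12} plus a union bound over $\ell\in I$ gives probability at most $k(1/15)^h \le \epsilon/100$ (for $h=O(\log(k/\epsilon))$ with a large enough hidden constant) to the event ``the inner loop passes for every $\ell$ but $\bz \neq \bz^*$''; the key observation here is that, conditional on passing the inner loop for $\ell$, $\bz_{\bX_\ell}=\bz^*_{\bX_\ell}$ is equivalent to $\bz_{\tau(\ell)}=0$.

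Combining these two facts, the per-iteration rejection probability in Phase D is at least $\epsilon/5 - \epsilon/100 \geq \epsilon/6$: whenever the ``ideal test'' $f(\bu_\bX \circ \bw) \neq f((\bu_\bX \oplus \bz^*_\bX)\circ \bw)$ succeeds, either the inner loop already rejected or $\bz=\bz^*$, in which case the algorithm's actual test on line $27$ also rejects. Amplifying over the $M=O(1/\epsilon)$ outer iterations reduces the accept probability in Phase D under the good configuration to $(1-\epsilon/6)^M \leq 1/6$ by taking the hidden constant in $M$ large enough, so the total accept probability is at most $2/15 + 1/6 < 1/3$, which yields the lemma. The main obstacle is the distributional identity between $(\bu_\bX \oplus \bz^*_\bX)\circ \bw$ and $\bu_J \circ \by \circ \bw$: the algorithm does not know the hidden set $J$ and only reconstructs (a random variable equivalent to) $\bz^*$ indirectly through the partition $(\bY_{\ell,0},\bY_{\ell,1})$ and the inner-loop queries, which is precisely why $\bz^*$ uses $\bs_i \oplus \bs_{\tau(\ell)}$ rather than $\bs_i$, so that the uniformity of $\bs$ injects a genuinely uniform $\by$ into the non-$J$ coordinates even after the $J$-coordinates are pinned down to $\bu_J$.
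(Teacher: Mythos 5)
Your proof is correct and follows essentially the same route as the paper: bound the probability of reaching Phase~D without the good configuration by $2/15$ via \Cref{lem: Nader 10} and \Cref{lem: Nader 11}, then couple the algorithm's test on line~27 with the ideal test of \Cref{lem:keyjunta} through the distributional identity for $(\bu_\bX\oplus\bz_\bX)\circ\bw$ when $\bz_{\tau(\ell)}=0$ for all $\ell$, using \Cref{lem: Nader 12} to control the deviation. The only (harmless) difference is bookkeeping: you subtract the per-iteration bad-$\bz$ probability $k(1/15)^h$ from the per-iteration rejection probability $\eps/5$, whereas the paper union-bounds the bad-$\bz$ event over all $M$ iterations at once; your explicit formula $\bz^*_i=\bs_i\oplus\bs_{\tau(\ell)}$ is exactly the paper's case analysis in closed form.
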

\begin{proof}
By \Cref{lem: Nader 10} and \Cref{lem: Nader 11}, we have that with probability at least $13/15$, \Cref{algo:JuntaTester} either already rejects in Phases A--C or reaches Phase~D with $X$ and $I$ satisfying \Cref{temp11} and $f^\ell$ being $(1/30)$-close to
  a literal (either $x_{\tau(\ell)}$ or $\overline{x_{\tau(\ell)}}$) with respect to the uniform distribution for 
  every $\ell\in I$.
Assume that the latter happens. 
We show below that Phase~D of \Cref{algo:JuntaTester} rejects with probability at least $13/15$.
It follows from a union bound that overall
  \Cref{algo:JuntaTester} accepts with probability
  at most $2/15+2/15<1/3$.

To this end, by \Cref{lem: Nader 12} and a union bound,
  with probability at least  $1-M|I|(1/15)^h $, 
  either \Cref{algo:JuntaTester} rejects on some execution of line {23} 
  or $\bz_{X_\ell}$ satisfies $\bz_{\tau(\ell)}=0$ for every $\ell$
  and in every one of the $M$ loops.
By making the hidden constant in $h$ sufficiently larger
  than the hidden constant in $M$,
  we can make $$1-M|I|(1/15)^h\ge 1-O(k/\eps)\cdot (1/15)^h\ge 14/15. $$
  
Assuming that this happens and \Cref{algo:JuntaTester} 
  does not reject on line {23}, 
  then for each of the $M$ loops, we claim that the joint distribution of 
  $(\bu_X\circ \bw,(\bu_X\oplus \bz_X)\circ \bw)$ is exactly the same as
  the joint distribution of $(\bu_X\circ \bw,\bu_J\circ \by\circ  \bw)$
  in \Cref{lem:keyjunta} %
  with $J=\{\tau(\ell):\ell\in I\}$. 
  To see this, first observe that the joint distributions of the $\overline{X}$ coordinates (corresponding to $\bw$) are clearly the same in both cases, so it suffices to consider coordinates in $X$. We observe that for any $\tau(\ell) \in J$, we have $\left((\bu_X \oplus \bz_X) \circ \bw \right)_{\tau(\ell)}=\bu_{\tau(\ell)} \oplus \bz_{\tau(\ell)}= \bu_{\tau(\ell)}$, since $\bz_{\tau(\ell)}=0$ by assumption. Turning to the coordinates in $X \setminus J$, fix any $i \in X_\ell, \ell \in I$ with $i \neq \tau(\ell)$. We have $\left((\bu_X \oplus \bz_X) \circ \bw \right)_{i}=\bu_i \oplus \bz_i$, but $\bz_i=\bs_i$ if $\bs_{\tau(\ell)}=0$ and $\bz_i=\overline{\bs_i}$ otherwise. Since $\bs_i \sim \zo$ and is independent of $\bs_{\tau(\ell)}$, we thus have that $\bz_i \sim \zo$ as well. So $\left((\bu_X \oplus \bz_X) \circ \bw \right)_{i}$ is uniformly random, which matches the distribution of $(\bu_J \circ \by \circ w)_i$ as desired.

As a result,  \Cref{algo:JuntaTester} rejects on line {27} with 
  probability at least $1-(1-\eps/5)^M\ge 14/15$ by making the constant
  hidden in $M$ sufficiently large. 
Therefore,  \Cref{algo:JuntaTester} rejects in Phase~D with probability at least $13/15.$ This finishes the proof of the lemma.
\end{proof}

\end{document}